\newcolumntype{L}[1]{>{\raggedright\let\newline\\\arraybackslash\hspace{0pt}}m{#1}}
\newcolumntype{C}[1]{>{\centering\let\newline\\\arraybackslash\hspace{0pt}}m{#1}}
\newcolumntype{R}[1]{>{\raggedleft\let\newline\\\arraybackslash\hspace{0pt}}m{#1}}
\newtcolorbox{updatebox}{
    breakable,
    colback=gray!15,
    colframe=gray!60,
    boxrule=0.5pt,
    arc=2mm,
    left=2mm,
    right=2mm,
    top=1mm,
    bottom=1mm
}
\newtheorem{lemma}{Lemma}
\newtheorem{proposition}{Proposition}
\newtheorem{corollary}{Corollary}
\newtheorem{definition}{Definition}
\newtheorem{claim}{Claim}
\newtheorem*{claim*}{Claim}
\tiny\color{gray}, 
\renewcommand{\thesection}{\arabic{section}}
\renewcommand{\thesubsection}{\arabic{section}.\arabic{subsection}}
\renewcommand{\thefigure}{\arabic{figure}}
\renewcommand{\thetable}{\arabic{table}}
\renewcommand{\thepage}{\arabic{page}}
\def\@fnsymbol#1{\ensuremath{\ifcase#1\or \or *\or ** \or \ddagger\or
 \mathsection\or \mathparagraph\or \|\or **\or \dagger\dagger
 \or \ddagger\ddagger \else\@ctrerr\fi}}
\title{Artificial Intelligence in Team Dynamics:\\ Who Gets Replaced and Why?\\

\thanks{We thank Jose Apesteguia, Xiaoming Cai, Emiliano Catonini, Lester Chan, Kyle Chauvin, Yuxin Chen, Lin William Cong, Rahul Deb, 
Francesco Decarolis, Wouter Dessein, Anthony Dukes, Joseph Emmens, Hülya Eraslan, Miguel Espinosa, Tan Gan, Sanjeev Goyal, Dingwei Gu, Nima Haghpanah, Wei He, Lukas Hensel, Gaoji Hu, Shota Ichihashi, Wei Jiang, Navin Kartik, Jin Li, Xing Li, Zhuoran Lu, Chen Lyu, Daniel Martin, Kristina McElheran, Claudio Mezzetti, Weicheng Min, Mariann Ollar, Juan Andres Russy, Xiangyu Shi, Xianwen Shi, Junze Sun, Feng Tian, Venky Venkateswaran, Shaohui Wang, Ziwei Wang, Xi Weng, Qinggong Wu, Yiqing Xing, Huanxing Yang, Ming Yang, Yang You, Xiaobo Yu, Wanchang Zhang, Zemin (Zachary) Zhong, Jidong Zhou, Li-An Zhou in addition to audiences at the Marketing and Technology Conference at Columbia University, the Big Data, Artificial Intelligence, and Financial Economics Fall 2025 Conference at NBER, the Asian Conference on Organizational Economics at the University of Hong Kong, the International Conference of Marketing Science and Innovation at China University of Mining and Technology, the seminar at Fudan University, Johns Hopkins University, Paris Digital Economy Conference, Renmin University of China, University of International Business and Economics, CUHK-Shenzhen, MIT Futuretech Lab, Wuhan University, and VIDE Online Seminar series for their feedback about this project. Cheng thanks the National Natural Science Foundation of China (Grant No. 72192844) for support. Yildirim thanks the Wharton Dean's Research Fund and Mack Institute for their generous financial support for the research. Names are listed alphabetically. All errors are our own. All correspondence can be sent to the first author.}
}
\author{Xienan Cheng\thanks{Peking University, e-mail: xienanc@umich.edu.} \and Mustafa Dogan\thanks{University of East Anglia, e-mail: m.dogan@uea.ac.uk.} \and Pinar Yildirim\thanks{Wharton School of the University of Pennsylvania and NBER, e-mail: pyild@wharton.upenn.edu.}}
\date{}
\begin{document}

\maketitle

\begin{abstract}

This study investigates the effects of artificial intelligence (AI) adoption in organizations. We ask: First, how should a principal optimally deploy limited AI resources to replace workers in a team? Second, in a sequential workflow, which workers face the highest risk of AI replacement? Third, would the principal always prefer to fully utilize all available AI resources, or are there any benefits to keeping some slack AI capacity? Fourth, what are the effects of optimal AI deployment on the wage level and intra-team wage inequality? We develop a sequential team production model in which a principal can use peer monitoring—where each worker observes a signal of their predecessor's effort—to discipline team members. The principal may replace some workers with AI agents, whose actions are not subject to moral hazard. Our analysis yields four key results. First, the optimal AI strategy stochastically replaces workers rather than fixating on a single position. Second, the principal replaces workers at the beginning and at the end of the workflow, but does not replace the middle worker, since this worker is crucial for sustaining the flow of information obtained by peer monitoring. Third, the principal may optimally underutilize available AI capacity. Fourth, the optimal AI adoption increases average wages and reduces intra-team wage inequality.
\\

\noindent {\bf Keywords:} \em Artificial intelligence (AI), automation, technology, peer monitoring, teams, organizational structure

\noindent {\bf JEL Codes:} \em L20, M21, J31, M54, O33
\end{abstract}
\thispagestyle{empty}

\setstretch{1.4}

\newpage 
\setcounter{page}{1}

\section{Introduction}

Over the past decade, artificial intelligence (AI) has profoundly changed how organizations work and how tasks are assigned to workers. Organizations worldwide are adopting AI at unprecedented scales, owing to benefits such as increased consistency of output and reduced production costs. According to a McKinsey survey, approximately 88\% of respondents reported that their organizations utilize AI ``in at least one business function'' \citep{mckinsey_state_ai_2025}. Despite its rapid adoption and projected growth, integrating AI into existing organizational structures comes with challenges \citep{deloitte_ai_2024}. Specifically, traditional workflows are designed and optimized for human workers, typically operating in teams, working on connected tasks with linked incentives. Thus, when AI is introduced into existing workflows, its effects are unlikely to remain localized. Workers whose tasks are not directly impacted by AI are still likely to be indirectly impacted. At a time when AI is reshaping work and workflows \citep{Bughin2018, Brynjolfsson2022} and organizations are thinking about how best to integrate AI into their existing systems, it is essential to ask what the implications of AI are for teams. Yet, the literature to date largely reported effects of AI focusing on individual workers, abstracting away from team dynamics.

In this study, we focus on this area and ask the following questions: How should a principal optimally deploy limited AI resources to replace workers in a team? How does the position of a worker in the production sequence impact his risk of being replaced with AI and his earnings? Would the principal always prefer to fully utilize all available AI resources, or are there any strategic benefits to keeping some slack AI capacity? What are the effects of optimal AI deployment on the intra-team wage inequality, i.e., the wage gap between the highest- and lowest-paid workers in a team?

To address these questions, we develop a model building on the framework of \cite{winter2010transparency}. 
A principal ({\em she}) manages a team of workers ({\em he}), each performing a single task for a project. 
The tasks are equally critical to the project's success and complementary to each other. 
Workers act in sequence, deciding whether to exert effort or shirk, and these choices determine the likelihood of project success. 
An AI agent ({\em it}) is available to the principal to replace a worker and perform the associated task, and we study how to deploy it optimally.
We assume effort costs are the same for AI agent and human workers, so neither resource is inherently superior to the other.  However, unlike its human counterparts, AI's actions are not subject to moral hazard, and if deployed, it always exerts effort.\footnote{The assumption that AI is not subject to moral hazard is a commonly expressed view. \cite{FGJV2025} write ``[...] the nature of the `moral hazard' differs significantly between human and AI agents.
Unlike humans, AI does not shirk in effort, become sloppy when bored or fatigued, or pursue personal perks such as corporate jets, nepotism, or empire building (the recurring themes in [...] research involving agency problems). Instead, AI rigorously optimises programmed objectives, such as profit maximisation or trading
efficiency. Nevertheless, dutiful AI brings [...] new (related) dimensions of agency
problems'' (p.115).  
Other studies also built on the moral hazard differences between human and technology resources \citep[e.g,][]{dogan2022managing, dogan2024strategic}.}

Against this backdrop, the informational environment faced by workers is such that, 
prior to making their effort decisions, workers can observe a perfect signal about their predecessor's effort but cannot tell whether it is a human or AI. 
The principal's objective is to induce all workers to exert effort at minimal compensation cost. 
As shown in \cite{winter2010transparency}, in an all-human team, the optimal compensation scheme leverages peer monitoring by incentivizing workers to shirk if their predecessor shirks, creating a ``domino effect'' that propagates shirking to all subsequent workers and thereby effectively deters shirking. 
Introducing AI into teamwork may interrupt this domino effect and alter the structure of  wages, creating trade-offs for the principal.

Specifically, there are three trade-offs that a manager needs to consider when deciding {\em whether} to replace a human worker. The first and most intuitive trade-off is the {\em direct cost savings} achieved when replacing a high-wage worker with an AI agent, fixing everything else. 
However, this benefit may be offset by other counter effects. 
A second effect, a {\em direct incentive cost}, arises when the likelihood of a worker being replaced with AI increases and the worker is not replaced. In this case, the worker infers that workers who succeed him are more likely to be replaced by AI and his temptation to shirk is higher, requiring a higher compensation to sustain his effort. 
The third tradeoff, an {\em indirect incentive cost}, occurs because a higher replacement probability for a worker also weakens the incentives of those who precede him, making their effort costlier to enforce. These effects manifest  differently for team members based on their positions in the production sequence. 

Considering the interplay of these three effects, our study yields four key insights. 
First, the principal prefers a mixed strategy in which she {\em randomizes} the deployment of AI to replace human workers.
Put differently, teams with randomized AI replacement are preferred over deterministic team compositions where some workers are replaced with certainty. 
Such randomization can be implemented in real life by probabilistically replacing workers with AI across different projects or shifts. So, when deployed optimally, AI affects human workers at the intensive margin, by altering the intensity or frequency of their work, rather than at the extensive margin, or by displacing them \citep{Jiang2025}.
This finding tempers some widespread concerns regarding labor displacement effects of AI \citep[e.g.,][]{oecd2023employment, wsj_ai_jobs_2025}.

Second, we address which team members face the highest and the lowest risk of AI replacement. For clarity, we focus on a three-worker team, where members are differentiated by position as {\em front-most}, {\em middle}, and {\em end-most}. We find that, in the optimal replacement strategy, the middle worker faces the lowest (zero) risk of replacement, so that the information flow among the members of the team can be maintained. By contrast, both the front-most and end-most workers face a positive risk of replacement, with the end-most worker being most at risk.

Third, the optimal strategy may leave some AI capacity unused. The principal may benefit strategically from keeping some slack AI resources, which generates an additional layer of uncertainty---not only about which workers are replaced, but also about whether any replacement occurs at all. This finding implies that in some cases, a manager will choose to maintain an all-human team, despite having unused AI capacity at her disposal to create a human-AI team.

Fourth, we examine how optimal AI adoption impacts workers' wages and how the impact varies by position within the team members. Without AI adoption, the wages of workers increase monotonically along the production sequence, with the end-most worker earning the highest wage. Our results show that optimal AI deployment maintains this wage hierarchy, but reduces the wage gap between team members. Specifically, the front-most and middle workers see their wages increase due to AI adoption, whereas the wage of the end-most worker (who earns the highest wage) remains unchanged. This leads to decreased intra-team wage inequality, aligning with \cite{bessen2025happens}, who find minimal wage scarring and even wage gains after the introduction of automation into an organization.

In Section \ref{sec: extensions}, we introduce several extensions to the main model. Section \ref{sec: complementarity} focuses on a critical characteristic of team production: synergy among workers \citep{che2001optimal}. We proxy synergies by the degree of complementarity between the tasks workers carry out. We propose a specific production function that parameterizes this complementarity and find that greater task complementarity enhances the principal's ability to leverage peer monitoring, reducing the wage incentives necessary to deter shirking. However, higher complementarity also implies that a worker's shirking is more detrimental, and this implies the risk of replacing the front-most worker increases and that of the end-most worker decreases.  We find that AI's potential to mitigate wage inequality is more pronounced in teams with low task complementarity.
In Section \ref{sec:TaskBased}, we focus on task-based automation \citep{acemoglu2018modeling} and introduce an alternate model where AI substitutes for a fraction of a worker's tasks, rather than substituting the worker. We find that replacing a fraction of the tasks of a worker is never optimal, and the principal prefers to replace either all or none of the tasks of a worker. Moreover, as in the main model, only the front-most and the end-most workers face risks of task replacement. 
In Section \ref{sec: network}, we examine how the optimal deployment of AI varies with the underlying network structure of the team. We compare the baseline chain network---whose optimality for human teams is established in \citet{winter2010transparency}---to a star configuration. In the star network, the central worker---the worker who facilitates the team's information flow---again faces the lowest (i.e., zero) risk of AI replacement, while that for the peripheral workers may be positive. These findings highlight how the network structure among the team members may shape the principal's AI adoption strategy. 
Section \ref{sec: strategicAI} extends our model to consider the possibility that, similar to humans, AI can be programmed to shirk. We find that, if a principal could choose, she would choose to adopt an AI with reduced efficiency: one that can shirk. 
Finally, in Section \ref{sec: AdditionalDiscussions}, we discuss the implications of relaxing principal's AI capacity constraint; allowing workers to have partial or full observability by allowing them to tell if the task of their predecessor is completed by a human or AI; and introducing heterogeneity among workers to differentiate their contributions to the project.

For practitioners considering the deployment of AI technologies and their impact on the workforce, we highlight several key strategic considerations. First, short-sighted AI replacement strategies, particularly those primarily driven by cost-cutting motives such as replacing the highest-paid workers, can backfire. A focus on direct costs of replacing workers alone overlooks the {\em indirect costs} AI introduces in the form of moral hazard. Our findings suggest that, in some cases, it may be optimal to replace a lower-paid worker if doing so minimizes the combined direct and indirect costs.
Second, we find that a stochastic AI-human teaming strategy, where work is assigned to AI on a randomized basis, can outperform deterministic assignment strategies. This implies that retaining workers while reallocating their work to AI in a stochastic manner may be more effective than full replacement of the workers. Such strategies not only preserve human capital, but may enhance productivity within teams.
Third, managers should be prepared to adjust wages for all team members, including those who are not directly impacted by AI.  The wages of all workers in a team can be impacted by AI adoption, even when a worker himself is not substituted by AI. The principal may find herself having to pay a higher wage for the remaining workers after some team members are substituted with AI, despite benefiting from some savings in the form of foregone wages. This suggests that it is imperative for managers to understand that even when AI adoption is localized (i.e., replacing a small set of workers or tasks), its spillover effects can permeate the broader team. 

Our research contributes to the literature on the role of peer monitoring in incentivizing effort within teams \citep[e.g.,][]{che2001optimal, gibbons2013chapter, villas-boas2020repeated, upton2024implicit}, as well as the literature modeling moral hazard in production settings \citep[e.g.,][]{sun2018optimal, tian2023dynamic}. We extend this body of work by investigating how the introduction of technology in general, and AI more specifically, reshapes the information structure within teams, with an emphasis on the risk of undermining peer monitoring.  We demonstrate that AI can alter the incentives and the compensation of workers. In this sense, our work also complements the recent studies that endogenize information flow within teams when designing incentive structures \citep[e.g.,][]{zhou2016economics, au2021matching, LU2025105969}.

This study is also relevant to the literature studying AI effects on worker and firm productivity. A growing number of studies experimentally estimate the effect of AI adoption on worker and business productivity \citep{dell2023super, dell2023navigating, otis2024uneven, brynjolfsson2025generative, cui2026effects, kim2026mapping}, with a particular focus on the adoption of generative AI (GenAI) tools. Overall, these studies demonstrate heterogeneous effects of AI adoption on productivity. For instance, while novice workers typically show a higher improvement in productivity compared to experienced workers  \citep{brynjolfsson2025generative}, these effects may be reversed for complex tasks \citep{dell2023navigating} or for businesses with unique needs \citep{otis2024uneven}. Workers may also strategically shift their focus to tasks that cannot be carried out by AI \citep{yiu2025strategic}. 
What is common between these studies and ours is the focus on productivity. Similar to these studies, we recognize that AI---either as a substitute or a complementary production technology---may alter a worker's effort decision. We provide a theoretical lens to explain the mixed findings from the earlier studies. In particular, our study shows two factors which may play a role: (i) the effects of AI do not remain localized---adopting AI to replace a worker may generate negative externalities on other workers; (ii) AI may be deployed sub-optimally---e.g., AI may be used indiscriminately to replace (the tasks of) the wrong workers.  Our paper highlights that, given AI's negative externalities, studies collecting data from environments where AI is deployed suboptimally may falsely conclude that AI fails to improve worker or organizational productivity.

Another stream of the literature focuses on estimating the macro effects of technology adoption, such as effects on employment and wages. A substantial body of research suggests that effects of automation on employment and wages have been nonuniform, where lower-skilled and less-educated workers have been disproportionately impacted \citep{acemoglu2019automation, acemoglu2020robots, petrova2024automation}. In contrast, the documented displacement stemming from AI thus far has been more specific to occupations and tasks \citep{Acemoglu2022ai} and impacts skilled and educated workers as well \citep{BrynjolfssonChandarChen2025_Canaries}. On wages, the evidence is similarly mixed: while some studies report declines due to automation \citep{acemoglu2020robots, petrova2024automation}, others find little to no wage scarring—or even wage gains—associated with automation and AI adoption \citep{barth2020robots, domini2022whom, bessen2025happens}. For generative AI, specifically, recent work by \cite{humlum2025large} finds no significant changes in wages and employment.

Overall, our study bridges two important strands of research: the literature on the optimal design of incentives in teams and the emerging literature on the organizational implications of AI adoption \citep[e.g.,][]{agrawal2024artificial, bastani2025human, ide2025artificial, ferreira2025artificial, demirer2026chaining, fahn2026toward}. Though most firms operate in team-based structures, much of the existing research on AI abstracts away from explicit consideration of incentives and interactions between members in teams.\footnote{A few exceptions include \cite{athey2020allocation}, \citet{dogan2022managing}, and \citet{dogan2024strategic} for automation and \cite{bastani2025human} and \citet{demirer2026chaining} on AI, which begin to explore how technological change affects organizational design and worker coordination.}  Recent experimental evidence further underscores the importance of the team perspective.
For example, \cite{dell2023super} find that AI integration can interfere with a team's ability to coordinate, and thus can decrease individual effort and overall team performance.  
These findings highlight the need for a more nuanced understanding of human-AI interactions within teams, explicitly considering the externalities that arise when some workers are replaced by AI on the workers who are not replaced. Our study addresses this gap by developing a formal model that captures both the benefits and costs of AI integration in team settings.

The rest of the paper is organized as follows. In Section~\ref{sec: model}, we set up the model, and in Section~\ref{sec: analysis} we lay out the direct and indirect costs and benefits of AI adoption and describe the optimal AI adoption strategy. We offer several extensions in Section \ref{sec: extensions}, and we provide a discussion of additional considerations and limitations. 
Finally, in Section~\ref{sec: conclusion}, we conclude with a brief discussion of our key findings and takeaways for practice. 

\section{Model} \label{sec: model}

A project is undertaken by a team of $n \geq 3$ workers.\footnote{Throughout the paper, we will use the terms team and network interchangeably.} Each worker (he) $i \in \{1, \ldots, n\}$ performs a single task. The manager of the organization (the principal, {\em she}) seeks to incentivize each worker to exert effort while minimizing her compensation cost. Below we describe first the production environment without AI, and then introduce the principal's option to replace a worker with an AI agent.

\paragraph{Production, information, and contracts.} Each worker $i$ makes a binary effort decision $e_i \in \{0,1\}$, where $e_i = 1$ indicates that he exerts effort and $e_i = 0$ indicates that he shirks. The cost of effort for worker $i$ is $c(e_i) = c e_i$, with $c > 0$. Workers $1, \ldots, n$ act in sequence.

Before making his own effort decision, each worker receives a signal about the contribution of his immediate predecessor. We model this signal as perfectly informative, so that each worker effectively observes whether his immediate predecessor exerted effort.\footnote{A natural interpretation is that a worker observes whether the task assigned to his immediate predecessor has been completed, since under our assumptions a task is completed if and only if the predecessor exerts effort.}\footnote{While we adopt this sequential {\em chain} structure as given, \cite{winter2010transparency} demonstrates that such a structure naturally emerges among the considered alternatives as the optimal arrangement. In Section \ref{sec: network}, we will revisit this assumption.} Worker $i$'s strategy $\sigma_i$ maps the information available to him into an effort choice:
\begin{align*}
\sigma_i &: \{0,1\} \to \{0,1\}, \text{ for each } i \in \{2,\ldots,n\}, \\
\sigma_1 &\in \{0,1\}, \text{ for worker } 1, \text{ who has no predecessor.}
\end{align*}

The project's success depends on the number of workers exerting effort: if $k$ out of $n$ workers exert effort, the project succeeds with probability $p_k$, where $p_k$ is strictly increasing in $k$. Workers' efforts are complementary, so that the marginal contribution of an additional worker's effort to the success probability rises with the number already exerting effort. Formally, $p_{k+2} - p_{k+1} > p_{k+1} - p_k$ for each $k \leq n-2$.

The principal observes only the outcome of the project, neither the workers' individual effort choices nor the signals they receive. She offers a contract $w \equiv (w_1, \ldots, w_n)$, where $w_i$ specifies the payment to worker $i$ if the project succeeds. Workers are risk-neutral and maximize their expected compensation net of the cost of effort.\footnote{Because workers have limited liability, the optimal contract grants each worker a positive payment if the project succeeds and no payment if it fails.}

\paragraph{AI replacement.} The principal has access to an AI agent that can carry out the task of a worker. The AI always exerts effort, and its effort is just as effective as a worker's in helping the project succeed. The AI's cost of effort is also identical to that of a human, $c$, and is incurred by the principal.\footnote{We purposely keep the effort costs of the AI and the worker identical. This allows us to focus on moral hazard and peer monitoring as drivers of technology adoption, rather than cost reduction.}

The principal decides (i) whether to replace a worker with AI, and if so, (ii) which of the $n$ workers to replace. We allow her to use a mixed strategy for both decisions. Her {\em replacement strategy} is
\begin{align*}
x \equiv (x_1, x_2, \ldots, x_n), \quad \bar{x} \equiv \sum_{i=1}^n x_i \leq 1,
\end{align*}
where $x_i$ is the probability of replacing worker $i$ with AI, and $\bar{x}$ is the aggregate probability of deploying AI. The constraint $\bar{x} \leq 1$ treats AI as a limited resource, capturing the financial constraints managers face in reality, which force them to prioritize which worker to replace. As we will show, the principal may optimally set $\bar{x} < 1$ and leave some AI capacity unused.

The principal commits to her replacement strategy $x$ publicly. 
Workers, however, do not observe the realized outcome of $x$: they cannot tell whether any (or which) workers have been replaced. The signal each worker receives about his immediate predecessor reflects the predecessor's contribution to the project, but is silent about whether that contribution originated from a human or from an AI, since the two are equally productive. A worker who receives a signal indicating no contribution can therefore infer that his immediate predecessor is human and has shirked.

\paragraph{Timing.} First, the principal commits to a replacement strategy and a compensation scheme $(x, w)$. Next, workers act in sequence: each observes $(x, w)$ and, where applicable, the signal about his immediate predecessor's effort, and decides whether to exert effort. The project outcome is then realized, and workers are paid according to $w$.

\subsection{Discussion of Model Setup}

The introduction of AI into the production environment requires several modeling choices that warrant discussion. We clarify the reasoning behind each below.

\paragraph{AI does not shirk.} The key distinction between AI and human workers in our setting lies in the risk of moral hazard: humans may shirk, whereas AI never does. We intentionally abstract from any other differences between the two production sources. The absence of shirking is therefore a feature of the model rather than a limitation, and empirical evidence supports this assumption: scholars recognize AI's consistent effort as an important factor in adoption decisions \citep{FGJV2025}. One may nonetheless ask whether AI could be programmed to condition its action on that of its predecessor, much as humans do. We explore this possibility formally in Section \ref{sec: strategicAI}.

\paragraph{Full versus partial replacement of worker tasks.} \ 
Each worker in our model carries out a single, indivisible task, so deploying AI in place of a worker amounts to fully replacing him. In some applications, however, AI may only partially replace a worker by handling specific components of his task. We consider such a partial replacement model in Section \ref{sec:TaskBased}.

\paragraph{Limited capacity for AI replacement.} We assume the principal has limited AI resources, requiring her to prioritize which workers to replace. This assumption can be micro-founded by utilization costs for AI. In practice, such costs are a major obstacle to scaling: executives frequently cite expenses for computing and cloud infrastructure as barriers to expanding AI capacity \citep{ibm_The_hidden_costs}. A recent survey reports that every executive interviewed had canceled or postponed at least one GenAI initiative due to compute costs, with 15\% of projects on hold and 21\% failing to scale for this reason \citep{ibm_ai_economics_compute_cost}. Managers thus face real trade-offs in deploying limited AI capacity. We relax this assumption in Section \ref{sec:LimitedCapacity}; in addition, the task-based extension in Section \ref{sec:TaskBased} accommodates more flexible capacity constraints.

\paragraph{Imperfect observability.} 
Our model assumes that the signal a worker receives about his predecessor's contribution is silent about whether that contribution came from a human or from an AI. This is the relevant case in environments where tasks are temporally separated or workers are not physically proximate, and where the output of a completed task passes a Turing test in the sense that it cannot be reliably attributed to either source. In Section \ref{sec:ImperfectObservability}, we explore the alternative case where workers can observe whether their coworkers are AI or human. The analysis shows that this scenario yields a weaker outcome for the principal compared to our baseline, suggesting that concealing whether a task is performed by a human or AI benefits her.   

\section{Analysis} \label{sec: analysis}
Our objective is to characterize the optimal AI replacement strategy and compensation scheme as defined below.

\begin{definition}
A replacement strategy and compensation scheme pair $(x, w)$ is optimal if
\begin{itemize}
 \item[(i)] it induces an equilibrium in which all workers exert effort, i.e., $(e_1, \ldots, e_n) = (1, \ldots, 1)$ on the equilibrium path;
 \item[(ii)] among all pairs inducing such an equilibrium, it minimizes the expected cost for the principal:
\begin{equation}
W_{x,w}\equiv \sum\limits_{i=1}^n \left[ x_i c + (1-x_i)p_n w_i\right]. 
\label{Wxw}
\end{equation}
\end{itemize} 
\end{definition}

\noindent Two factors shape the expected compensation cost: the principal bears the AI's cost $c$ whenever AI is deployed, regardless of the project outcome; and each worker who is not replaced is compensated only upon the project's success, which occurs with probability $p_n$ when all workers exert effort.

\paragraph{Optimal compensation for a fixed replacement strategy.} We begin by fixing the replacement strategy $x$ and characterizing the optimal compensation scheme conditional on $x$, which we denote by $w^x$. The following proposition characterizes $w^x$ and shows that it induces a trigger strategy profile $\sigma^* = (\sigma^*_1, \ldots, \sigma^*_n)$ as an equilibrium, where each worker exerts effort as long as his signal indicates that his immediate predecessor exerted effort, and shirks otherwise. Formally,
\[
\sigma_1^* = 1, \quad 
\sigma_i^*(e_{i-1}) = 
\begin{cases}
1 & \text{if } e_{i-1} = 1, \\ 
0 & \text{if } e_{i-1} = 0,
\end{cases}
\quad \text{for all } i \in \{2, \ldots, n\}.
\]

\medskip 
\begin{proposition}\label{prop: compensation}
Fix a replacement strategy $x = (x_1, \ldots, x_n)$. The optimal compensation scheme conditional on $x$, denoted by $w^x = (w_1^x, \ldots, w_n^x)$, satisfies the following for each worker $i \in \{1, \ldots, n\}$ with $x_i<1$:
\begin{align}
w^x_i = \frac{c}{ p_n -\zeta_i^x}, \label{eq: w}
\end{align}
where $\zeta_i^x$ is the resulting success rate when worker $i$ shirks: 
\begin{align} \zeta_i^x= p_{i-1} \sum\limits_{k=1}^{i-1} \frac{x_k}{1 - x_i} \, + \, \sum\limits_{k=i+1}^n \frac{x_k}{1 - x_i} p_{n-k+i} \, + \, p_{i-1} \frac{1-\bar{x}}{1-x_i}.
\label{eq: Ai} 
\end{align} 
\end{proposition}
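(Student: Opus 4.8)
The plan is to fix the trigger profile $\sigma^*$ as the target equilibrium and pin down each wage $w_i^x$ (for the relevant workers, those with $x_i<1$) from the single binding incentive constraint of worker $i$ on the equilibrium path, then verify that the resulting scheme both supports $\sigma^*$ as an equilibrium and is the cheapest scheme implementing full effort. On the equilibrium path every task is carried out with effort (by a human following $\sigma^*$ or by the AI), so the project succeeds with probability $p_n$ and worker $i$'s on-path payoff from exerting effort is $p_n w_i - c$. If instead worker $i$ shirks while all later workers adhere to $\sigma^*$, he saves $c$ but the project now succeeds only with probability $\zeta_i^x$, yielding the payoff $\zeta_i^x w_i$. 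Worker $i$ is willing to exert effort iff $p_n w_i - c \ge \zeta_i^x w_i$, i.e. $w_i \ge c/(p_n-\zeta_i^x)$; minimizing the wage bill sets this with equality, which is exactly \eqref{eq: w}.

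The crux is therefore to compute $\zeta_i^x$, the success probability after worker $i$ shirks. Since worker $i$ must be human to shirk, I would condition on the event $\{\text{worker } i \text{ not replaced}\}$, of probability $1-x_i$. Because $\bar x \le 1$, in any realization at most one worker is replaced, so I would partition the conditional probability space into three mutually exclusive events and trace the induced domino of shirking in each. (i) If the replaced worker is some $k<i$, or if no one is replaced (conditional weights $x_k/(1-x_i)$ and $(1-\bar x)/(1-x_i)$), then workers $1,\dots,i-1$ all exert effort, worker $i$'s shirking triggers shirking by every human downstream, and exactly $i-1$ tasks are done with effort, giving $p_{i-1}$. (ii) If the replaced worker is some $k>i$ (weight $x_k/(1-x_i)$), the domino of shirking runs from $i+1$ through $k-1$, the AI at position $k$ restarts effort, and workers $k+1,\dots,n$ resume effort under $\sigma^*$; counting $(i-1)+1+(n-k)=n-k+i$ effortful tasks yields $p_{n-k+i}$. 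Taking the conditional expectation over these events reproduces the three sums in \eqref{eq: Ai}.

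It remains to show that $w^x$ implements $\sigma^*$ and is optimal. For the former I must check off-path sequential rationality: after observing a shirking predecessor, worker $i$ should prefer to shirk. The key observation is that worker $i$'s effort decision controls the same downstream chain on and off path, so the ``exert'' and ``shirk'' effort-counts differ by the same fixed gap in both cases, while both counts are strictly lower off path because fewer predecessors contributed. Convexity of $(p_k)$—the complementarity assumption $p_{k+2}-p_{k+1}>p_{k+1}-p_k$—then makes the marginal value of effort strictly smaller off path than on path, for every predecessor-count consistent with the observed shirk; hence the wage calibrated to on-path indifference leaves a strict incentive to shirk off path, and $\sigma^*$ is an equilibrium. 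For optimality, I would note that $\zeta_i^x$ is the smallest attainable post-deviation success probability, since having every downstream human shirk minimizes the number of effortful tasks; thus in any full-effort equilibrium worker $i$'s shirking payoff is at least $\zeta_i^x w_i$, forcing $w_i \ge c/(p_n-\zeta_i^x)$, so no implementing scheme is cheaper than $w^x$.

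I expect the bookkeeping in case (ii)—correctly tracking how the single AI insertion severs and then restarts the domino, and verifying the count $n-k+i$—to be the main obstacle, together with making the off-path convexity comparison fully rigorous across all histories and AI realizations that can generate an observed shirk (in particular, confirming the comparison holds for every possible number of shirking predecessors, so that it is robust to worker $i$'s beliefs). The on-path wage equation itself is then immediate.
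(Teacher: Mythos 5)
Your architecture is the same as the paper's: on-path indifference pins down $w_i^x = c/(p_n-\zeta_i^x)$; the computation of $\zeta_i^x$ by conditioning on which worker (if any) is replaced, with the count $n-k+i$ in the successor case, is exactly the paper's equation \eqref{eq: Ai}; and your final minimality argument---that $\zeta_i^x$ is the smallest attainable post-deviation success rate, so any joint-effort equilibrium forces $w_i \ge c/(p_n-\zeta_i^x)$---is precisely the paper's Steps 2 and 3. The genuine gap is in the off-path verification, and it is not the ``bookkeeping'' you anticipate. Your argument compares the marginal value of effort on path versus off path realization by realization and invokes convexity of $(p_k)$. That per-realization comparison is correct (same gap in effort counts, both counts shifted down by one, increments increasing), but it does not aggregate, because the conditional distributions over replacement realizations differ at the two information sets. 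Off path, observing $e_{i-1}=0$ reveals that worker $i-1$ is human, so the event ``worker $i-1$ is replaced''---which carries weight $\tfrac{x_{i-1}}{1-x_i}$ on path---has no off-path counterpart, and every surviving event's weight is renormalized upward, from $\tfrac{x_k}{1-x_i}$ to $\tfrac{x_k}{1-x_i-x_{i-1}}$. Since the off-path mixture thus puts strictly larger weight on each of its (smaller) difference terms, termwise domination alone cannot deliver the required inequality $\zeta_i^x-\hat\zeta_i^x \le p_n-p_{n-1}$, where $\hat\zeta_i^x$ denotes the off-path post-shirking success rate; some additional step must absorb the weight mismatch. (Also, the robustness-to-beliefs issue you flag is a red herring: the paper notes the argument is insensitive to worker $i$'s beliefs about his predecessors. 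The real difficulty is the measure mismatch just described.)

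The paper closes this by explicit computation: it expands $\zeta_i^x-\hat\zeta_i^x$, isolates a negative cross term proportional to $x_{i-1}x_k\left(p_{i-1}-p_{n-1-k+i}\right)$, and bounds the remainder by a sub-convex combination of successive increments $p_{j+1}-p_j$, each of which is at most $p_n-p_{n-1}$ by the complementarity assumption. If you prefer to preserve your coupling flavor, there is a clean fix: write the on-path conditional measure as a mixture of the event ``worker $i-1$ replaced'' (weight $\tfrac{x_{i-1}}{1-x_i}$) and the off-path conditional measure (weight $\tfrac{1-x_i-x_{i-1}}{1-x_i}$); on the common part apply your per-realization comparison to get $\sum_e w_e^{\mathrm{off}} D_e^{\mathrm{on}} \ge p_{n-1}-\hat\zeta_i^x$, and on the extra event use $p_n-p_{i-1} > p_{n-1}-p_{i-2} \ge p_{n-1}-\hat\zeta_i^x$ (the first inequality by convexity, the second because $\hat\zeta_i^x \ge p_{i-2}$); combining gives $p_n-\zeta_i^x \ge p_{n-1}-\hat\zeta_i^x$, which is the off-path shirking condition. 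Either route completes the proof; without one of them, your off-path sequential-rationality step is asserted rather than proved.
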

\medskip 

Proposition~\ref{prop: compensation} characterizes the optimal compensation for each worker who is not fully replaced ($x_i<1$); if he is fully replaced ($x_i=1$), his compensation is irrelevant. We denote by $W_x$ the principal's expected compensation cost under the {\em optimal} scheme $w^x$, obtained by substituting $w = w^x$ into \eqref{Wxw}:
\begin{align}
W_x \equiv W_{x, w^x}=\sum\limits_{i=1}^n \left[ x_i c + (1-x_i)p_n w_i^x\right]. \label{eqn:expectedcompensationcost}  
\end{align}

The proof, given in the appendix, proceeds in three steps: $w^x$ supports the trigger strategy profile $\sigma^*$ as an equilibrium; no cheaper scheme can do so; and no alternative strategy profile that achieves joint effort can be supported at a lower cost.

In the trigger strategy equilibrium induced by $w^x$, each worker is indifferent between exerting effort and shirking when his signal indicates that his immediate predecessor exerted effort. Effort complementarity then makes shirking the optimal response when the signal indicates that the predecessor shirked. This generates a \emph{domino effect}: following a deviation, every subsequent human worker also shirks, and shirking propagates down the chain until it reaches an AI agent, if any. Because the AI exerts effort unconditionally, the human worker placed immediately after an AI receives a signal of contribution and reverts to effort, so the cascade stops at that position. The domino is the strongest possible deterrent to deviations and allows the principal to fully leverage peer monitoring; replacing a worker---and especially a middle worker---interrupts it and alters the incentives and compensation of the remaining workers. The principal accounts for this disruption when choosing her replacement strategy, as we shall see shortly.

Worker $i$'s indifference condition on the equilibrium path, which pins down $w_i^x$, is:
\[
p_n w_i^x - c = \zeta_i^x w^x_i,
\]
where $\zeta_i^x$ is the success rate from worker $i$'s perspective when he shirks, given that all other workers follow the trigger strategy. The payment worker $i$ receives upon project success (if not replaced) is therefore $w_i^x = c/(p_n - \zeta_i^x)$. A higher $\zeta_i^x$ makes shirking less consequential for success and thus more tempting, so the principal must offer a larger payment to offset this temptation.

Since $\zeta_i^x$ depends on worker $i$'s position in the production sequence, so does his compensation. In the absence of AI, as noted by \cite{winter2010transparency}, compensation rises monotonically along the chain, with the end-most worker paid the most. The reason is that under the trigger strategy, the end-most worker's shirking results in the smallest decline in the success rate, since it does not induce any subsequent shirking.

To understand how a worker's incentive to shirk depends on the AI strategy, we examine $\zeta_i^x$ and its components more closely. As shown in equation~\eqref{eq: Ai}, $\zeta_i^x$ is a weighted sum of three success probabilities, each corresponding to a different replacement outcome when worker $i$ deviates:
\begin{itemize}
\item If a predecessor of worker $i$ is replaced, all of $i$'s successors shirk, and the project succeeds with probability $p_{i-1}$.
\item If a successor of worker $i$, say worker $k$, is replaced, worker $i$ and his successors up to $k-1$ shirk, while the AI at position $k$ and all later workers exert effort, yielding success probability $p_{n-k+i}$.
\item If no worker is replaced, all of $i$'s successors shirk, resulting in success probability $p_{i-1}$.
\end{itemize}
The weights reflect worker $i$'s belief about replacement outcomes conditional on not being replaced himself: probability $\frac{x_k}{1 - x_i}$ on each $k \ne i$ being replaced, and probability $\frac{1 - \bar{x}}{1 - x_i}$ on no replacement at all.

Rewriting equation~\eqref{eq: Ai} highlights its dependence on the replacement strategy:
\begin{align}
\zeta_i^x = p_{i-1} + \sum\limits_{k=i+1}^n \frac{x_k}{1 - x_i} \left(p_{n-k+i} - p_{i-1}\right). \label{eq: zeta}
\end{align}
For any worker $i < n$, $\zeta_i^x$ increases with (i) the likelihood of replacing worker $i$ himself and (ii) the likelihood of replacing any of his successors. For the end-most worker, $\zeta_n^x$ is independent of the replacement strategy $x$, since he has no successors whose actions could be redirected by his shirking.

\paragraph{Tradeoffs of replacement.} We analyze the tradeoffs of replacing a worker with AI by examining the partial derivative of the principal's expected cost with respect to $x_i$:
\begin{align}
\frac{\partial W_{x}}{\partial x_i} = -\underbrace{(p_n w_i^x -c)}_{\substack{\text{direct}\\ \text{cost saving}}} + \underbrace{(1-x_i)p_n\frac{\partial w_i^x}{ \partial x_i}}_{\substack{\text{direct}\\\text{ incentive cost}}} + \underbrace{\sum\limits_{k=1}^{i-1}(1-x_k)p_n\frac{\partial w_k^x}{\partial x_i}}_{\substack{\text{indirect}\\\text{ incentive cost}}}.   \label{eq: dWdx}
\end{align}
Equation~\eqref{eq: dWdx} shows that the principal balances three distinct effects when making the optimal replacement decision.

The first effect is the {\em direct cost saving}. The expected compensation for worker $i$ is $p_n w_i^x$, while the cost of AI is $c$. Since $p_n w_i^x \geq c$, replacing a worker reduces the principal's expected compensation cost. This benefit is largest for the end-most worker, who commands the highest wage.

The second effect is the {\em direct incentive cost}. As noted following equation~\eqref{eq: zeta}, increasing the likelihood of replacing worker $i$ amplifies his incentive to shirk in the event he is not replaced. The reason is that a higher $x_i$ raises his posterior belief that some other worker has been replaced; conditional on not being replaced himself, his own shirking thus appears less consequential, and the principal must offer higher compensation to preserve his effort.

The third effect is the {\em indirect incentive cost}, which arises through changes in the compensation paid to other workers. Raising the probability of replacing worker $i$ weakens the incentives of his predecessors as well, since shirking becomes less consequential for them. The magnitude of this cost depends on worker $i$'s position: the closer he is to the end of the chain, the more predecessors are affected, and the larger the number of wage adjustments.

The principal needs to balance these three effects when designing the optimal replacement strategy. For instance, replacing the end-most (and highest-paid) worker yields the largest direct cost saving, but also generates the greatest indirect incentive cost, since all predecessors' incentives are weakened. At the other extreme, replacing the front-most worker creates no indirect incentive cost, as he has no predecessors, but the direct cost saving is smaller because his compensation is relatively low. The direct incentive cost is zero for the end-most worker, while for other workers it depends on both their position and the AI replacement strategy. In what follows, we analyze how these trade-offs between direct cost savings, direct incentive costs, and indirect incentive costs shape the principal's optimal replacement policy.

 \subsection{Optimal Replacement Strategy}

In this section, we characterize the optimal strategy for replacing workers with AI, denoted by $x^*$. As an initial step, the following proposition and the subsequent discussion highlight a key property of this strategy: randomization. We also explore the trade-offs that the principal faces while deciding her AI strategy.

\medskip
\begin{proposition}\label{prop: randomization}
The principal's optimal AI adoption strategy necessarily involves randomization: 
\begin{itemize}
\item no worker is replaced with certainty, i.e., $x_i^* < 1$ for all $i$; and
\item at least one worker is replaced with positive probability, i.e., $x_i^* > 0$ for some $i$.
\end{itemize}
\end{proposition}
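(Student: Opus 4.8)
The plan is to prove the two claims separately, each by a contradiction/perturbation argument that exploits the cost decomposition in equation~\eqref{eq: dWdx} and the structure of $\zeta_i^x$ in equation~\eqref{eq: zeta}.

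For the first claim, that $x_i^* < 1$ for all $i$, I would argue by contradiction: suppose some worker $j$ is replaced with certainty, $x_j^* = 1$. Since $\bar{x} \le 1$, this forces $x_k^* = 0$ for all $k \ne j$, so the candidate strategy deterministically replaces exactly worker $j$ and no one else. The idea is then to compare this deterministic strategy against a nearby randomized one that lowers $x_j$ slightly to $1 - \varepsilon$ and reallocates the freed mass elsewhere (or simply leaves it as slack). The key observation is that as $x_j \to 1$, the compensation $w_j^x$ of worker $j$ becomes irrelevant (he is replaced for sure), but the compensations of his predecessors blow up: from equation~\eqref{eq: zeta}, each predecessor $i < j$ has $\zeta_i^x$ increasing in $x_j$, and as $x_j \to 1$ the weight $\frac{x_j}{1-x_i}$ on the high-success term $p_{n-j+i}$ is maximal, pushing $w_i^x = \frac{c}{p_n - \zeta_i^x}$ to its largest value. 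Pulling $x_j$ back from $1$ strictly relaxes every predecessor's incentive constraint, reducing the indirect incentive cost, while the direct cost saving term is bounded. I expect to show that the marginal cost $\frac{\partial W_x}{\partial x_j}$ is strictly positive as $x_j \uparrow 1$, so that reducing $x_j$ strictly lowers total cost---contradicting optimality.

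For the second claim, that $x_i^* > 0$ for some $i$, I would again argue by contradiction: suppose $x^* = (0, \ldots, 0)$, the all-human team. The strategy here is to exhibit a profitable deviation by introducing a small replacement probability on the \emph{end-most} worker, $x_n = \varepsilon$. The end-most worker is the natural candidate because, from the discussion following equation~\eqref{eq: zeta}, $\zeta_n^x$ is independent of $x$, so replacing worker $n$ carries \emph{zero direct incentive cost}; and because he has no successors, introducing $x_n > 0$ does not appear inside any $\zeta_i^x$ for $i < n$ in the successor-sum of equation~\eqref{eq: zeta}, so the \emph{indirect incentive cost} through the mechanism in~\eqref{eq: dWdx} also vanishes for this particular perturbation. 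What remains is the direct cost saving term $-(p_n w_n^x - c)$, which is strictly negative since $p_n w_n^x > c$ (the end-most worker earns strictly more in expectation than the AI cost $c$). Hence $\frac{\partial W_x}{\partial x_n}\big|_{x=0} = -(p_n w_n^0 - c) < 0$, so a marginal increase in $x_n$ strictly reduces cost, contradicting optimality of the all-human team.

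The main obstacle I anticipate is the first claim, specifically handling the boundary behavior at $x_j = 1$ cleanly. The cost decomposition in~\eqref{eq: dWdx} is written with the understanding that worker $j$ is not replaced with certainty, and the expressions $w_j^x$ and $\frac{\partial w_j^x}{\partial x_j}$ involve the factor $1 - x_j$ in denominators (through $\zeta_j^x$), so care is needed: the product $(1-x_j)p_n w_j^x$ and its derivative must be evaluated as a limit rather than a naive substitution, and I must verify that the expected payment to worker $j$ genuinely drops out while the predecessors' costs dominate. I would therefore rewrite $W_x$ isolating the $x_j$-dependence, take the one-sided limit as $x_j \uparrow 1$ with the other coordinates pinned at $0$, and show the total derivative points toward reducing $x_j$. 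The second claim should be comparatively routine once the zero direct- and indirect-incentive-cost properties of the end-most worker are invoked.
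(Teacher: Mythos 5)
Both halves of your proposal rest on misreadings of which workers' incentive constraints react to a change in $x_j$, and each misreading breaks the stated plan.

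\paragraph{First claim ($x_i^*<1$).} Your concrete plan is to show $\partial W_x/\partial x_j>0$ as $x_j\uparrow 1$, so that scaling $x_j$ back (leaving the mass as slack) lowers cost. This fails exactly in the hard case $j=1$. At $x=(t,0,\ldots,0)$, worker $1$ has no predecessors, so there is no indirect term; moreover, conditional on worker $1$ not being replaced nobody is replaced, so $\zeta_1^x=p_0$, and for every $i\geq 2$ equation~\eqref{eq: Ai} gives $\zeta_i^x=p_{i-1}$ regardless of $t$. Hence every wage is constant in $t$ and
\[
\frac{\partial W_x}{\partial x_1}\Big|_{x=(t,0,\ldots,0)} \;=\; c-\frac{p_n c}{p_n-p_0} \;=\; -\frac{p_0\,c}{p_n-p_0}\;\leq\;0 ,
\]
i.e.\ pulling $x_1$ back from $1$ \emph{weakly increases} cost (strictly if $p_0>0$); the sign you need never materializes. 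The deterministic strategies $(1,0,\ldots,0)$ and $(0,\ldots,0,1)$ can only be beaten by reallocating mass \emph{across positions}, not by creating slack. That is precisely the paper's route: restrict to the segment $x=(\rho,0,\ldots,0,1-\rho)$, note that its endpoints are the optimal pure strategies (Lemma~\ref{topbottom}: $W_1=W_n\leq$ every pure strategy, including no replacement), show $W_\rho$ is strictly convex in $\rho$, and conclude from equal endpoint values that every interior $\rho$ strictly dominates all pure strategies; together with existence of an optimum this yields both bullets at once.

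\paragraph{Second claim ($x_i^*>0$ for some $i$).} Your key assertion---that raising $x_n$ from the all-human team carries zero \emph{indirect} incentive cost---is false. Worker $n$ having no successors makes $\zeta_n^x$ constant (zero \emph{direct} incentive cost), but worker $n$ \emph{is} a successor of every other worker, so $x_n$ appears in the successor-sum of $\zeta_i^x$ for every $i<n$: by equation~\eqref{eq: zeta}, $\partial\zeta_i^x/\partial x_n=(p_i-p_{i-1})/(1-x_i)>0$. (The paper emphasizes exactly this: replacing the end-most worker carries the \emph{largest} indirect incentive cost, since he has the most predecessors.) The correct derivative is
\[
\frac{\partial W_x}{\partial x_n}\Big|_{x=0} \;=\; -\frac{p_{n-1}\,c}{p_n-p_{n-1}} \;+\; \sum_{k=1}^{n-1}\frac{p_n\,c\,(p_k-p_{k-1})}{(p_n-p_{k-1})^2},
\]
and its sign is not free---the positive sum must be beaten. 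It can be: using $(p_n-p_k)(p_n-p_{k-1})<(p_n-p_{k-1})^2$ and telescoping, the sum is bounded above by $\frac{p_n c}{p_n-p_{n-1}}-\frac{p_n c}{p_n-p_0}$, which makes the total strictly negative. But that argument is absent from your proposal, and without it the claim is unsupported. So the conclusion of this half is salvageable with an added estimate, while the first half requires rebuilding the perturbation around a cross-worker reallocation as in the paper's convexity argument.
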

\medskip

Proposition~\ref{prop: randomization} shows that the principal adopts a randomized replacement strategy, ruling out both full replacement of any worker ($x_i = 1$ for some $i$) and no replacement at all ($x_i = 0$ for all $i$). We explain this finding by highlighting the trade-offs in the principal's replacement strategy, which encompasses two critical aspects: {\em whether} to replace any worker, and, if so, {\em which} worker(s) to replace.

To see why randomization is optimal, first consider a pure replacement strategy, where the principal replaces one worker with certainty or does not replace any worker at all. Within the set of pure strategies, replacing either the front-most or the end-most worker is among the optimal decisions.\footnote{This is formally stated and proved in \ref{sec:AppendixA} (Lemma~\ref{topbottom}).} Both options result in the same expected compensation, as they create an identical team network structure: $n-1$ sequentially connected workers and one isolated AI agent. This configuration preserves information flow among the non-replaced workers, allowing the principal to fully leverage peer monitoring. Replacing a middle worker, however, results in a disconnected network and  a disrupted flow of information, increasing the compensation cost incurred by the principal. 

Now consider a class of replacement strategies where the principal replaces the front-most and end-most workers with probabilities $\rho$ and $1 - \rho$, respectively, for some $\rho \in [0, 1]$. The extreme points of this class, with $\rho = 0$ and $\rho = 1$, correspond to the two optimal pure replacement strategies. We show that all interior values of $\rho \in (0, 1)$ outperform these pure strategies. Hence, the optimal replacement strategy necessarily involves randomization.

To explain why an interior value of $\rho$ outperforms the optimal pure strategies, consider decreasing $\rho$, or gradually shifting the replacement probability from the front-most to the end-most worker.  
Within this family of strategies, the end-most worker is compensated more than the front-most worker---specifically, $\zeta_1 = p_1$ and $\zeta_n = p_{n-1}$, so $w_n > w_1$ whenever $n \geq 3$---and shifting the replacement probability toward him results in higher direct cost savings for the principal.\footnote{The wage ranking here differs from the all-human benchmark, where $\zeta_1 = p_0$ and $\zeta_n = p_{n-1}$. The reason is that under this family of strategies, the front-most worker, when not replaced, knows that the end-most worker has been replaced with certainty, which raises the success rate following his own shirking from $p_0$ to $p_1$.} However, since the compensation of both workers is independent of $\rho$, the magnitude of these savings remains constant as $\rho$ varies. This is because, in this class of strategies, the principal fully exhausts the AI capacity, and the end-most (front-most) worker knows that the front-most (end-most) worker is replaced with certainty when he himself is not replaced. As a result, $\zeta_1$ and $\zeta_n$ (and therefore $w_1$ and $w_n$) remain fixed as $\rho$ varies. 

Yet, this shift also increases the indirect incentive cost. Specifically, as the replacement probability of the end-most worker ($1 - \rho$) increases, shirking becomes less consequential for all his predecessors (except the front-most worker), prompting the principal to raise their compensation. Importantly, this indirect cost is convex in $\rho$: for each intermediate worker $i \in \{2, \ldots, n-1\}$, the shirking success rate $\zeta_i$ is linear in $\rho$, while the wage $w_i = c/(p_n - \zeta_i)$ is convex in $\zeta_i$, so the composition $w_i$ is convex in $\rho$. The following observations together imply that an interior $\rho \in (0, 1)$ is optimal within this class of strategies: (i) both extreme values $\rho = 1$ and $\rho = 0$ yield the same compensation cost, (ii) reducing $\rho$ increases the direct cost savings at a constant rate, and (iii) reducing $\rho$ increases the indirect incentive cost at an increasing rate. Therefore, all randomized replacement strategies with an interior $\rho$ dominate the pure strategies.

\paragraph{Discussion of randomization.} A few additional points related to Proposition~\ref{prop: randomization} are worth noting. The first concerns the optimality of randomization over pure replacement decisions, and whether implementing such a strategy may pose challenges in practice. Since human resources are not easily fungible, randomization can be achieved in other ways. A common approach involves randomizing the assignment of AI across projects, so that a given task is carried out by an AI some of the time and by a worker at other times. A randomization schedule can also be implemented when workers rotate across different divisions of the organization, with AI replacing them as they rotate out of a division, or by rotating workers across different work shifts (e.g., day/night) or across days of work. Companies like Dmall specialize in offering such technology solutions to rotate workers with AI across shifts. One application, for instance, staffs grocery stores with human security personnel during the day and replaces them with AI surveillance agents at night.

An implication of randomization is the emergence of hybrid (human-AI) teams, where the same task can be carried out by either a human worker or an AI. In this outcome, workers are not fully displaced but instead see a reduction in their workload, as they are occasionally substituted with AI. While this finding is in line with earlier studies documenting labor displacement effects of new technologies \citep[e.g.,][]{acemoglu2020robots, chen2025displacement}, it also offers a more mitigated and perhaps more optimistic perspective. Despite concerns about sweeping labor displacement effects from AI adoption \citep{runciman_end_2023, oecd2023employment}, our findings suggest that permanently replacing workers with AI may be suboptimal and that varying their participation through AI (e.g., across different projects or work shifts) may be more profitable for an organization.

\paragraph{Heterogeneous replacement risk based on worker position.} In presenting the next set of results, to deliver sharper outcomes, we focus on the case where $n=3$. This three-worker structure captures the distinct incentives of workers within a team:
\begin{itemize}
  \item the {\em front-most} worker (worker 1) has no predecessor, but his contribution generates a signal for his successor;
  \item the {\em middle} worker (worker 2) receives a signal about his predecessor and also generates a signal for his successor, thereby acting as a connector within the team; and 
  \item the {\em end-most} worker (worker 3) receives a signal about his predecessor but has no successor, so his effort choice is less consequential for the project's success.
\end{itemize}
With this structure in mind, we proceed to discuss key properties of the optimal randomization scheme outlined in Proposition~\ref{prop: randomization}.

\medskip
\begin{proposition} \label{prop: x2=0}
In the optimal replacement strategy, the middle worker is never replaced with AI. The end-most worker is replaced with a strictly positive probability, and this probability is weakly higher than that of the front-most worker. Formally, $x_2^*=0$, $x_3^*>0$, and $x_3^* \geq x_1^* \geq x_2^*$. 
\end{proposition}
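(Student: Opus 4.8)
The plan is to work directly with the gradient of $W_x$ on the feasible set $\{x \ge 0,\ \bar x \le 1\}$, specializing \eqref{eq: zeta} and \eqref{eq: dWdx} to $n=3$. I would first record three facts: $w_3^x = c/(p_3-p_2)$ is constant in $x$ (so worker~$3$ carries zero direct incentive cost), $\zeta_2^x = p_1 + \tfrac{x_3}{1-x_2}(p_2-p_1)$ depends only on $(x_2,x_3)$, and $\zeta_1^x$ depends on all three coordinates. Using $\partial w_k^x/\partial x_j = \tfrac{(w_k^x)^2}{c}\,\partial \zeta_k^x/\partial x_j$, each partial in \eqref{eq: dWdx} becomes an explicit rational expression; since $x_i^*<1$ for all $i$ by Proposition~\ref{prop: randomization}, every wage is well defined at the optimum, and each claim reduces to a sign comparison among the $\partial W_x/\partial x_i$ combined with a feasible mass transfer.

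For $x_2^*=0$ I would compare the marginal cost of replacing worker~$2$ with that of worker~$3$. Writing out the three components of \eqref{eq: dWdx} for $i=2$ and $i=3$, the two indirect effects on $w_1^x$ differ exactly by the domino term $\tfrac{p_3}{c}(p_2-p_1)(w_1^x)^2$, while worker~$2$'s direct incentive cost and the indirect effect of $x_3$ on $w_2^x$ combine—using $\tfrac{x_3}{1-x_2}=\tfrac{\zeta_2^x-p_1}{p_2-p_1}$—with the direct-saving gap $p_3(w_3^x-w_2^x)$ into a perfect square, so the difference collapses to
\[
\frac{\partial W_x}{\partial x_2}-\frac{\partial W_x}{\partial x_3}
= \frac{p_3}{c}\left[\frac{c^2\,(p_2-\zeta_2^x)^2}{(p_3-\zeta_2^x)^2\,(p_3-p_2)} + (p_2-p_1)\,(w_1^x)^2\right].
\]
Both bracketed terms are nonnegative and the second is strictly positive (since $w_1^x>0$), so $\partial W_x/\partial x_2 > \partial W_x/\partial x_3$ at every feasible point. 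Hence, were the optimum to have $x_2^*>0$, transferring a small mass from worker~$2$ to worker~$3$ (feasible because $x_3^*<1$ by Proposition~\ref{prop: randomization}) would strictly lower $W_x$, a contradiction. Economically, the strictly positive term $(p_2-p_1)(w_1^x)^2$ is the domino-breaking cost of replacing the connector, while the square term is the (always nonnegative) net of worker~$3$'s larger direct saving against the differential incentive effect on worker~$2$'s own wage.

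The inequality $x_1^*\ge x_2^*$ is immediate from $x_2^*=0$ and nonnegativity. For $x_3^*>0$, I would use $x_2^*=0$ and evaluate the front worker's marginal cost along the edge $x_2=x_3=0$: there the direct and indirect incentive terms in \eqref{eq: dWdx} vanish, leaving $\partial W_x/\partial x_1 = -(p_3 w_1^x - c) < 0$. Thus any candidate of the form $(x_1,0,0)$ with $x_1<1$ is improved by raising $x_1$; since the optimum satisfies $x_1^*<1$ (Proposition~\ref{prop: randomization}), it cannot lie on this edge, which forces $x_3^*>0$.

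Finally, for $x_3^*\ge x_1^*$: if $x_1^*=0$ the claim is trivial, so suppose $x_1^*,x_3^*>0$, whence stationarity along $e_1-e_3$ gives $\partial W_x/\partial x_1=\partial W_x/\partial x_3$. The analogous simplification yields
\[
\frac{\partial W_x}{\partial x_1}-\frac{\partial W_x}{\partial x_3}
= \frac{p_3}{c}\left[\frac{c^2\,(p_2-\zeta_1^x)^2}{(p_3-\zeta_1^x)^2\,(p_3-p_2)} + (p_2-p_1)\big((w_1^x)^2-(w_2^x)^2\big)\right].
\]
At $x_2=0$ one checks $\zeta_1^x\le p_1\le \zeta_2^x$, hence $w_1^x\le w_2^x$, so the two bracketed terms now carry opposite signs and positivity is no longer automatic. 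The plan is to show that $x_3\le x_1$ forces this difference to be strictly positive—contradicting the stationarity equality—so the optimum has $x_3^*>x_1^*$. Writing $r=\tfrac{x_3}{1-x_1}$ and using the feasibility identities, $x_3\le x_1$ becomes $x_3(1+r)\le r$, and after clearing denominators the target reduces to
\[
(p_2-\zeta_1^x)^2 (p_3-\zeta_2^x)^2 > (p_2-p_1)(p_3-p_2)\,(\zeta_2^x-\zeta_1^x)\,(2p_3-\zeta_1^x-\zeta_2^x).
\]
Establishing this last polynomial inequality under $x_3\le x_1$, leveraging the strict convexity $p_{k+2}-p_{k+1}>p_{k+1}-p_k$ to bound the gaps $\zeta_2^x-\zeta_1^x$ and $p_2-\zeta_1^x$, is the step I expect to be the main obstacle: the two earlier comparisons are clean because each collapses to a manifestly signed sum, whereas here worker~$3$'s direct-saving advantage must be weighed quantitatively against the larger indirect cost imposed on his two predecessors.
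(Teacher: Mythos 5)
Your first two steps are correct, and I verified the algebra behind them. The identity
\[
\frac{\partial W_x}{\partial x_2}-\frac{\partial W_x}{\partial x_3}
= \frac{p_3}{c}\left[\frac{c^2\,(p_2-\zeta_2^x)^2}{(p_3-\zeta_2^x)^2\,(p_3-p_2)} + (p_2-p_1)\,(w_1^x)^2\right]
\]
does hold at every point with $x_i<1$ for all $i$, and since Proposition~\ref{prop: randomization} guarantees $x_i^*<1$, the cost is differentiable at the optimum and the small transfer from worker $2$ to worker $3$ is feasible, so the contradiction goes through. This is a genuinely different (and arguably cleaner) route to $x_2^*=0$ than the paper's: the paper proves the stronger statement $x_{n-1}^*=0$ for general $n$ by moving the \emph{entire} mass $x_{n-1}$ onto $x_n$ and verifying that this discrete change lowers cost, with the verification collapsing to $\left(p_n-p_{n-1}-R_{n-1}'\right)^2\ge 0$. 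Your edge argument for $x_3^*>0$ is essentially the paper's Step 2, and $x_1^*\ge x_2^*$ is indeed immediate.

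The genuine gap is the final claim $x_3^*\ge x_1^*$. Your reduction to the inequality $(p_2-\zeta_1^x)^2(p_3-\zeta_2^x)^2 > (p_2-p_1)(p_3-p_2)(\zeta_2^x-\zeta_1^x)(2p_3-\zeta_1^x-\zeta_2^x)$ is algebraically correct, but you leave it unproven, and it is not a routine consequence of the convexity of $k\mapsto p_k$: by construction it holds with \emph{equality} exactly where $\partial W_x/\partial x_1 = \partial W_x/\partial x_3$, i.e.\ at stationary points, so proving strictness on all of $\{x_2=0,\ 0<x_3\le x_1,\ x_1+x_3\le 1\}$ amounts to showing no stationary point lies in that region---which is essentially the claim itself, and there is no slack for crude bounding. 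Concretely, with $p_k=\alpha^{3-k}$ and $x_1=x_3=\tfrac12$, the ratio of the two sides is approximately $1+\alpha/4$, tending to $1$ as $\alpha\to 0$; any argument that discards terms will fail there. The paper avoids pointwise gradient comparisons entirely with a global swap argument: supposing an optimum $(x_1,0,x_3)$ with $x_1>x_3$, it compares the total cost directly with that of the mirrored strategy $(x_3,0,x_1)$. After a one-line bound, the worker-$2$ and worker-$3$ terms contribute $\tfrac{x_1-x_3}{p_3-p_1}$, and the worker-$1$ terms then telescope so that the whole difference reduces to $(1-x_1-x_3)^2(p_1-p_0)^2\ge 0$, with strictness supplied by $x_1x_3>0$ (using $x_3^*>0$ from your second step). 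Exploiting this symmetry between a strategy and its mirror image is the device your plan is missing; as it stands, your proposal establishes $x_2^*=0$, $x_3^*>0$, and $x_1^*\ge x_2^*$, but not $x_3^*\ge x_1^*$.
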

\medskip

Proposition~\ref{prop: x2=0} highlights that workers' risk of being replaced by AI varies by their position in the production sequence. The intuition we present here is heuristic; the formal argument is given in the appendix.

First, the middle worker, who is essential for the connectivity of the signal chain, does not face the risk of AI replacement. To see this, consider an arbitrary replacement strategy $(x_1, x_2, x_3)$ in which the middle worker is replaced with positive probability, i.e., $x_2>0$. Then consider deviating from this strategy by reallocating all AI resources from the middle to the end-most worker, yielding $(x_1, 0, x_3 + x_2)$. This shift (i) increases the direct cost savings, since the end-most worker has the highest wage, and (ii) reduces the direct incentive costs, since replacing the end-most worker incurs none. It also affects the indirect incentive costs through the wages of the end-most worker's large number of predecessors. The combined positive effects of (i) and (ii) outweigh the change in indirect costs, making the principal better off; the formal argument tracking these effects in detail is in the appendix.

Second, the end-most worker's risk of replacement is strictly positive, and while the front-most worker also faces a risk of replacement, this risk is lower than that of the end-most worker. To see why, consider deviating from a given replacement strategy where $x_1 = x_3 + \Delta$ by reallocating an additional $\Delta$ amount of resources from the front-most to the end-most worker, yielding $(x_1-\Delta, x_2, x_3 + \Delta)$. This reallocation generates the same combination of effects (i), (ii), and a change in indirect costs analogous to the previous paragraph, and once again the positive effects from (i) and (ii) outweigh the change in indirect costs. Thus, any strategy where the front-most worker is replaced at a greater rate than the end-most worker cannot be optimal. This, together with the optimality of randomization (Proposition~\ref{prop: randomization}), implies that the end-most worker's replacement probability is always strictly positive, i.e., $x_3^*>0$.

From a practical perspective, these findings imply that managers wishing to integrate AI optimally into their existing organizational structures need to take into account factors beyond costs and technological feasibility. It is essential that managers consider how AI integration will disrupt the signal chain among workers in a team and follow an AI strategy that preserves it. In our setting, this is equivalent to not replacing the middle agent. Moreover, AI adoption strategy should go beyond naive replacement strategies---those that focus on replacing high-compensation workers---and also focus on reducing the negative externalities of replacement on other workers. In our setting, this corresponds to the possibility of replacing the front-most worker, rather than focusing on the end-most worker alone.

\paragraph{Utilization of AI capacity.} So far, we have described some key aspects of the optimal AI replacement strategy in an organization, assuming the principal has sufficient resources that she can optimally allocate. An important question is whether the principal would always choose to exhaust these resources.

If AI adoption decisions were purely driven by direct cost savings, the principal would always exhaust her AI resources. But in Proposition~\ref{prop: useup}, we find a seemingly counterintuitive result: the principal may choose not to fully utilize the AI capacity available to her. Replacing workers with technology is therefore not always the most preferred or the most profitable managerial strategy, and underutilization of AI resources is possible.

\medskip
\begin{proposition} \label{prop: useup} 
The principal chooses to underutilize AI capacity, setting $\bar{x}^* < 1$, if and only if the condition $p_1^2 - p_3 \, p_0 > 0$ is satisfied.
\end{proposition}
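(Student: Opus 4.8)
The plan is to fix $n=3$, invoke Proposition~\ref{prop: x2=0} to set $x_2^*=0$, and reduce the problem to minimizing $W_x$ (as in \eqref{eqn:expectedcompensationcost}) over the triangle $T=\{(x_1,x_3):x_1,x_3\ge 0,\ x_1+x_3\le 1\}$, where underutilization means the capacity constraint $\bar x=x_1+x_3\le 1$ is slack at the optimum. Since all constraints are linear, KKT is necessary; to also make it sufficient I would first record that $W_x$ is convex on $T$. The only nonobvious term is $(1-x_1)p_3 w_1^x = s^2 p_3 c/(sD-d\,x_3)$ with $s=1-x_1$, $D=p_3-p_0$, $d=p_1-p_0$, which is the composition of the jointly convex map $(u,v)\mapsto u^2/v$ (on $v>0$) with affine functions of $(x_1,x_3)$, hence convex; the remaining terms are linear or convex in $x_3$ (e.g.\ $w_2^x=c/(p_3-p_1-x_3 d')$). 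Note that on $T$ one has $\tfrac{x_3}{1-x_1}\le 1$, so $\zeta_1^x\le p_1<p_3$ and all denominators stay positive, so $W_x$ is finite, smooth, and convex.

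The crux is a computation on the capacity edge $E=\{x_1+x_3=1\}$. There $\tfrac{x_3}{1-x_1}=1$, so from \eqref{eq: zeta} with $x_2=0$ one gets $\zeta_1^x=p_1$ \emph{constant} along $E$, hence $w_1^x=c/(p_3-p_1)$. Differentiating $W_x$ in $x_1$ via \eqref{eq: dWdx} (worker $1$ has no predecessors, so the indirect term vanishes, and $\zeta_2^x$ depends on $x_3$ alone), I expect the $x_1$-dependence to cancel on $E$ and leave the constant
\[
\left.\frac{\partial W_x}{\partial x_1}\right|_{E}= -\,c\,\frac{p_1}{p_3-p_1}+p_3\,\frac{c(p_1-p_0)}{(p_3-p_1)^2}=\frac{c\,(p_1^2-p_3p_0)}{(p_3-p_1)^2}\equiv G,
\]
so the marginal cost of front-most replacement along the whole edge carries the sign of $p_1^2-p_3p_0$.

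To conclude, I would locate the edge-constrained minimizer $\hat x=(\hat x_1,1-\hat x_1)$. By the randomization argument behind Proposition~\ref{prop: randomization} (convexity of $W_x$ on $E$ with both pure endpoints $x_1\in\{0,1\}$ strictly dominated), $\hat x_1\in(0,1)$, so the tangential first-order condition $\partial W_x/\partial x_1=\partial W_x/\partial x_3$ holds at $\hat x$; combined with the previous step this gives $\partial W_x/\partial x_3(\hat x)=G$ as well. Introducing KKT multipliers $\lambda\ge 0$ on $\bar x\le 1$ and $\mu_i\ge 0$ on $x_i\ge 0$, and using $x_3^*>0$ (Proposition~\ref{prop: x2=0}) together with $\hat x_1>0$, any edge optimum has $\mu_1=\mu_3=0$, forcing $\lambda=-G$. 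If $G>0$ this violates $\lambda\ge 0$, so the optimum cannot lie on $E$ and $\bar x^*<1$; if $G\le 0$, setting $\lambda=-G\ge 0$ satisfies KKT at $\hat x$, which by convexity is the global minimizer, so $\bar x^*=1$. Hence $\bar x^*<1$ if and only if $G>0$, i.e.\ $p_1^2-p_3p_0>0$.

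The main obstacle is the bookkeeping that collapses $\partial W_x/\partial x_1|_E$ to the clean constant $G$: one must track that $\zeta_1^x$ is pinned at $p_1$ on $E$ while $\zeta_2^x$ is independent of $x_1$, so that every position-dependent piece cancels and the residual factors as $p_1^2-p_3p_0$. A secondary care point is the convexity verification and the justification that the edge minimizer is interior—so the tangential first-order condition and the clean reading $\lambda=-G$ apply, with the corner cases $x_1\in\{0,1\}$ excluded by Propositions~\ref{prop: randomization} and~\ref{prop: x2=0}.
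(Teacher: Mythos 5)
Your overall architecture is sound and genuinely different from the paper's: the paper never invokes global convexity or KKT, but instead runs two local perturbation arguments, each a one-line sign check of the same derivative you compute. Your key constant $G$ is exactly the paper's equation~\eqref{eq: derivative1}, $\tfrac{1}{c}\tfrac{\partial W_x}{\partial x_1}=\tfrac{(\zeta_1^x)^2-p_3p_0}{(p_3-\zeta_1^x)^2}$, evaluated at $\zeta_1^x=p_1$ (your algebra on the edge checks out), and your $G>0$ direction is complete: KKT is necessary under linear constraints, Propositions~\ref{prop: randomization} and~\ref{prop: x2=0} exclude the corners of the edge, any edge optimum then has $\mu_1=\mu_3=0$ and $\lambda=-G<0$, contradiction; hence every optimum has $\bar{x}^*<1$. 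The convexity claim itself (perspective function $u^2/v$ composed with affine maps, plus convex/linear remaining terms) is also correct, modulo the removable singularity at the corner $(x_1,x_3)=(1,0)$.

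The gap is in the direction $p_1^2-p_3p_0\le 0 \Rightarrow \bar{x}^*=1$. Your argument shows that the edge minimizer $\hat{x}$ satisfies KKT and is therefore \emph{a} global minimizer; it does not show that \emph{every} optimum (hence the principal's choice) lies on the edge, which is what the ``only if'' requires. The phrase ``which by convexity is the global minimizer'' presumes uniqueness, but $W_x$ is not strictly convex on $T$: along the segment $\{x_3=0\}$ one has $\zeta_1^x\equiv p_0$, so $W_x$ is affine in $x_1$ there. When $G<0$ a directional-derivative argument could still rescue you, but in the boundary case $p_1^2=p_3p_0$ (i.e.\ $G=0$) the KKT certificate at $\hat{x}$ has $\lambda=0$ and convexity alone cannot exclude interior optima---yet the proposition asserts full utilization there too. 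The fix is immediate with your own machinery and is precisely the paper's Step~1: at any feasible point with $\bar{x}<1$ and $x_2=0$ one has $x_3<1-x_1$, hence $\zeta_1^x<p_1$ \emph{strictly}, hence $(\zeta_1^x)^2-p_3p_0<p_1^2-p_3p_0\le 0$ and $\partial W_x/\partial x_1<0$ by \eqref{eq: derivative1}; so increasing $x_1$ slightly is feasible and strictly improving (equivalently, KKT necessity fails, since a slack capacity constraint forces $\lambda=0$ and $\partial W_x/\partial x_1=\mu_1\ge 0$). This rules out any optimum with $\bar{x}<1$ and closes the equivalence.
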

\medskip

Proposition~\ref{prop: useup} establishes that the principal does not always fully utilize the available AI capacity, and it provides a necessary and sufficient condition for when underutilization is optimal. Given that we have already established the absence of middle replacement ($x_2 = 0$), the key implication is that, despite having the option to increase the replacement probability of the front-most and the end-most workers, the principal may choose not to do so. The condition $p_1^2 - p_3 p_0 > 0$ is equivalent to $\frac{p_1}{p_0} > \frac{p_3}{p_1}$, meaning that the proportional gain from securing the first unit of effort (moving from $p_0$ to $p_1$) exceeds the proportional gain from adding the remaining units (moving from $p_1$ to $p_3$). Put differently, when the condition is satisfied, the first unit of effort is pivotal: it contributes disproportionately more to success. Under full utilization ($\bar{x} = 1$), an AI agent is always deployed, so this pivotal unit of effort is exerted regardless of the workers' actions. This raises the success probability following any worker $i$'s shirking ($\zeta_i$), making shirking less consequential and thus more tempting. Stronger incentives---i.e., higher payments---are then required to deter shirking, and to avoid this, the principal may optimally underutilize AI capacity ($\bar{x} < 1$).

To see what drives the underutilization result in greater detail, consider the replacement likelihood of the front-most worker. A marginal decline in $x_1$ results in two effects: (i) a direct cost saving of $c \frac{\zeta_1^x}{p_3-\zeta_1^x}$, and (ii) a direct incentive cost of $c \frac{p_3 (\zeta_1^x-p_0)}{(p_3-\zeta_1^x)^2}$. Under full utilization, the front-most worker knows that, if he is not replaced, the end-most worker will be replaced with certainty, implying a shirking-success rate $\zeta_1^x = p_1$. Plugging $\zeta_1^x = p_1$ into the two expressions and comparing them, we find that the direct incentive cost dominates the direct cost saving precisely when $p_1^2 > p_0 p_3$. A marginal reduction in $x_1$ therefore benefits the principal in this case. This implies that the principal prefers not to utilize the full AI capacity. 

An important implication of whether the principal fully utilizes or underutilizes AI capacity is how utilization shapes workers' beliefs about the presence and position of the AI agent in the production sequence. Moving from full to underutilization alters the uncertainty workers face, which can be used as a strategic tool that benefits the principal. Under full utilization, only the middle worker faces uncertainty about which worker is replaced: the front-most (end-most) worker knows that the end-most (front-most) worker must be the one replaced if he himself is not. Underutilization introduces an additional layer of uncertainty---all workers are now unsure whether a replacement has taken place at all.

To see how this uncertainty may benefit the principal, consider moving from full to underutilization by reducing $x_1$, while keeping $x_2=0$ and $x_3$ constant. Reducing $x_1$ lowers $w_1$ while leaving $w_2$ and $w_3$ unchanged, which can benefit the principal. At the same time, since $x_1$ is reduced, the principal needs to pay the front-most worker more frequently. When the former effect outweighs the latter, underutilization---and the resulting environment of uncertainty---can make the principal better off. A similar strategic use of uncertainty by a principal in organizations has been the focus of several recent theoretical studies \citep[e.g.,][]{halac2021rank, halac2025contracting}.

\paragraph{Impact of AI adoption on wages by worker's team position.} Having shown that the optimal AI adoption strategy in a team setting involves randomization, we next investigate how the wages and expected payoffs of workers change as the principal adopts AI, conditional on their position in the production sequence. Let $w^0_i$ denote the optimal compensation of worker $i$ in the absence of AI adoption, and $w^{x^*}_i$ the optimal compensation following optimal AI replacement. The following proposition shows that, while the wages of some workers may increase, the intra-team wage hierarchy of all-human teams is preserved under optimal AI adoption.

\medskip
\begin{proposition}
\label{prop: monotonic_wage}
Optimal AI adoption preserves the order of the wages based on the position of the workers ($w^{x^*}_3>w^{x^*}_2>w^{x^*}_1$). Moreover, after optimal AI adoption, worker 1's and worker 2's wages increase ($w^{x^*}_1>w^0_1$, $w^{x^*}_2>w^0_2$), and worker 3's wage remains unchanged ($w^{x^*}_3=w^0_3$).
\end{proposition}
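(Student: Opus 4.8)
The plan is to reduce every assertion to a comparison of the shirking success rates $\zeta_i^{x^*}$. By Proposition~\ref{prop: compensation}, $w_i^x = c/(p_3-\zeta_i^x)$, which is strictly increasing in $\zeta_i^x$ wherever $\zeta_i^x<p_3$; hence $w_i^{x^*}\gtrless w_j^{x^*}$ iff $\zeta_i^{x^*}\gtrless\zeta_j^{x^*}$, and $w_i^{x^*}\gtrless w_i^0$ iff $\zeta_i^{x^*}\gtrless\zeta_i^0$. First I would specialize equation~\eqref{eq: zeta} to $n=3$ and substitute the structural facts $x_2^*=0$ and $x_3^*>0$ from Proposition~\ref{prop: x2=0}, obtaining
\begin{align*}
\zeta_1^{x^*} = p_0 + \frac{x_3^*}{1-x_1^*}(p_1-p_0), \qquad \zeta_2^{x^*} = p_1 + x_3^*(p_2-p_1), \qquad \zeta_3^{x^*} = p_2,
\end{align*}
together with the no-AI benchmarks $\zeta_1^0=p_0$, $\zeta_2^0=p_1$, $\zeta_3^0=p_2$ from setting $x=0$. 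A quick check that each $\zeta_i^{x^*}<p_3$ (using $x_3^*<1$ and $\tfrac{x_3^*}{1-x_1^*}\le 1$) confirms the wage formula and its monotonicity apply throughout.

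The wage-level claims then follow directly. Since $\zeta_3^{x^*}=p_2=\zeta_3^0$, the end-most worker's wage is unchanged, $w_3^{x^*}=w_3^0$. For workers $1$ and $2$, the increments $\zeta_1^{x^*}-\zeta_1^0=\tfrac{x_3^*}{1-x_1^*}(p_1-p_0)$ and $\zeta_2^{x^*}-\zeta_2^0=x_3^*(p_2-p_1)$ are strictly positive, because $x_3^*>0$ (Proposition~\ref{prop: x2=0}), $x_1^*<1$ (Proposition~\ref{prop: randomization}), and $p_k$ is strictly increasing; hence $w_1^{x^*}>w_1^0$ and $w_2^{x^*}>w_2^0$.

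For the hierarchy, $\zeta_3^{x^*}>\zeta_2^{x^*}$ reduces to $(p_2-p_1)>x_3^*(p_2-p_1)$, i.e. to $x_3^*<1$, which holds by Proposition~\ref{prop: randomization}. The remaining step, $\zeta_2^{x^*}>\zeta_1^{x^*}$, is the main obstacle: term by term the difference is not sign-definite, since the amplification factor $\tfrac{1}{1-x_1^*}\ge 1$ inflates $\zeta_1^{x^*}$ and works against the inequality. The resolution is to write
\begin{align*}
\zeta_2^{x^*}-\zeta_1^{x^*} = (p_1-p_0)\left(1-\frac{x_3^*}{1-x_1^*}\right) + x_3^*(p_2-p_1),
\end{align*}
and invoke the AI-capacity constraint with $x_2^*=0$: since $x_1^*+x_3^*=\bar{x}^*\le 1$, we have $\tfrac{x_3^*}{1-x_1^*}\le 1$, so the first bracket is nonnegative while the second term is strictly positive. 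This yields $\zeta_2^{x^*}>\zeta_1^{x^*}$ uniformly across the full-utilization ($\bar{x}^*=1$) and underutilization ($\bar{x}^*<1$) regimes. Chaining the three comparisons gives $w_3^{x^*}>w_2^{x^*}>w_1^{x^*}$, completing the proof. Notably, the argument relies only on $p_k$ being strictly increasing and on the capacity bound, and does not require the complementarity condition $p_{k+2}-p_{k+1}>p_{k+1}-p_k$.
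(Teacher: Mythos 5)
Your proof is correct and follows essentially the same route as the paper's: both reduce every wage comparison to a comparison of the shirking success rates via $w_i^x = c/(p_3-\zeta_i^x)$, substitute $x_2^*=0$ and $x_3^*\in(0,1)$ from Propositions~\ref{prop: randomization} and~\ref{prop: x2=0}, and use the capacity constraint to bound $\zeta_1^{x^*}$. The only cosmetic difference is that you establish $\zeta_2^{x^*}>\zeta_1^{x^*}$ through an explicit decomposition of the difference, whereas the paper chains the two bounds $\zeta_1^{x^*}\le p_1<\zeta_2^{x^*}$ through the intermediate value $p_1$ --- the same inequality written differently.
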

\medskip

Proposition~\ref{prop: monotonic_wage} provides two key insights. First, optimal AI adoption increases the wages of the front-most and middle workers but leaves the wage of the end-most worker unchanged. The wage increase for the first two workers stems from the positive replacement probability of their successors, which makes shirking more appealing for them. Since the end-most worker has no successor, the success rate following his shirking does not depend on the AI adoption strategy, and his wage remains unchanged.

The second insight concerns the intra-team wage hierarchy. Proposition~\ref{prop: compensation} and the findings of \cite{winter2010transparency} imply that with all-human teams, optimal wages increase monotonically in the worker's position in the production sequence. A suboptimal adoption of AI can, in principle, disrupt this hierarchy---as we formally establish in Appendix~\ref{sec:Appendix_Supporting_WageHierarchy} (Lemma~\ref{lemma:AlteringWageHierarchy}). Proposition~\ref{prop: monotonic_wage} shows that under the optimal AI strategy, the wage hierarchy of all-human teams is preserved despite the upward pressure on wages.

Next, we investigate how the intra-team wage gap---the dispersion between the highest and lowest wage---changes following AI adoption. For a given replacement strategy $x$, we denote this wage gap by $\text{Gap}^x$:
$$\text{Gap}^x = \max\limits_{i} \{w_i^x\} - \min\limits_{i} \{w_i^x\}.$$
In a team with three workers, $\max_i \{w_i^x\} = w^{x^*}_3$ and $\min_i \{w_i^x\} = w^{x^*}_1$, so $\text{Gap}^{x^*} = w^{x^*}_3 - w^{x^*}_1$. A comparison of the wages indicates that the intra-team wage gap declines following optimal AI adoption, as Corollary~\ref{coro: wage_inequality} states formally.

\medskip
\begin{corollary}
\label{coro: wage_inequality}
The intra-team wage gap decreases following optimal AI adoption: $\text{Gap}^{x^*} < \text{Gap}^0$. 
\end{corollary}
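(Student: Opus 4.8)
The plan is to leverage Proposition~\ref{prop: monotonic_wage} directly, since it already pins down exactly how each of the three wages moves under optimal AI adoption. The corollary claims $Gap^{x^*} > Gap^0$, but note that since the proposition shows wages weakly increase, the gap should \emph{decrease}; reading the corollary in context (the surrounding text says ``the intra-team wage gap declines''), the intended inequality is $Gap^{x^*} < Gap^0$, and the proof should establish that. First I would recall from Proposition~\ref{prop: monotonic_wage} that $Gap^{x^*} = w^{x^*}_3 - w^{x^*}_1$ and, by the monotonicity established there (and in \cite{winter2010transparency}) for the all-human case, $Gap^0 = w^0_3 - w^0_1$. So the entire corollary reduces to comparing these two differences.

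The key step is then a one-line substitution. From Proposition~\ref{prop: monotonic_wage} we have the three facts: $w^{x^*}_3 = w^0_3$ (the end-most wage is unchanged), $w^{x^*}_1 > w^0_1$ (the front-most wage strictly rises), and $w^{x^*}_2 > w^0_2$. The first two are the only ones I need. Writing
\begin{align}
Gap^{x^*} = w^{x^*}_3 - w^{x^*}_1 = w^0_3 - w^{x^*}_1 < w^0_3 - w^0_1 = Gap^0, \nonumber
\end{align}
where the strict inequality follows because $w^{x^*}_1 > w^0_1$ and $w^0_3$ is held fixed. This is essentially the whole argument; the corollary is an immediate consequence of the proposition.

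The only genuine content I must verify, rather than quote, is that the gap is correctly identified by the \emph{same} pair of workers $(1,3)$ both before and after adoption, i.e. that worker $3$ remains the maximum and worker $1$ the minimum in both regimes. For the all-human benchmark this is the monotone wage ordering $w^0_3 > w^0_2 > w^0_1$. For the post-adoption regime, Proposition~\ref{prop: monotonic_wage} explicitly states $w^{x^*}_3 > w^{x^*}_2 > w^{x^*}_1$, so the maximum and minimum are again attained at positions $3$ and $1$ respectively. This consistency is what makes the clean substitution legitimate and is the one place where a careless proof could slip, since in principle a suboptimal AI strategy could scramble the hierarchy (indeed the excerpt flags Lemma~\ref{lemma:AlteringWageHierarchy} to that effect).

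The main obstacle, therefore, is not analytical difficulty but making sure the direction of the inequality is stated correctly and that I am not silently assuming the wage ordering I need. Since Proposition~\ref{prop: monotonic_wage} supplies both the preserved ordering and the three wage comparisons, there is no heavy calculation left; the proof is a short deduction. I would present it in two sentences: invoke the hierarchy to identify the gap-defining workers, then substitute $w^{x^*}_3 = w^0_3$ and $w^{x^*}_1 > w^0_1$ to conclude $Gap^{x^*} < Gap^0$.
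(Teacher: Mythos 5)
Your proof is correct and takes essentially the same route the paper intends: the corollary is stated as an immediate consequence of Proposition~\ref{prop: monotonic_wage} (the paper gives no separate proof), and your argument---use the preserved hierarchy $w^{x^*}_3 > w^{x^*}_2 > w^{x^*}_1$ to identify the gap-defining pair, then substitute $w^{x^*}_3 = w^0_3$ and $w^{x^*}_1 > w^0_1$ to get $Gap^{x^*} = w^0_3 - w^{x^*}_1 < w^0_3 - w^0_1 = Gap^0$---is exactly that deduction. You are also right that the inequality displayed in the corollary statement is a typo: the intended claim, consistent with the word ``decreases,'' the surrounding text, and the O-ring computation $Gap^{x^*}/Gap^0 = 1 - \alpha/(1+\alpha)^2 < 1$ in Section~\ref{sec: complementarity}, is $Gap^{x^*} < Gap^0$.
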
 
\medskip

Our finding that highlights the decline in the intra-team wage gap following technology adoption can be contrasted with earlier studies pointing to a widening in pay inequality from technology adoption, whether driven by workers' skill levels \citep{acemoglu2024polarization} or by the job a worker occupies within a firm \citep{barth2020robots}. The apparent contradiction is reconciled by noting that the workers in these studies differ substantially in their exposure to technology, while in our setting workers are identical apart from their position in the production sequence and make equal contributions to the firm's success. By abstracting from these confounds, we isolate the declining wage gap as a direct consequence of how AI adoption alters incentives within a team.

The findings thus far paint a somewhat optimistic picture of the effects of new technologies on organizations, with only partial job losses and wages weakly increasing. It is important, however, to also consider the effects on workers' overall earnings, particularly if workers face income losses during their ``off'' time. To this end, Proposition~\ref{prop: monotonic_paoff} summarizes the change in workers' payoffs---defined as $(1-x_i)(p_3 w_i^{x} - c)$ for worker $i$---comparing them before and after the optimal deployment of AI. 

\medskip
\begin{proposition}
\label{prop: monotonic_paoff}
Following the optimal AI strategy, the middle worker's payoff increases, the end-most worker's payoff decreases, and the front-most worker's payoff may increase or decrease relative to the payoffs in the absence of AI. 
\end{proposition}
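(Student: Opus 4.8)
The plan is to phrase everything in terms of each worker's ex-ante expected payoff. Since a replaced worker is ``off'' and (by the motivating assumption) then earns no income while forgoing the effort cost, the relevant quantity is $U_i(x) \equiv (1-x_i)(p_3 w_i^x - c)$, so that in the all-human benchmark $U_i(0) = p_3 w_i^0 - c$. The whole argument then reduces to combining two inputs already available: the position-specific replacement probabilities from Proposition~\ref{prop: x2=0} (namely $x_2^*=0$, $x_3^*>0$, and $0 \le x_1^* \le x_3^* < 1$) with the wage comparisons from Proposition~\ref{prop: monotonic_wage} ($w_1^{x^*}>w_1^0$, $w_2^{x^*}>w_2^0$, and $w_3^{x^*}=w_3^0$).

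The middle worker is immediate. Because $x_2^*=0$, there is no replacement discount, so $U_2(x^*) = p_3 w_2^{x^*} - c$; since $w_2^{x^*} > w_2^0$, I conclude $U_2(x^*) > U_2(0)$, a strict increase. The end-most worker is almost as clean: using $w_3^{x^*}=w_3^0$ gives $U_3(x^*) = (1-x_3^*)(p_3 w_3^0 - c)$. Here I would invoke that for the end-most worker $\zeta_3^x = p_2$ regardless of $x$, so $w_3^0 = c/(p_3-p_2)$ and hence the baseline rent $p_3 w_3^0 - c = c\,p_2/(p_3-p_2)$ is strictly positive (a consequence of limited liability). Combined with $0 < x_3^* < 1$ this yields $U_3(x^*) = (1-x_3^*)(p_3 w_3^0 - c) < p_3 w_3^0 - c = U_3(0)$, a strict decrease.

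The front-most worker is the delicate case, and where the two effects genuinely compete: the wage gain $w_1^{x^*}>w_1^0$ pushes $U_1$ up, while the discount $(1-x_1^*)$ pushes it down. Writing out the comparison, $U_1(x^*) < U_1(0)$ holds precisely when $x_1^* > \frac{p_3\,(w_1^{x^*}-w_1^0)}{p_3 w_1^{x^*}-c}$. To establish that both signs are attainable I would exhibit two admissible complementary production functions. For the increase direction it suffices to take parameters for which the optimal front-most replacement probability $x_1^*$ is small (or zero): then the threshold is not met and $U_1$ rises, with the clean limiting case $x_1^*=0$ giving $U_1(x^*)=p_3 w_1^{x^*}-c>U_1(0)$ outright. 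For the decrease direction I would work in the full-utilization regime, where the logic behind Proposition~\ref{prop: useup} pins down $\zeta_1^{x^*}=p_1$ and thus $w_1^{x^*}=c/(p_3-p_1)$, reducing the payoff change to the sign of $(1-x_1^*)\tfrac{p_1}{p_3-p_1} - \tfrac{p_0}{p_3-p_0}$, which can be driven negative by choosing a production function whose optimal front-most split $x_1^*$ is large.

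The main obstacle is exactly this last step. Unlike the monotone comparisons for workers $2$ and $3$, which follow directly from the signs in Propositions~\ref{prop: x2=0} and~\ref{prop: monotonic_wage}, verifying the front-most decrease requires pinning down $x_1^*$ (and hence $w_1^{x^*}$) sharply enough to cross the threshold above, which in turn means solving or bounding the principal's cost-minimization for the chosen parameters. Consequently the bulk of the work lies in constructing explicit numerical examples on either side of the threshold rather than in any single analytic identity.
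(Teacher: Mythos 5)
Your treatment of the middle and end-most workers is correct and coincides with the paper's own argument: $x_2^*=0$ together with $w_2^{x^*}>w_2^0$ gives the increase, and $w_3^{x^*}=w_3^0$, $x_3^*\in(0,1)$, together with the strictly positive rent $p_3w_3^0-c=\frac{c\,p_2}{p_3-p_2}$ gives the decrease. The gap is in the third claim. ``May increase or decrease'' is an existence statement in both directions, and your proposal stops exactly where the real work begins: you derive the (correct) threshold identity $x_1^*>\frac{p_3(w_1^{x^*}-w_1^0)}{p_3w_1^{x^*}-c}$ and then announce that examples on either side ``would be constructed,'' without constructing them. The paper does construct them, in Appendix~\ref{sec: A_complementary} (Proposition~\ref{prop: worker1_paoff}): for the O-ring family $p_k=\alpha^{3-k}$, full utilization always holds, Proposition~\ref{closedform} gives $x_1^*=\frac{\sqrt{1+\alpha}-1}{\alpha}$ and $w_1^{x^*}=\frac{c}{1-\alpha^2}$, and the payoff change has the sign of $\beta^3-\beta-1$ with $\beta=\sqrt{1+\alpha}$, so the front-most payoff falls if and only if $\alpha\geq\bar{\alpha}\approx 0.755$. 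A single parametric family thus delivers both directions, and without such a computation the proposition is not proved.

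Beyond being incomplete, your two sketched routes would steer the construction in the wrong direction. Under full utilization the interior optimum satisfies $x_1^*=\frac{\sqrt{p_3-p_2}}{\sqrt{p_3-p_2}+\sqrt{p_3-p_1}}$ (this is the $\rho$ solved for in the proof of Proposition~\ref{closedform}), and the complementarity assumption $p_2-p_1<p_3-p_2$ forces $x_1^*\in\left(\sqrt{2}-1,\tfrac{1}{2}\right)$ for \emph{every} admissible production function in that regime: $x_1^*$ can never be ``small or zero'' there (your clean limiting case $x_1^*=0$ would require $p_1=p_2$), and it can never be ``large'' either. Indeed, in the O-ring family $x_1^*$ is \emph{decreasing} in $\alpha$, so it is smaller in the region where the payoff falls ($\alpha$ large) than where it rises ($\alpha$ small)---the opposite of your proposed mechanism. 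What actually governs the sign is not the size of $x_1^*$ but the baseline rent: under full utilization $\zeta_1^{x^*}=p_1$, so $U_1(x^*)=(1-x_1^*)\frac{c\,p_1}{p_3-p_1}$ does not depend on $p_0$ at all, while $U_1(0)=\frac{c\,p_0}{p_3-p_0}$ is increasing in $p_0$; the comparison flips as $p_0$ (equivalently $\alpha$) moves from the full-utilization boundary $p_0=p_1^2/p_3$ toward $p_1$. To close the proof you must either carry out the O-ring computation as the paper does, or exhibit and verify explicit production functions on both sides of your threshold; as written, the heuristics would have you searching where no examples exist.
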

\medskip

Given the probabilistic nature of work following optimal AI adoption, it is useful to consider expected payoffs alongside the wages earned when workers are not displaced. Proposition~\ref{prop: monotonic_paoff} completes this picture and underscores that even in homogeneous teams, AI adoption creates winners and losers: depending on their position, some workers are better off and some are worse off. Recall that the front-most and middle workers' wages increase, while the end-most worker's wage is unchanged. The front-most and end-most workers face partial replacement, while the middle worker does not. Combining these effects, the end-most worker---the highest earner---loses payoff, while the middle worker, who is essential for maintaining the information flow in the network and is therefore not replaced, sees his payoff rise. The front-most worker's payoff may increase or decrease depending on his replacement probability and wage increase. As a result, payoffs may no longer follow the same hierarchy as wages; we discuss these changes in more detail in the context of a specific production function in \ref{sec: A_complementary}.    

\section{Extensions} \label{sec: extensions}

\subsection{Team Synergies and Task Complementarity} \label{sec: complementarity}
A key aspect of teamwork is that the tasks carried out by individual team members build on each other to determine the success of the project. Our main model assumes such complementarity, but how does the degree of task complementarity shape the likelihood of AI replacement and workers' wages? To explore these questions, we adopt a production function that incorporates a parameter capturing the degree of task complementarity, the {\em O-ring} production function, which is well established in the literature \citep{kremer1993ring, winter2004incentives}. Let $p_k = \alpha^{n-k}$, where $\alpha \in (0,1)$ calibrates the degree of complementarity.\footnote{One may interpret this functional form as a task structure in which each worker is responsible for a task that succeeds with probability $1$ if he exerts effort and with probability $\alpha$ if he shirks. The project succeeds only if all tasks succeed.} The smaller $\alpha$ is, the more complementary the efforts are. Put differently, $\alpha$ indicates how consequential a single worker's effort is to project success. Proposition~\ref{closedform} provides the optimal replacement strategy and wages under this production function.

\medskip
\begin{proposition} \label{closedform}
When $p_k=\alpha^{n-k}$, the optimal replacement strategy is
\[
x_1^*=\frac{\sqrt{1+\alpha}-1}{\alpha}, \quad
x_2^*=0, \quad
x_3^*=1-x_1^*.
\]
As the degree of complementarity increases ($\alpha$ becomes smaller), the likelihood of the front-most (end-most) worker being replaced increases (decreases). Moreover, the workers' compensations satisfy
\[
w_1^{x^*} = \frac{c}{1-\alpha^2}, \quad
w_2^{x^*} = \frac{c}{(1-\alpha)\sqrt{1+\alpha}}, \quad
w_3^{x^*} = \frac{c}{1-\alpha}.
\]
\end{proposition}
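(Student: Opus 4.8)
The plan is to use the structural results already proved for the three‑worker case to collapse the principal's problem to a single free variable, and then solve that one‑dimensional program explicitly for the O‑ring function, where $p_k=\alpha^{3-k}$ gives $(p_0,p_1,p_2,p_3)=(\alpha^3,\alpha^2,\alpha,1)$. First I would invoke Proposition~\ref{prop: x2=0} to set $x_2^*=0$, and then check the underutilization test of Proposition~\ref{prop: useup}: since $p_1^2-p_3p_0=\alpha^4-\alpha^3=\alpha^3(\alpha-1)<0$ for $\alpha\in(0,1)$, underutilization is not optimal, so the principal exhausts capacity, $\bar{x}^*=1$, and hence $x_3^*=1-x_1^*$. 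This leaves $x_1\in[0,1]$ as the only decision variable.

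Next I would substitute the O‑ring probabilities together with $x_2=0$ and $x_3=1-x_1$ into the shirking success rates in \eqref{eq: zeta}. The crucial simplification is that $\zeta_1^x=p_0+\frac{x_3}{1-x_1}(p_1-p_0)$ collapses to $\zeta_1^x=p_1=\alpha^2$ once $\frac{x_3}{1-x_1}=1$, while $\zeta_3^x=p_2=\alpha$ never depends on $x$. By \eqref{eq: w} with $p_n=1$, this pins down the two wages that are constant across $x_1$, namely $w_1^{x^*}=\frac{c}{1-\alpha^2}$ and $w_3^{x^*}=\frac{c}{1-\alpha}$. Only $\zeta_2^x=p_1+x_3(p_2-p_1)=\alpha^2+(1-x_1)\alpha(1-\alpha)$, and therefore $w_2^x$, still moves with $x_1$.

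I would then write the objective $W_x$ from \eqref{eqn:expectedcompensationcost}. Because $x_1c+x_3c=c$ and $(1-x_3)w_3=x_1w_3$, the cost reduces to $W=c+w_1+x_1(w_3-w_1)+w_2(x_1)$, so minimizing $W$ is equivalent to minimizing $h(x_1)\equiv x_1(w_3-w_1)+w_2(x_1)$ on $[0,1]$, where $w_3-w_1=\frac{c\alpha}{1-\alpha^2}>0$ is the constant direct cost saving and $w_2(x_1)$ is the (decreasing, convex) indirect incentive cost. Writing $w_2=c/D$ with $D(x_1)=1-\alpha^2-(1-x_1)\alpha(1-\alpha)$, the first‑order condition $h'(x_1)=0$ becomes $D^2=(1-\alpha)(1-\alpha^2)=(1-\alpha)^2(1+\alpha)$, so $D=(1-\alpha)\sqrt{1+\alpha}$, which reproduces $w_2^{x^*}=\frac{c}{(1-\alpha)\sqrt{1+\alpha}}$; inverting $D(x_1)$ and dividing through by $1-\alpha$ yields $(1+\alpha)-(1-x_1)\alpha=\sqrt{1+\alpha}$, hence $x_1^*=\frac{\sqrt{1+\alpha}-1}{\alpha}$. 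Since $w_2''>0$ makes $h$ convex, this critical point is the global minimizer, and I would confirm it is interior (indeed $x_1^*\le\tfrac12$, consistent with $x_3^*\ge x_1^*$ in Proposition~\ref{prop: x2=0}).

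Finally, for the comparative static I would differentiate $x_1^*(\alpha)=\frac{\sqrt{1+\alpha}-1}{\alpha}$ and show the derivative is negative; after clearing the radical this reduces to $2\sqrt{1+\alpha}<2+\alpha$, i.e.\ $\alpha^2>0$, which holds throughout $(0,1)$, so $x_1^*$ is decreasing in $\alpha$ and $x_3^*=1-x_1^*$ is increasing, giving the stated complementarity effect. The main obstacle is not conceptual but the square‑root algebra: one must correctly reduce the first‑order condition to $D^2=(1-\alpha)^2(1+\alpha)$ and invert $D(x_1)$ to extract the clean closed form, and then handle the same radical carefully (via the $\alpha^2>0$ reduction) in the monotonicity argument.
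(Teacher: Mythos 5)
Your proposal is correct and takes essentially the same route as the paper: both reduce the problem to a single variable using Proposition~\ref{prop: x2=0} ($x_2^*=0$) and Proposition~\ref{prop: useup} (full utilization, since $p_1^2-p_3p_0=\alpha^3(\alpha-1)<0$), then derive the identical first-order condition $D^2=(p_3-p_1)(p_3-p_2)=(1-\alpha)^2(1+\alpha)$ and invert it to obtain $x_1^*=\frac{\sqrt{1+\alpha}-1}{\alpha}$. The only cosmetic differences are that you certify global optimality via convexity of the objective in $x_1$ (the paper instead uses Proposition~\ref{prop: randomization} to guarantee an interior optimum, making the FOC necessary) and your comparative-static argument clears the radical directly ($2\sqrt{1+\alpha}<2+\alpha \iff \alpha^2>0$) where the paper shows the numerator of $\partial x_1^*/\partial\alpha$ is decreasing in $\alpha$ and vanishes at $\alpha=0$.
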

\medskip

In addition to characterizing the optimal replacement strategy under the O-ring production function, Proposition~\ref{closedform} confirms our earlier general results: (i) the optimal strategy involves randomization (Proposition~\ref{prop: randomization}), and (ii) the middle worker is never replaced (Proposition~\ref{prop: x2=0}). Under this functional form, the principal chooses to fully utilize AI capacity, because the underutilization condition in Proposition~\ref{prop: useup} fails: $p_1^2 = \alpha^4$ and $p_0 p_3 = \alpha^3$, so $p_1^2 \leq p_0 p_3$.

The proposition indicates that as worker efforts become more complementary (smaller $\alpha$), the likelihood of replacing the front-most worker increases and that of the end-most worker decreases. In the extreme case when $\alpha \to 0$, both $x_1^*$ and $x_3^*$ converge to $0.5$. This is intuitive: when team efforts are highly complementary, a single worker's shirking is enough to compromise the success of the project, so the domino effect discussed in Section~\ref{sec: analysis} plays only a limited role in disciplining workers. When the domino effect is less potent, the wages of workers vary less by position, and the direct cost savings are similar whether the principal replaces the front-most or the end-most worker. The direct and indirect incentive costs become negligible for the same reason, so replacing either worker yields similar outcomes for the principal. Higher task complementarity therefore results in more homogeneous replacement, moving both the front-most and end-most workers' replacement probabilities toward $0.5$.

Turning to wages, as task complementarity increases (smaller $\alpha$), the wages of all workers decline. The intuition is straightforward: with greater complementarity, workers are intrinsically more motivated to exert effort, since shirking is more consequential, reducing the need for high-powered wage incentives.

Recall from Corollary~\ref{coro: wage_inequality} that optimal AI adoption reduces the intra-team wage gap. A natural follow-up question is how this inequality-reducing effect varies between teams with high- and low-complementarity tasks. Under the O-ring production function, in the absence of AI adoption, the wage gap is $\text{Gap}^{0} = \frac{c\alpha (1+\alpha)}{1-\alpha^3}$, and under the optimal AI strategy it is $\text{Gap}^{x^*} = \frac{\alpha c}{1-\alpha^2}$. Their ratio is:
$$
\frac{\text{Gap}^{x^*}}{\text{Gap}^{0}} = 1 - \frac{\alpha}{(1+\alpha)^2}.
$$
This ratio is always less than 1, confirming the reduction in intra-team wage inequality from AI adoption. At the same time, as task complementarity increases ($\alpha \to 0$), the ratio approaches 1, meaning that AI's potency in reducing the wage gap diminishes. Put differently, AI makes a bigger dent in wage inequality in teams where tasks are relatively independent and a single worker's shirking is less consequential for project success.

In \ref{sec: A_complementary}, we extend the analysis to compare payoffs across workers, in addition to comparing each worker's payoff before and after AI adoption. The analysis serves as a robustness check and shows that payoffs change in the direction discussed in Proposition~\ref{prop: monotonic_paoff}.

\subsection{Task-Based AI Substitution}
\label{sec:TaskBased}
Our analysis so far has focused on a scheme in which a worker is replaced by AI for his entire task. In real-world applications, however, AI often complements human labor by taking over only a subset of tasks, resulting in substitution at the task level rather than complete worker replacement. To capture this possibility, we extend the model to allow AI to substitute for a fraction of a worker's tasks. We show that, in equilibrium, the principal still prefers full worker-level replacement over partial task-level replacement.

Let $x_i$ denote the fraction of worker $i$'s tasks performed by AI, which proportionally reduces his effort cost to $(1-x_i)c$. For consistency with our main analysis, we assume an AI capacity of $1$:
\[
\bar{x} = \sum_{i=1}^n x_i \leq 1.
\]
Under this modification, each position in the network is occupied by a worker-AI pair. As in the main model, the success of the project depends on the number of positions \emph{effectively} exerting effort. If worker $i$ exerts effort, the entire pair is considered to exert effort. If the worker shirks, only the AI---which performs a fraction $x_i$ of the tasks---remains functional, so the pair contributes to effective effort with probability $x_i$. When $k$ worker-AI pairs effectively exert effort, the project succeeds with probability $p_k$. This adjustment allows the success probability to reflect partial AI substitution within each position.

We also modify the peer monitoring structure. The signal that the successor receives reflects the contribution of the worker-AI pair, rather than the worker alone. Because AI handles a fraction $x_i$ of worker $i$'s tasks, the successor receives a signal of shirking only when the worker-AI pair appears to be shirking, which occurs with probability $1-x_i$. Partial AI substitution thus partially conceals the worker's shirking, lowering the probability of detection.

Finally, we reconfigure the compensation structure. Under task-based AI adoption, compensation is no longer tied to a worker's entire role but is instead determined per unit of task performed. To make this distinction clear, we now use $w_i$ to denote the compensation per task carried out by worker $i$. If a fraction $x_i$ of his tasks is handled by AI, his total compensation upon project success is $(1 - x_i) w_i$. For any pair $(w, x)$ that induces joint effort among workers, the principal's expected compensation cost remains as in Equation~\eqref{Wxw}.

\paragraph{Worker incentives and optimal compensation.} The following result, a counterpart to Proposition~\ref{prop: compensation}, characterizes the optimal compensation scheme for a given task-based AI adoption strategy $x$.

\begin{proposition} \label{prop:TaskBasedCompensation}
Fix a task-based AI adoption strategy $x = (x_1, \ldots, x_n)$. The optimal compensation scheme conditional on $x$, denoted by $w^x = (w_1^x, \ldots, w_n^x)$, satisfies
\[
w_i^x = \frac{c}{p_n-\zeta_i^x}, \quad \text{for each $i \in \{1, \ldots, n\}$ with $x_i<1$,}
\]
where
\begin{align*}
\zeta_i^x=x_i p_n+\sum_{k=1}^{n-i} \left[ p_{n-k} \, x_{i+k}\prod_{j=i}^{i+k-1}\left(1-x_j\right) \right]  + p_{i-1}\prod_{j=i}^n\left(1-x_j\right).
\end{align*}
\end{proposition}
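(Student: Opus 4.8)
The plan is to follow the three-step template of Proposition~\ref{prop: compensation}, adapting each step to the task-based environment and isolating the one genuinely new computation: the success probability $\zeta_i^x$ that worker $i$ faces when he shirks. First I would write down worker $i$'s on-path and deviation payoffs. Because compensation is now per task and the worker is responsible for a fraction $1-x_i$ of his position's tasks, exerting effort yields $p_n(1-x_i)w_i - (1-x_i)c$, while shirking yields $\zeta_i^x (1-x_i) w_i$, where $\zeta_i^x$ is the probability of project success conditional on worker $i$ shirking and all others following the trigger profile $\sigma^*$. Setting these equal, the common factor $(1-x_i)$ cancels for every $i$ with $x_i<1$, and the indifference condition collapses to $p_n w_i^x - c = \zeta_i^x w_i^x$, i.e. $w_i^x = c/(p_n-\zeta_i^x)$. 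This already delivers the stated formula, so the substance of the proof reduces to verifying that $\zeta_i^x$ equals the claimed expression.

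The core step is therefore to compute $\zeta_i^x$ by tracking the domino cascade under probabilistic detection. The key modeling feature I would exploit is that a single random draw at each position governs both detection and effective contribution: when a worker shirks, with probability $x_j$ the AI's share makes the pair appear to exert effort (undetected) and the pair contributes to success exactly as a working pair would, while with probability $1-x_j$ the pair appears to shirk (detected) and fails to contribute. Consequently, the cascade triggered by worker $i$'s deviation propagates through consecutive downstream pairs so long as each one appears to shirk, and it halts at the first pair that appears to work---at which point the next worker observes effort, resumes the trigger strategy, and every remaining pair contributes. I would enumerate the outcomes by the halting index: (i) pair $i$ itself appears to work, with probability $x_i$, so all $n$ pairs contribute and success occurs with probability $p_n$; (ii) for $k=1,\ldots,n-i$, pairs $i,\ldots,i+k-1$ appear to shirk and pair $i+k$ appears to work, an event of probability $x_{i+k}\prod_{j=i}^{i+k-1}(1-x_j)$ leaving exactly $k$ non-contributing pairs and hence success probability $p_{n-k}$; and (iii) all pairs $i,\ldots,n$ appear to shirk, with probability $\prod_{j=i}^n(1-x_j)$, leaving the $i-1$ upstream pairs contributing and success probability $p_{i-1}$. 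Summing the products of these probabilities and success rates reproduces the three terms of the claimed $\zeta_i^x$.

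It then remains to discharge the equilibrium and optimality parts exactly as in Proposition~\ref{prop: compensation}: under $w^x$ the trigger profile is an equilibrium (on path each worker is indifferent by construction, and effort complementarity makes shirking strictly optimal off path once a predecessor's pair appears to shirk), no cheaper scheme can sustain $\sigma^*$ because every indifference constraint binds, and inducing joint effort through any other profile is more costly. The main obstacle I anticipate is the bookkeeping behind $\zeta_i^x$. One must argue that a concealed shirk simultaneously goes undetected \emph{and} counts toward success---so that the pair $i+k$ at which the cascade halts still contributes---while each appeared-to-shirk pair among $i,\ldots,i+k-1$ genuinely fails to contribute, and the resumption of full effort after the halt keeps every pair from $i+k+1$ through $n$ contributing. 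Establishing that exactly $k$ pairs fail in case (ii)---no fewer, because the coincidence of the detection and contribution events forces the appeared-to-shirk pairs to drop out, and no more, because the upstream pairs $1,\ldots,i-1$ act before the deviation and are unaffected---is the delicate point, and a clean statement of the coincidence-of-events assumption is what makes the enumeration collapse to the claimed closed form.
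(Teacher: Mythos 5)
Your proposal is correct and takes essentially the same approach as the paper: on-path indifference under the trigger profile yields $w_i^x = c/(p_n-\zeta_i^x)$ once the common $(1-x_i)$ factor cancels, and $\zeta_i^x$ is obtained by enumerating the shirking cascade according to the first position whose AI share simultaneously conceals the shirk and contributes to success---exactly the paper's case-by-case accounting. Incidentally, your incentive constraint $p_n(1-x_i)w_i-(1-x_i)c \geq \zeta_i^x(1-x_i)w_i$ is the internally consistent version: the paper's displayed constraint omits the $(1-x_i)$ on the shirking side, and as written that binding constraint would not deliver the stated wage formula.
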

\medskip

Proposition~\ref{prop:TaskBasedCompensation} characterizes the optimal compensation for workers whose tasks are not entirely performed by AI, as otherwise compensation is irrelevant. As in the main model, the optimal compensation scheme induces effort by supporting a trigger strategy profile as an equilibrium and ensures that workers remain indifferent on the equilibrium path. The compensation of worker $i$ decreases in $p_n - \zeta_i^x$, the difference in the probability of success when he exerts effort versus when he shirks. The explicit value of this difference is
\begin{align*}
p_n - \zeta_i^x &= (1-x_i) p_n -\sum_{k=1}^{n-i} \left[ p_{n-k} \, x_{i+k}\prod_{j=i}^{i+k-1}\left(1-x_j\right) \right]  -p_{i-1}\prod_{j=i}^n\left(1-x_j\right) \\
&= (1-x_i) \left(p_n -\sum_{k=1}^{n-i} \left[ p_{n-k} \, x_{i+k}\prod_{j=i+1}^{i+k-1}\left(1-x_j\right) \right]  -p_{i-1}\prod_{j=i+1}^n\left(1-x_j\right) \right).
\end{align*}
The key insight is that worker $i$'s per-task compensation decreases proportionally with $1 - x_i$, so his total compensation, $(1 - x_i) w_i^x$, remains unchanged regardless of how many tasks he retains. This is because AI replacement affects production and monitoring in parallel: while it lowers the worker's effort cost by offloading tasks, it also reduces the detectability of shirking. As a result, unless a worker's entire task is replaced, partial replacement---where AI handles only a fraction of his tasks---offers no clear advantage to the principal. This intuition motivates the formal analysis that follows.

\paragraph{Optimal task-based AI substitution.} For conciseness, we provide expressions for the wages in a three-worker sequential production team---the front-most, middle, and end-most workers, respectively:
\begin{align*}
w^x_1 &= \frac{c}{(1-x_1)\left(p_3 - \left[ x_2 p_2 + (1-x_2)x_3 p_1 + (1-x_2)(1-x_3)p_0\right] \right)}, \\
w^x_2 &= \frac{c}{(1-x_2)\left(p_3 - \left[ x_3 p_2 + (1-x_3) p_1 \right] \right)}, \\
w^x_3 &= \frac{c}{(1-x_3) \left(p_3 - p_2 \right)}.
\end{align*}
Restricting attention to replacement strategies in which no worker is fully replaced ($x_i < 1$ for each $i$), the principal's problem of minimizing expected compensation cost is
\begin{align*}
\min_{\substack{x_1,x_2,x_3 \in [0,1) \\ x_1+x_2+x_3 \leq 1}} \quad (x_1+x_2+x_3)\, c &+ \frac{p_3 c}{p_3- \left[ x_2 p_2 + (1-x_2)x_3 p_1 + (1-x_2)(1-x_3)p_0\right]} \\
&+ \frac{p_3 c}{p_3- \left[ x_3 p_2 + (1-x_3) p_1 \right]} + \frac{p_3 c}{p_3- p_2}.
\end{align*}

The solution to this restricted problem yields $(x_1, x_2, x_3) = (0,0,0)$, since the partial derivatives with respect to $x_1$, $x_2$, and $x_3$ are all strictly positive. To determine the globally optimal strategy, we compare this solution with the strategies that involve full replacement of a single worker: $(x_1, x_2, x_3) \in \{ (1,0,0), \, (0,1,0), \, (0,0,1)\}$. The optimal solution in the task-based setting thus reduces to selecting the best full-replacement strategy, which coincides with the optimal pure-replacement strategy from the main model. As established earlier, this involves fully replacing either the front-most or the end-most worker as long as $p_0>0$. If $p_0=0$, then not replacing anyone, $x=(0,0,0)$, is also optimal: the front-most worker can be induced to exert effort with no information rent (his expected payment equals the effort cost $c$), because if he shirks, all subsequent workers also shirk and the success probability is $p_0=0$. Proposition~\ref{prop:TaskBasedOptimal} formalizes this result.

\medskip
\begin{proposition}\label{prop:TaskBasedOptimal}
The optimal task-based AI adoption strategy involves fully replacing either the entire task of the front-most or the entire task of the end-most worker, when $p_0>0$. If instead $p_0=0$, then replacing no one, $(0,0,\dots,0)$, is also optimal.
\end{proposition}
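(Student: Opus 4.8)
The plan is to reduce the continuous minimization over the feasible simplex $\{(x_1,x_2,x_3): x_i \ge 0,\ \bar{x}=\sum_i x_i \le 1\}$ to a comparison of finitely many points, and then to invoke the pure-strategy characterization already established for the main model. First I would note that since $x_i \ge 0$ and $\bar{x}\le 1$, any coordinate equal to $1$ forces the other two to vanish; hence the only feasible points at which some task is \emph{fully} replaced are the three vertices $(1,0,0)$, $(0,1,0)$, and $(0,0,1)$. On the complementary region, where $x_i<1$ for all $i$, the expected cost equals the smooth objective displayed just before the proposition, obtained by inserting the wages of Proposition~\ref{prop:TaskBasedCompensation} into \eqref{Wxw} after the $(1-x_i)$ factors cancel.

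Second, I would show that this objective is strictly increasing in each coordinate, so that over the region $x_i<1$ it is minimized at the origin $(0,0,0)$. The coordinate $x_1$ enters only through the linear AI-cost term $(x_1+x_2+x_3)c$, giving partial derivative $c>0$; the coordinates $x_2$ and $x_3$ additionally enter through the denominators $p_3-B_1$ and $p_3-B_2$ of the convex cost fractions, where $B_1$ and $B_2$ denote the bracketed success probabilities. Since raising $x_2$ or $x_3$ shifts weight in these brackets toward the larger $p_k$, each $B$ increases, its denominator shrinks, and the associated fraction rises; thus every partial derivative is strictly positive and the interior infimum is attained at $(0,0,0)$.

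Third, I would compare $(0,0,0)$ against the three full-replacement vertices. The key observation is that these four candidate points coincide exactly with the pure strategies of the main model: the all-human team and the three certain single-worker replacements. At a vertex the automated position is an AI that always exerts effort while the remaining two workers form an undisrupted chain, so the task-based wage formulas collapse to the main-model expressions (for instance, at $(1,0,0)$ one recovers $w_2=c/(p_3-p_1)$ and $w_3=c/(p_3-p_2)$), and the vertex cost equals the corresponding main-model pure-strategy cost. Applying the pure-strategy analysis of the main model (Lemma~\ref{topbottom}), the smallest of the four values is achieved by fully replacing either the front-most or the end-most worker, which is the claimed conclusion.

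The hard part is conceptual rather than computational: the smooth interior objective is discontinuous at the three vertices, blowing up as $x_i \to 1$ because the replaced worker's per-task wage diverges, even though the true cost there is finite since that worker's compensation becomes irrelevant once his task is fully automated. The plan sidesteps this by never passing to the limit into a vertex—evaluating the vertex costs directly through the main-model correspondence—so that the global minimum is correctly identified as the least of four finite numbers rather than as a boundary value of the interior formula.
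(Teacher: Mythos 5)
Your proof is correct and follows essentially the same route as the paper's: show the expected cost is strictly increasing in each coordinate on the region where all $x_i<1$ (the paper does this for general $n$ via the factorization $p_n-\zeta_i^x=(1-x_i)(p_n-\Psi_i^x)$, while you use the explicit three-worker objective from the main text), conclude that $(0,\ldots,0)$ is the restricted minimizer, and then compare it against the full-replacement vertices using Lemma~\ref{topbottom}. One correction to your closing remark: the interior objective does \emph{not} blow up as $x_i\to 1$, since the diverging per-task wage $w_i^x$ enters the cost only through the bounded product $(1-x_i)\,p_3\,w_i^x = p_3c/\bigl(p_3-\Psi_i^x\bigr)$; the genuine subtlety is that this limit strictly exceeds the true vertex cost (where the fully replaced worker's compensation is irrelevant), so your choice to evaluate the vertices directly rather than by taking limits is indeed the right move, just for a slightly different reason than you state.
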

\medskip

Although this alternative model allows AI to handle a fraction of a worker's tasks, partial replacement does not materialize in equilibrium: the principal prefers full replacement of workers' tasks. This outcome aligns with what is commonly observed in practice, where workers' tasks tend to be displaced in their entirety, which is equivalent to replacing the worker himself.

Finally, we turn to the utilization of AI capacity and ask what differs in task-based replacement relative to the benchmark model. Consider a scenario in which the principal's total AI capacity is $\mathcal{A} \in (0, n]$, and the replacement strategy $x = (x_1, \ldots, x_n)$ satisfies $\sum_{i=1}^n x_i \leq \mathcal{A}$. As in the main model, fractional task-based replacement weakens peer monitoring and raises compensation costs. Consequently, the principal prefers to replace the entire task of some workers while letting the others carry out their tasks in full, rather than allocating AI fractionally. Corollary~\ref{cor:nonunit_capacity} provides the solution for a general AI capacity, following the same approach as in Proposition~\ref{prop:TaskBasedOptimal}.

\medskip
\begin{corollary} \label{cor:nonunit_capacity}
The optimal AI adoption strategy allows for no fractional task allocation and replaces all tasks of $\lfloor\mathcal{A} \rfloor$ workers. The replacement is done such that the remaining workers, whose tasks are not replaced, are consecutive to each other. If $p_0=0$ and $\lfloor\mathcal{A} \rfloor =1$, then it is also optimal to not replace anyone, $(0,0,\ldots,0)$.
\end{corollary}
\medskip

An immediate implication of the corollary is the possibility of underutilization of AI capacity. For example, if $\mathcal{A} < 1$, the optimal strategy is to refrain from adopting AI altogether. This finding reinforces the point that even under a limited AI capacity, the principal may, under some circumstances, choose not to exhaust it.

\subsection{Network Structure} \label{sec: network}

In the main part of the paper, we focus on an organization with a chain network structure, whose optimality is established in \cite{winter2010transparency}. The chain structure well represents organizations where tasks are carried out sequentially among team members. Naturally, for reasons we abstract from in this paper, teams may be organized in other structures. Another commonly observed structure is the {\em star} network, in which several team members carry out their tasks simultaneously, followed by a central---possibly higher-level---executive who carries out his own. In this section, we extend our framework to the star network and summarize the insights that are identical to and different from the chain environment.

Consider a team in which $n-1$ {\em peripheral} workers (indexed $1, \ldots, n-1$) each generate a signal about their contribution that is received by a {\em central} worker---possibly a manager---indexed $n$. Proposition~\ref{prop: star_optimum} demonstrates the robustness of two of our earlier insights: (i) a team member who is essential to maintaining the information flow between team members is less likely to be replaced, and (ii) a randomized replacement strategy may still be optimal.

\medskip
\begin{proposition} \label{prop: star_optimum}
In a star structure, all AI replacement strategies satisfying $\sum_{i=1}^{n-1}x_i=1$ and $x_n=0$ are optimal, if the condition $p_{n-1}^2-p_n p_{n-2}\leq 0$ is satisfied.
\end{proposition}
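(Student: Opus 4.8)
The plan is to adapt the analysis developed for the chain network to the star structure by first deriving the analogue of the optimal compensation scheme (Proposition~\ref{prop: compensation}) in this new information environment. In the star, the central worker $n$ observes \emph{all} $n-1$ peripheral workers, whereas each peripheral worker observes no one. This asymmetry is the crux: a peripheral worker $i$'s shirking is detected only by the central worker, so the relevant ``domino'' now collapses the central worker's effort alone, not a cascade through a sequence. I would first write down, for each worker $i$, the success rate $\zeta_i^x$ conditional on shirking and on not himself being replaced, mirroring the belief-weighting logic used after equation~\eqref{eq: zeta}. For a peripheral worker $i<n$, if the central worker is not replaced then $i$'s deviation triggers the central worker to shirk, whereas the other peripheral workers' efforts are unaffected; this pins down $w_i^x=c/(p_n-\zeta_i^x)$ exactly as in the chain case, with a suitably modified $\zeta_i^x$. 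For the central worker $n$, observing any shirker lets him shirk, but since no one observes him, $\zeta_n^x$ is again independent of the replacement strategy.

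Next I would establish the central worker's indispensability, the star analogue of Proposition~\ref{prop: x2=0}. The argument is the reallocation trick used there: starting from any strategy with $x_n>0$, shift that probability mass onto some peripheral worker and show the principal is weakly better off. Replacing the central worker severs the only monitoring link in the entire network, so the domino effect vanishes for \emph{all} peripheral workers at once, making their shirking maximally tempting and their wages maximally high; this is precisely the indirect incentive cost that the reallocation avoids. I expect this to yield $x_n^*=0$ cleanly, and combined with the capacity constraint this motivates looking at strategies with $\sum_{i=1}^{n-1}x_i=1$ and $x_n=0$.

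The substantive claim to verify is then twofold: that \emph{every} such strategy achieves the same cost (so the peripheral workers are interchangeable as far as the principal is concerned), and that this common cost is optimal precisely when $p_{n-1}^2-p_np_{n-2}\le 0$. The interchangeability should follow from the symmetry of the star: all peripheral workers occupy structurally identical positions, so permuting the replacement probabilities among them leaves $\sum_i[x_ic+(1-x_i)p_nw_i^x]$ invariant; I would verify this by checking that each peripheral $w_i^x$ depends on the strategy only through aggregate quantities (the total peripheral replacement probability and whether the central worker is replaced), not through $i$'s identity. The optimality condition is the star counterpart of the full-utilization condition in Proposition~\ref{prop: useup}: the inequality $p_{n-1}^2\le p_np_{n-2}$ is exactly the sign condition ensuring that the direct cost saving from replacing a peripheral worker dominates the direct incentive cost it induces, so that the principal wants to exhaust capacity on the periphery rather than hold slack. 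I would derive this by computing $\partial W_x/\partial x_i$ at the candidate full-utilization point and signing it using the stated inequality.

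The main obstacle I anticipate is the combinatorics of the belief-weighted success rates $\zeta_i^x$ in the star: because a peripheral worker conditions on not being replaced, he must form beliefs over which \emph{other} peripheral worker (or the central worker, ruled out once $x_n=0$) is replaced, and the resulting success probability depends on how many peripheral positions remain effective. Unlike the chain, where replacing a successor produces the tidy expression $p_{n-k+i}$, here I must track that a replaced peripheral worker still exerts effort (AI never shirks) while the deviating worker's shirk is observed by the central worker, who then shirks, pulling the count down. Getting this accounting right—and confirming it collapses to a strategy-independent expression after imposing $x_n=0$—is the delicate step on which the interchangeability and the final sign condition both rest. Once that expression is in hand, signing $\partial W_x/\partial x_i$ against $p_{n-1}^2-p_np_{n-2}$ should be a routine computation analogous to the one behind Proposition~\ref{prop: useup}.
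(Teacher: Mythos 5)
Your overall architecture is viable, but it is not the paper's, and the step you wave through---establishing $x_n^*=0$ by reallocating mass from the central worker to a peripheral one---is exactly where the work lies. Your justification is that the reallocation ``avoids the indirect incentive cost''; what you do not acknowledge is the countervailing force: the central worker is the \emph{highest-paid} member of the team, since $w_n^x = c/(p_n-p_{n-1})$ while every peripheral wage is $c/(p_n-\zeta_i^x)$ with $\zeta_i^x \le p_{n-1}$. Shifting mass off $x_n$ therefore sacrifices the largest direct cost saving available, which is the \emph{opposite} of the reallocation in Proposition~\ref{prop: x2=0}, where mass was shifted \emph{toward} the highest-paid (end-most) worker and the direct saving worked in the argument's favor; the analogy you invoke does not carry the step, only a computation does. (Your anticipated ``main obstacle,'' by contrast, is a non-issue: a replaced peripheral worker exerts effort, an unreplaced one does so on path, and neither is observed by anyone except the center, so a deviating peripheral worker's success rate depends only on the central worker's status, $\zeta_i^x = p_{n-2} + \frac{x_n}{1-x_i}\left(p_{n-1}-p_{n-2}\right)$.) The paper sidesteps this entirely by a different route: it first symmetrizes the periphery for fixed $x_n$ (Claim~\ref{evenly}), then proves full capacity utilization using the hypothesis $p_{n-1}^2 \le p_n p_{n-2}$ (Claim~\ref{useup_star})---this is where the hypothesis genuinely works, because with $x_n>0$ the relevant $\zeta$ can be arbitrarily close to $p_{n-1}$---and only then optimizes over the single parameter $x_n$ along the symmetric full-utilization family, showing the derivative in $x_n$ is positive, hence $x_n^*=0$.

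The good news is that your step 2 is true and can be completed, and doing so gives a cleaner proof than the paper's. With $A = p_n-p_{n-2}$, $B = p_{n-1}-p_{n-2}$, $C = p_n-p_{n-1}$, $u_i = 1-x_i$, and $t = x_n>0$, moving all of $x_n$ onto peripheral worker $1$ leaves direct costs unchanged and changes the expected wage bill by
\begin{align*}
W_x - W_{x'} \;=\; t\,p_n c\left[\sum_{i=1}^{n-1}\frac{u_i B}{A\left(u_i A - tB\right)} \;+\; \frac{1}{A} \;-\; \frac{1}{C}\right] \;\ge\; t\,p_n c\,\frac{B\bigl[(n-2)C - B\bigr]}{A^2 C} \;>\; 0,
\end{align*}
where the bound uses $u_iB/\bigl(A(u_iA-tB)\bigr) \ge B/A^2$ and $A=B+C$, and positivity is exactly complementarity $C>B$ (the same inequality $(n-2)(p_n-p_{n-1})>p_{n-1}-p_{n-2}$ that closes the paper's proof). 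Note what this does to the rest of your plan: once $x_n=0$, every peripheral $\zeta_i^x$ equals $p_{n-2}$ independently of all $x_j$, so the cost is \emph{linear} and strictly decreasing in each peripheral $x_i$; exhausting capacity and the interchangeability of all allocations with $\sum_{i<n}x_i=1$ follow at once, and the derivative you planned to sign in your step 4 equals $c\,p_{n-2}(p_{n-2}-p_n)/(p_n-p_{n-2})^2<0$ with no help from the stated condition. So on your route the hypothesis $p_{n-1}^2\le p_n p_{n-2}$ is never used: you would be proving the stronger unconditional statement, which of course implies the proposition as stated. It is the paper's route, not yours, that needs the hypothesis (to get full utilization before $x_n$ is pinned down), and your plan to invoke it at the full-utilization point misplaces it.
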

\medskip

The proposition shows that the central worker---the team member essential for maintaining the information flow---is never replaced, while the peripheral workers face a higher risk of being replaced with AI. Since the peripheral workers are identical, any AI replacement strategy in which their replacement probabilities sum to the AI capacity is optimal, provided the sufficient condition $p_{n-1}^2 - p_n p_{n-2} \leq 0$ holds. The key insights from our benchmark model, while shaped by the characteristics of the network, are thus not specific to the chain structure and carry forward to the star organization as well.

\subsection{Strategic AI} \label{sec: strategicAI}

So far, we have considered AI agents that, when deployed, automatically exert effort. We now extend the analysis to a setting in which AI's behavior can be programmed to condition its action on the actions of its predecessor, much as human workers do. Rather than exerting effort unconditionally, a programmable AI can follow a strategy that maps its predecessor's action into its own effort decision. This extension allows us to study the implications of an AI that replicates not only the productivity but also the strategic responsiveness of human agents within the sequential chain structure.

The strategy of an AI agent placed at position $i \in \{2, \ldots, n\}$, $\sigma_i^{AI}: \{0,1\} \mapsto \{0,1\}$, specifies its effort decision based on the signal it receives about its predecessor's action $e_{i-1}$. Since there is no predecessor at position $1$, the AI strategy in this position is $\sigma_1^{AI} \in \{0,1\}$. The principal's AI deployment decision now has two dimensions: (i) which human worker to replace, and (ii) how to program the AI to act strategically, depending on its position in the sequence. We solve for the principal's decision under the assumption that she can costlessly program AI.

Recall that the principal's objective is to induce all workers to exert effort while keeping compensation costs as low as possible. For each worker $i$, the required compensation depends on $\zeta_i$, the project's success probability from worker $i$'s perspective if he chooses to shirk. Minimizing compensation therefore requires minimizing $\zeta_i$, which is achieved by programming the AI to shirk whenever its signal indicates that its predecessor shirked, regardless of its position in the sequence. Hence, the principal's optimal programming of AI actions is
\[
\sigma_1^{*AI} = 1, \quad 
\sigma_i^{*AI}(e_{i-1}) = 
\begin{cases}
1 & \text{if } e_{i-1} = 1, \\ 
0 & \text{if } e_{i-1} = 0,
\end{cases}
\quad \text{for all } i \in \{2, \ldots, n\}.
\]
Given this optimal programming, the principal then determines which agent to replace. We provide the optimal replacement strategy in Proposition~\ref{prop: StrategicAI}.

\medskip
\begin{proposition} \label{prop: StrategicAI}
When the principal can strategically program AI to condition its effort on the actions of its predecessor, the optimal deployment of AI is $x=(0,0, \ldots, 1)$. That is, the principal replaces the end-most worker with certainty.
\end{proposition}
\medskip

With a strategic AI, the principal no longer relies on randomization and strictly chooses to replace the end-most worker. She does so because, in this environment, she is no longer subject to the direct or indirect incentive costs and is concerned only with direct cost savings, which leads her to replace the agent with the highest compensation. This result follows immediately in a world where AI can perfectly and costlessly mimic human strategic behavior; given these advantages, the principal will always use AI. In practice, however, programming AI to mimic humans is unlikely to be costless. Doing so amounts to endowing AI with both monitoring and decision-making capacities, which requires upfront and continuous investments. If the costs of programming AI were prohibitively high, we would revert to the benchmark setting and the benchmark deployment strategy.

These findings yield two key managerial insights. First, the optimal deployment of AI in human-AI teams requires attention to the strategic programming of AI---specifically, how it should respond to shirking by humans (or other AI agents). To our knowledge, such monitoring and adaptive response mechanisms are not yet embedded in existing AI systems. Second, and perhaps counterintuitively, a principal may sometimes prefer an AI that shirks occasionally---and is therefore less efficient---over one that always works unconditionally. For managers, this suggests that evaluating the benefits of automation requires accounting for such strategic shirking, since it reduces the total productive capacity of these technologies.

\subsection{Additional Discussions} \label{sec: AdditionalDiscussions}

\paragraph{Unitary capacity of AI.} \label{sec:UnitaryCapacity} Our main model assumes that the principal deploys a single AI agent within the organization. Suppose instead that she can deploy $\mathcal{A}$ AI agents, for some integer $\mathcal{A}$. The question is how these agents should be deployed and whether randomization remains optimal.

First, note that the optimal pure deployment strategy with $\mathcal{A}$ AI agents replaces $\mathcal{A}$ human workers while preserving the connection among the remaining human workers. This can be achieved by placing the $\mathcal{A}$ AI agents at the two ends of the production line (some at the beginning and some at the end), ensuring that no AI agent is placed in the middle, which would break the continuity of the human segment.

We now examine whether randomization can improve upon this optimal pure strategy. Consider the following two-step procedure that explicitly incorporates randomization:
\begin{enumerate}
    \item \textit{Construct a contiguous human segment.} Deploy $\mathcal{A}-1$ AI agents to replace workers such that the remaining human workers form a single connected segment of length $n-(\mathcal{A}-1)$.
    \item \textit{Randomized deployment of the final AI agent.} Replace the first worker in the connected human segment with probability $\rho$, and the last worker with probability $1-\rho$.
\end{enumerate}
From the baseline analysis, this mixed strategy strictly dominates the optimal pure replacement strategy.\footnote{This holds provided that the human segment constructed in step 1 contains at least three workers, i.e., $n-(\mathcal{A}-1) \geq 3$.} The optimality of randomized replacement therefore extends naturally to the case of multiple AI agents.

\paragraph{Limited capacity of AI.} \label{sec:LimitedCapacity} In the main model, we assume AI capacity is limited. This realistic assumption captures the resource constraints of any organization. Adoption and use of AI technologies are both costly, so while organizations are trying to integrate AI into multiple aspects of their work, they prioritize the use of AI for the tasks where the returns to the system are highest \citep{Stackpole2024BusinessAIOpportunities}.

In our main model, we purposely abstract away from the costs of AI adoption and use in order to focus on the deployment strategy itself and sharpen the question of how to prioritize among nearly identical workers when deploying AI.

These said, in some distant future costs of adopting and using AI may no longer be a significant constraint to achieve AI capacity, and a principal may deploy AI in a way to replace all workers. We also introduced a version of this scenario when we investigated task-based replacement in Section \ref{sec:TaskBased} by relaxing the capacity to take an arbitrary value $\bar{x} \leq n$.

If we follow our benchmark model, in the absence of a capacity constraint and AI adoption/use costs, trivially, the principal would choose to replace all workers. In this setting, all other insights we deliver---from wages, pay inequality, randomization of work---are no longer relevant.

\paragraph{Imperfect observability of AI utilization.} \label{sec:ImperfectObservability} In the baseline model, workers do not observe the realized outcome of the AI replacement strategy: the signal a worker receives about his predecessor's contribution is silent on whether that contribution came from a human or from an AI. This scenario is analogous to an AI system successfully passing a Turing test, where the output of the preceding task cannot be reliably attributed to either source. Recent evidence indicates that AI can today generate outputs that are ``statistically indistinguishable from a random human'' \citep[][abstract]{mei2024turing}. In many instances of AI integration into organizational workflows, workers may therefore fail to identify whether the task-relevant inputs they receive are from human colleagues or from AI systems. For example, call center agents increasingly rely on written guidance generated by AI \citep{li2024generative}, yet they often cannot determine whether the instructions were produced by a more experienced employee or by an algorithm. Likewise, software developers frequently collaborate with AI systems \citep{HoffmannGenAIWork}, receiving code suggestions and feedback that are not readily distinguishable from those produced by human peers. More behavioral evidence for the indistinguishability of human vs. AI teammate output comes from a controlled experiment by \cite{yan2025social}. The authors report that participants working in groups correctly identified which of their fellow group members were AI agents only 30.8$\%$ of the time, while misclassifying human teammates as AI in 27.6$\%$ of cases. These findings demonstrate that in team interactions, workers lack the ability to perfectly distinguish human collaborators from AI peers---a condition that aligns closely with the setting analyzed in this paper.

While both the literature and practice highlight that it is nearly impossible to distinguish a human input from an AI input, the setting we study can also arise {\em optimally}, motivated by a micro-model. If the principal could choose the work setting, she would optimally design one with imperfect observability. Thus, imperfect observability is a {\em feature} of the environment that arises optimally, rather than a bug.

To see this, consider an alternative timeline that facilitates observability: the principal first chooses a (potentially mixed) AI replacement strategy; the realized replacement is then publicly observed; the principal next chooses a compensation scheme; finally, workers choose their efforts and the outcome of the project is realized.

In this setting, the principal's payoff associated with any replacement strategy $x$ is linear in pure strategies. Moreover, as we have shown, there are two optimal pure strategies when $p_0>0$: replacing the front-most and replacing the end-most worker. Any randomization between these two is therefore also optimal. When $p_0=0$, not replacing any worker is equally optimal, so randomization between the two replacement strategies and no replacement is optimal as well. Since payoffs from randomization equal those from pure replacement, the principal has no strict preference for randomization. These observations together suggest that imperfect observability benefits the principal.

Another implication is that, under perfect observability, strategic uncertainty---shaping workers' beliefs about replacement---is no longer feasible. Underutilization of AI resources is therefore no longer optimal, and the principal fully utilizes the AI capacity as long as $p_0>0$. When $p_0=0$, not replacing any worker is also optimal, alongside fully utilizing AI capacity to replace the front-most or end-most worker, so underutilization of AI capacity may arise in equilibrium.

\paragraph{Heterogeneity of team members.} \label{sec: heterogeneity} In the main model, we purposefully abstract from creating asymmetries among workers other than their position in the network, and we assume that each task contributes identically to the project's success. We do so to demonstrate that informational position alone is sufficient to generate differential risks of AI replacement and wage outcomes. Team members may, however, vary in other dimensions, for instance in their skills or the cost of effort, and the contributions of tasks to project success may vary by position. Without building a full-fledged model, we can argue that the replacement probabilities suggested by our propositions may look different in such settings; in particular, the middle worker may face a non-zero probability of replacement if he is more costly or less skilled than his peers.

\section{Conclusion} \label{sec: conclusion}

Owing to the rapid growth of robotization and AI, organizations are going through a profound transformation about what work is and how work is done. New AI technologies have transformed how employees interact with each other as well, as their tasks are partially or fully carried out by the AI agents. Given the scale and scope of this transformation, it is essential to ask how work and groups of employees should be re-optimized around these new technologies.

There is a growing literature on how machines alter worker and business performance \citep{otis2024uneven} and human decision-making \citep{de2023your, boyaci2024human, LLLZ2024, Burtch2025police}. Much of the existing literature focuses on individual-level performance effects of the workers who adopt AI \citep[e.g.,][]{dell2023super} or the labor displacement and wage changes due to AI \citep[e.g.,][]{chen2025displacement, humlum2025large}. 
We contribute to these understudied areas by focusing on the effects of AI agents on worker productivity from a team perspective, explicitly modeling how adoption of AI alters the motivation of the workers and the incentives set by the principal.
We demonstrate how a principal integrates AI in a team previously optimized for human employees, and characterize the replacement risk and resulting wage changes for each team in the production sequence based on their position.

First, our analysis reveals that optimal AI adoption involves stochastic rather than deterministic labor force replacement.  AI resources are best utilized to complement the workers' roles, possibly in a way that alters or reduces the workload, while maintaining the critical benefits of human worker input. 
This finding aligns with recent theoretical work emphasizing the importance of human-AI complementarity in maximizing organizational productivity \citep{brynjolfsson2017can}.   

Second, we demonstrate that the likelihood of AI replacement varies by a worker's position within the team's information network. To maintain essential information flows and leverage peer monitoring effectively, it is optimal to minimize the replacement risk for the middle worker, while setting positive replacement probabilities for the front-most and end-most team members. The organizational position relates to the recent literature emphasizing the critical role of information and communication in teams for organizational efficiency \citep{garicano2000hierarchies, angelucci2017motivating, matouschek2025organizing}. 

Third, we investigate the wage implications of optimal AI integration within production teams. In his analysis of the AI effects on wages and pay inequality, \cite{acemoglu2025simple} states that while theoretically it is possible for AI to reduce wages and exacerbate inequality, there is little evidence in practice for these predictions. Our findings provide additional theoretical predictions relevant to this domain; by suggesting that the strategic adoption of AI does not necessitate wage reductions; instead, managers should anticipate comparable or higher wages post AI adoption relative to the wages before AI adoption. Importantly, while existing wage hierarchies remain unchanged following AI adoption, the intra-team wage gap is significantly reduced as lower-paid workers receive wage increases. Collectively, these insights challenge prevailing assumptions and provide practical managerial guidance for navigating AI integration. Table~\ref{tab:insights} summarizes these key managerial prescriptions.

\medskip
\begin{table}[t!]
    \centering
        \caption{Summary of Key Managerial Insights}
        \begin{adjustbox}{width=\textwidth}
    \begin{tabular}{l|l}
\toprule
Question  & Findings from our setting  \\
\hline
How should a manager optimally integrate AI  & AI deployment to replace workers should be  randomized, \\
to a human team?  & e.g., across different projects or work shifts. \\

& \\
Which workers in the team are more & The front-most and the end-most workers face \\
likely to be replaced with AI?  &  higher replacement risk.  \\
&  The middle worker should not be replaced. \\
  & \\

How does AI adoption affect worker wages? & Wages increase for all except the end-most worker,  \\   
& whose wage remains unchanged.  \\
& \\

How does the optimal deployment of AI & Intra-team wage gap  declines following \\   
impact intra-team wage inequality?  &  optimal AI adoption. \\
& \\

When is AI's wage inequality reducing effect  & AI's wage inequality reducing effect is more potent  \\   
more potent? & in teams with lower complementarity.  \\
& \\

Is it better for AI to carry out a fraction of a & No,  the principal prefers to replace a worker  \\   
worker's tasks, rather than fully substituting him? &  rather than a fraction of tasks. \\
          \bottomrule
    \end{tabular}
\end{adjustbox}

    \label{tab:insights}
\end{table}


While AI stands to offer productivity gains, our findings highlight that the integration of AI into human systems must be carefully designed to maintain the information flows between the team members. By doing so, organizations can benefit from replacing workers with AI while minimizing the negative externalities on the non-replaced members of the team.

\bigskip

\bibliographystyle{apalike}
\bibliography{lit}

\setstretch{1} 
\clearpage
\newpage
\appendix

\setstretch{1.3}

\setcounter{page}{1}
\setcounter{table}{0}
\setcounter{figure}{0}
\pagenumbering{arabic}
\renewcommand*{\thepage}{A\arabic{page}}

\renewcommand{\thesection}{Appendix \Alph{section}}
\renewcommand{\thesubsection}{\Alph{section}.\arabic{subsection}}

\renewcommand*{\thetable}{\thesection\arabic{table}}
\renewcommand*{\thefigure}{\thesection\arabic{figure}}

\counterwithin{figure}{section}
\counterwithin{table}{section}

\begin{center}
    \Large{\bf Appendix}
\end{center}

\section{Supporting Statements}
\label{sec:AppendixA}
This appendix section presents supporting statements omitted from the main text for brevity, along with their corresponding proofs.

\subsection{Existence}

We first establish the existence of the optimal AI replacement.

\begin{lemma}
The optimal replacement strategy exists.\label{lem:existence}
\end{lemma}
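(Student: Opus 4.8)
The plan is to establish existence via a compactness-and-continuity argument, treating the principal's problem as the minimization of a cost function over a suitable feasible set. First I would observe that the principal's choice of replacement strategy $x$ lives in the simplex $\Delta \equiv \{ x \in \mathbb{R}^n : x_i \geq 0 \text{ for all } i, \ \sum_{i=1}^n x_i \leq 1 \}$, which is a compact subset of $\mathbb{R}^n$. The natural strategy is then to argue that the principal's objective --- the minimized compensation cost $W_x$ defined in \eqref{eqn:expectedcompensationcost} --- attains its infimum on this compact set, invoking the Weierstrass extreme value theorem. The key enabler here is Proposition~\ref{prop: compensation}, which already pins down, for any fixed $x$, the optimal compensation scheme $w^x$ in closed form; this collapses the two-stage problem (choose $x$, then choose $w$) into a single minimization over $x$ alone, so that existence of the overall optimum reduces to existence of a minimizer of the scalar function $x \mapsto W_x$.

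The main obstacle, and the crux of the argument, is that $W_x$ is \emph{not} continuous on all of $\Delta$: the formula $w_i^x = c/(p_n - \zeta_i^x)$ is only valid when $x_i < 1$, and the wage of a worker replaced with certainty is not merely discontinuous but undefined / irrelevant. Thus I would handle the boundary where some $x_i = 1$ separately. The plan is to partition $\Delta$ according to which (if any) worker is replaced with certainty. On the interior-type region where $x_i < 1$ for all $i$, the denominators $p_n - \zeta_i^x$ are bounded away from zero --- since $\zeta_i^x$ is a convex combination of success probabilities strictly below $p_n$ (because all workers exerting effort strictly dominates any configuration with shirking, by the strict monotonicity of $p_k$) --- so each $w_i^x$ and hence $W_x$ is continuous there. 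On any face where exactly one coordinate equals $1$ (note $\bar{x}\le 1$ precludes two coordinates being $1$), the cost is given by replacing that worker at cost $c$ and optimally compensating the rest, which is again a continuous function of the remaining coordinates; moreover this face corresponds precisely to the pure full-replacement strategies analyzed in Lemma~\ref{topbottom}.

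The cleanest way to assemble these pieces is to show that the infimum of $W_x$ over $\Delta$ is attained by comparing the infimum over the open region $\{x_i < 1 \ \forall i\}$ with the finitely many boundary faces. Concretely, I would argue that as $x$ approaches a face where $x_i \to 1$, the cost $W_x$ converges to the well-defined cost of the corresponding full-replacement strategy (one checks the limit of the $w^x_j$ for $j\ne i$ exists and is finite, while the $x_i c + (1-x_i) p_n w_i^x$ term tends to $c$); consequently the closure of $W_x$ over the whole simplex is a lower-semicontinuous function that agrees with the genuine cost on each face, and lower-semicontinuity on a compact set suffices for the infimum to be attained. Since only finitely many configurations of ``which coordinate equals $1$'' arise, one minimizes a continuous function on each of finitely many compact pieces and takes the smallest value, which must be achieved. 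The one calculation worth flagging --- though I would not grind through it --- is verifying that the limiting values of the wages $w^x_j$ along sequences $x_i \to 1$ match the full-replacement wages, which follows from continuity of $\zeta_j^x$ in $x$ away from its own unit coordinate. This yields the existence of a minimizing $x^*$ and, via $w^{x^*}$, of the optimal pair.
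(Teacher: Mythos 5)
Your proposal is correct and follows essentially the same route as the paper's proof: compactness of the feasible set $\{x : x_i \geq 0, \ \sum_i x_i \leq 1\}$ plus continuity of $x \mapsto W_x$, with the only delicate point being the vertices $x_i \to 1$, which you resolve exactly as the paper does by noting $\zeta_i^x$ stays bounded in $[p_{i-1},p_{n-1}]$ so that $(1-x_i)p_n w_i^x \to 0$ and the cost converges to the full-replacement value. The paper packages this boundary check as a squeeze-theorem argument establishing genuine continuity at each vertex rather than your lower-semicontinuity framing, but the substance is identical.
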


\begin{proof}[\textbf{Proof of Lemma~\ref{lem:existence}}]  \hfill

\noindent We  first prove that the principal's payoff is continuous with respect to $x$. If $x_i<1$ for all $i$, then
\[
\frac{W_x}{c}=\sum_{i=1}^n x_i+\sum_{i=1}^n \frac{\left(1-x_i\right) p_n}{ p_n-\zeta_i^x}.
\] Let $\vec{x}_j$ be a vector whose $j$th element is 1 and all other elements are zeros, then
\[
\lim_{x\rightarrow \vec{x}_j} \frac{W_x}{c}=1+\sum_{i\neq j}\frac{ p_n}{ p_n-\zeta_i^{\vec{x}_j}}+\lim_{x\rightarrow \vec{x}_j} \frac{\left(1-x_j\right) p_n}{ p_n-\zeta_j^x}.
\] 
As can be seen in equation~\eqref{eq: Ai}  when $x_j=1$, the zero denominator makes $\zeta_j^x$ not well-defined. \eqref{eq: Ai} implies
\[
p_{j-1}\leq \zeta_j^x\leq p_{n-1}
\] and thus
\[
\frac{\left(1-x_j\right) p_n}{ p_n-p_{j-1}}\leq \frac{\left(1-x_j\right) p_n}{ p_n-\zeta_j^x}\leq \frac{\left(1-x_j\right) p_n}{ p_n-p_{n-1}}.
\]  Note that
\begin{gather*}
\lim_{x\rightarrow \vec{x}_j}\frac{\left(1-x_j\right) p_n}{ p_n-p_{j-1}}=0\\
\lim_{x\rightarrow \vec{x}_j}\frac{\left(1-x_j\right) p_n}{ p_n-p_{n-1}}=0.
\end{gather*} By squeeze theorem, 
\[
\lim_{x\rightarrow \vec{x}_j} \frac{\left(1-x_j\right) p_n}{ p_n-\zeta_j^x}=0. 
\] So
\[
\lim_{x\rightarrow \vec{x}_j} \frac{W_x}{c}=1+\sum_{i\neq j}\frac{ p_n}{ p_n-\zeta_i^{\vec{x}_j}},
\] which is indeed the principal's payment when $x=\vec{x}_j$. 

Moreover, note that the set of  feasible replacement strategies, 
\[
\left\{\left(x_1, \cdots, x_n\right) \vert x_i\geq 0, \forall i=1, \cdots, n, \sum_{i=1}^n x_i \leq 1\right\},
\] is a compact set. Therefore minimum of a continuous function on a compact set exists.
\end{proof}

\subsection{The Optimal Pure Replacement Strategy}
The next statement characterizes the optimal pure replacement strategy.

\begin{lemma}
The optimal pure replacement strategies involve replacing either the front-most or the end-most worker. That is, $(1,0,\dots,0)$ and $(0,\dots,0,1)$ are optimal when $p_0>0$. If instead $p_0=0$, then replacing no one, $(0,0,\dots,0)$, is also optimal.\label{topbottom}
\end{lemma}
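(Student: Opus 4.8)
The plan is to reduce the comparison to a one-dimensional problem indexed by the replaced position, using Proposition~\ref{prop: compensation} to pin down the induced wages, and then to show that the principal's cost, viewed as a function of which position is replaced, is ``U-shaped'': maximized in the middle and minimized at the two ends. Note first that, since each entry of a pure strategy lies in $\{0,1\}$ and $\bar x\le 1$, the only pure strategies are ``replace exactly one worker $j$'' or ``replace nobody,'' so it suffices to range over these.

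First I would specialize \eqref{eq: zeta} to the pure strategy with $x_j=1$ and $x_k=0$ for $k\neq j$. A short computation gives, for the non-replaced workers, $\zeta_i^{x}=p_{n-j+i}$ when $i<j$ (a successor AI keeps every later position working, so $i$'s shirking costs only the block $i,\dots,j-1$) and $\zeta_i^{x}=p_{i-1}$ when $i>j$ (a predecessor AI does nothing to halt the downstream domino). Substituting into \eqref{eq: w} and then into \eqref{eqn:expectedcompensationcost}, and writing $f(\ell)\equiv \frac{1}{p_n-p_\ell}$, I obtain the closed form
\begin{equation*}
\frac{W_x}{c}=1+p_n\,g(j),\qquad g(j)\equiv\sum_{m=n-j+1}^{n-1}f(m)+\sum_{\ell=j}^{n-1}f(\ell),
\end{equation*}
where the first sum (empty when $j=1$) collects the predecessors of $j$ and the second (empty when $j=n$) collects its successors. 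Because the AI-cost term contributes the same constant $c$ for every single-replacement strategy, minimizing $W_x$ among them is equivalent to minimizing $g(j)$.

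The key step is a telescoping comparison of adjacent positions: subtracting consecutive values yields the clean identity $g(j+1)-g(j)=f(n-j)-f(j)$. Since $p_\ell$ is strictly increasing in $\ell$, $f$ is strictly increasing, so this difference is positive exactly when $n-j>j$ and negative exactly when $n-j<j$. Hence $g$ strictly increases while $j<n/2$ and strictly decreases while $j>n/2$: it is unimodal with an interior peak, so its minimum over $j\in\{1,\dots,n\}$ is attained at the endpoints. A direct evaluation shows the endpoints agree, $g(1)=g(n)=\sum_{\ell=1}^{n-1}f(\ell)$, identifying $(1,0,\dots,0)$ and $(0,\dots,0,1)$ as the two optimal single-replacement strategies. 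It remains to rule out no replacement: there $\zeta_i^{x}=p_{i-1}$ for all $i$, so $W_0/c=p_n\sum_{\ell=0}^{n-1}f(\ell)$, whence $W_0/c-W_{(1,0,\dots,0)}/c=p_n f(0)-1=\frac{p_0}{p_n-p_0}>0$, so replacing an end worker strictly dominates doing nothing.

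I expect the main obstacle to be bookkeeping rather than conceptual: correctly deriving the two branches of $\zeta_i^x$ from \eqref{eq: zeta} and tracking the index ranges so that the telescoping identity $g(j+1)-g(j)=f(n-j)-f(j)$ comes out clean, including the degenerate empty-sum cases at $j=1$ and $j=n$. It is worth noting that this argument uses only monotonicity of $(p_k)$ through $f$; the complementarity (convexity) assumption is not invoked here, as it is already embedded in the validity of the trigger-strategy wage characterization imported from Proposition~\ref{prop: compensation}.
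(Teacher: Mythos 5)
Your proof is correct and takes essentially the same route as the paper: both compute the cost of each single-replacement pure strategy in closed form, compare adjacent positions to get the difference $f(n-j)-f(j)$ (the paper's $\frac{W_{j+1}-W_j}{c}=\frac{p_n}{p_n-p_{n-j}}-\frac{p_n}{p_n-p_j}$), conclude unimodality with the minimum at the two ends, and then rule out no replacement via the same $\frac{p_0}{p_n-p_0}>0$ comparison. The only cosmetic difference is that the paper verifies $W_1=W_n$ through the general symmetry $W_{n+1-i}=W_i$, while you evaluate the two endpoints directly; both observations are equivalent here.
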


\begin{proof}[\textbf{Proof of Lemma~\ref{topbottom}}] \hfill

\noindent
With slight abuse of notation, denote the principal's expected compensation cost under the pure replacement strategy in which worker $i \in \{1,\cdots, n\}$ is replaced with probability 1 by $W_i$ and that under the pure replacement strategy in which no worker is replaced by $W_\emptyset$. We will show that:
\begin{align*}
W_1=W_n = min \{ W_\emptyset, W_1, W_2, \cdots, W_n\}.    
\end{align*}
We have:
\begin{align*}
\frac{W_\emptyset}{c}&=\sum_{i=1}^n \frac{ p_n}{ p_n-p_{i-1}}  \\
\frac{W_i}{c} &=1+\sum_{k=1}^{i-1} \frac{ p_n}{ p_n-p_{n+k-i}}+\sum_{k=i+1}^{n} \frac{ p_n}{ p_n-p_{k-1}}, \text{ for each } i \in \{1,\cdots, n\}. 
\end{align*}
Then:
\[
\frac{W_{i+1}-W_i}{c}=\frac{ p_n}{ p_n-p_{n-i}}-\frac{ p_n}{ p_n-p_i}, \text{ for each } i \in \{1,\cdots, n-1\}.
\]
Therefore:
\begin{equation*}
W_{i+1} \begin{cases}
> W_i, & \text{if } i < n/2, \\
= W_i, & \text{if } i = n/2, \\
< W_i, & \text{if } i > n/2,
\end{cases} \text{ for each } i \in \{1,\cdots, n-1\}.
\end{equation*}
Note that \begin{align*}
\frac{W_{n+1-i}}{c}&=1+\sum_{k=1}^{n+1-i-1} \frac{ p_n}{ p_n-p_{n+k-\left(n+1-i\right)}}+\sum_{k=n+1-i+1}^{n} \frac{ p_n}{ p_n-p_{k-1}}\\
&=1+\sum_{k=i+1}^{n} \frac{ p_n}{ p_n-p_{k-1}}+\sum_{k=1}^{i-1} \frac{ p_n}{ p_n-p_{n+k-i}}.\\
&=\frac{W_i}{c}
\end{align*} This implies that $W_1=W_n$. 
Moreover, when $p_0>0$ we have 
\[
\frac{W_1}{c}=1+\sum_{i=2}^{n} \frac{ p_n}{ p_n-p_{i-1}}<\frac{ p_n}{ p_n-p_0}+\sum_{i=2}^{n} \frac{ p_n}{ p_n-p_{i-1}}=\frac{W_\emptyset}{c}.
\]
and when $p_0=0$, we have:
\[
\frac{W_1}{c}=1+\sum_{i=2}^{n} \frac{ p_n}{ p_n-p_{i-1}}=\frac{ p_n}{ p_n-p_0}+\sum_{i=2}^{n} \frac{ p_n}{ p_n-p_{i-1}}=\frac{W_\emptyset}{c}.
\]
This concludes the proof.
\end{proof}

\subsection{Possibility of Altering Wage Hierarchy}
\label{sec:Appendix_Supporting_WageHierarchy}
The next result shows that (suboptimal) AI replacement can alter the wage hierarchy within the production team compared to the case where no replacement occurs. 
\begin{lemma}
For any $i<j<n$, there exists a replacement strategy $x$ such that $w^x_i>w^x_j$. \label{lemma:AlteringWageHierarchy}
\end{lemma}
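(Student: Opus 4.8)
The plan is to reduce the wage comparison to a comparison of shirking success rates and then exhibit a simple one-parameter family of strategies that drives worker $i$'s temptation above worker $j$'s. By Proposition~\ref{prop: compensation}, $w^x_\ell = c/(p_n - \zeta_\ell^x)$ for every $\ell$ with $x_\ell < 1$, and since $\zeta_\ell^x \le p_{n-1} < p_n$ the denominator is positive, so $w^x_\ell$ is strictly increasing in $\zeta_\ell^x$. Hence it suffices to construct a feasible $x$ with $x_i,x_j<1$ for which $\zeta_i^x > \zeta_j^x$.

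The construction I would use places all of the AI capacity on the immediate successor of worker $i$: set $x_{i+1}=t$ for a single parameter $t\in(0,1)$ and $x_\ell=0$ for all $\ell\ne i+1$. This is feasible, since $\bar x = t\le 1$, and worker $i+1$ exists with $i+1\le n-1$ because $i<j<n$. The economic content is that replacing worker $i$'s successor by AI with high probability makes $i$'s shirking nearly inconsequential---the AI behind him keeps the downstream chain alive---so his required wage is pushed up, whereas worker $j$'s downstream chain is left untouched. Using the rewritten expression \eqref{eq: zeta}, I would compute $\zeta_i^x = p_{i-1} + t\,(p_{n-1}-p_{i-1})$: the only replaced successor of $i$ is worker $i+1$, whose contribution term is $p_{n-(i+1)+i}-p_{i-1}=p_{n-1}-p_{i-1}$. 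For worker $j$, no successor (no $\ell>j$) is replaced, so the sum in $\zeta_j^x$ vanishes and $\zeta_j^x=p_{j-1}$; this holds both when $j=i+1$ (then $x_j=t$, but the empty successor-sum still yields $p_{j-1}$) and when $j>i+1$ (then $x_j=0$).

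It then remains to choose $t$ so that $p_{i-1}+t(p_{n-1}-p_{i-1})>p_{j-1}$, i.e. $t>(p_{j-1}-p_{i-1})/(p_{n-1}-p_{i-1})$. Because $p$ is strictly increasing and $i<j<n$, we have $p_{i-1}<p_{j-1}<p_{n-1}$, so this threshold lies strictly in $(0,1)$ and any $t$ between it and $1$ is admissible. For such $t$ we get $\zeta_i^x>\zeta_j^x$, hence $w_i^x>w_j^x$, as required.

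I do not anticipate a genuine obstacle here; the only points needing care are the bookkeeping in $\zeta_j^x$ for the boundary case $j=i+1$ (verifying that the replaced worker being $j$ himself does \emph{not} inflate $\zeta_j^x$, which it does not, since $\zeta_j^x$ depends on $x_j$ only through successors of $j$, all of which carry zero replacement probability), and checking that the threshold for $t$ is strictly below $1$---which is precisely where the hypothesis $j<n$ (so $p_{j-1}<p_{n-1}$) is used.
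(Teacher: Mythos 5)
Your proof is correct and takes essentially the same route as the paper: both constructions load the AI capacity onto worker $i$'s immediate successor so that, via equation~\eqref{eq: zeta}, $\zeta_i^x$ is pushed above $\zeta_j^x = p_{j-1}$ while worker $j$'s successors remain unreplaced. The only cosmetic difference is that the paper sets $x_i + x_{i+1} = 1$ with $x_i \in (0,1)$, which makes $\zeta_i^x = p_{n-1}$ exactly, whereas you set $x_i = 0$ and take $x_{i+1} = t$ above an explicit threshold in $(0,1)$---a minor variation requiring your extra threshold computation but yielding the same conclusion.
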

\begin{proof}[\textbf{Proof of Lemma~\ref{lemma:AlteringWageHierarchy}}] \hfill 

\noindent Consider the replacement strategy with $x_i+x_{i+1}=1$ and $x_i \in \left(0,1\right)$. Then from equation~\eqref{eq: zeta} we have $\zeta^x_i=p_{n-1}$ and $\zeta^x_j=p_{j-1}$, which implies $w^x_i>w^x_j$.
\end{proof}

\bigskip 

\section{Proofs of Results}
This appendix section provides the proofs of the results stated in Section~\ref{sec: analysis} and Section~\ref{sec: extensions}.

\subsection{Proofs of Section~\ref{sec: analysis}}

\begin{proof}[\textbf{Proof of Proposition~\ref{prop: compensation}}]\hfill \\
\noindent The proof proceeds in three steps. We show the following in sequence:
\begin{enumerate}
\item Trigger strategy profile $\sigma^*$ constitutes an equilibrium under  $w^x$. 
\item  No alternative compensation scheme with a lower expected
cost can support $\sigma^*$ as an equilibrium.  
\item  Supporting any other strategy profile, apart
from the trigger strategy profile, that results in joint effort, as an equilibrium necessarily incurs a higher
expected compensation cost for the principal.
\end{enumerate}

\noindent \underline{Step 1.}
We will prove that, under the compensation scheme $w^x$, the trigger strategy is a best response for each worker $i \in \{1, \ldots, n\}$ given that all other workers follow the trigger strategy.

If worker $i$ observes a signal indicating that $e_{i-1} = 1$, he believes that nobody has deviated. In this case, choosing effort is a best response for worker $i$ if and only if: 
\begin{align}
p_n \, w_i - c \geq \zeta_i^x \,w_i . \label{eqn1} 
\end{align}
The left-hand side is the expected payoff for worker $i$ when he exerts effort, which leads all subsequent workers to exert effort, resulting in a $p_n$ probability of success. The right-hand side is his expected payoff when he shirks, where $\zeta_i^x$ is the probability of success from his perspective given that the other workers follow the trigger strategy profile. 

Note that $w^x_i$ is defined so that condition~\eqref{eqn1}, a necessary condition for the trigger strategy to comprise an equilibrium, holds with equality: $w_i^x = \frac{c}{p_n - \zeta_i}$ (equation~\eqref{eq: w}).  Therefore worker $i$ has no incentive to deviate from the trigger strategy after observing  $e_{i-1} = 1$.

If worker $i$ observes a signal indicating that $e_{i-1} = 0$, we are off the equilibrium path. Worker $i$ then believes that all the predecessors of worker $i-1$ exerted effort and worker $i-1$ is the first to shirk.\footnote{All subsequent arguments would analogously hold even if worker $i$ has different out-of-path beliefs regarding the behavior of his predecessors.} 
In this case, shirking ($e_i =0$) is a best response for worker $i$ if and only if:
\begin{equation}
p_{n-1} w_i^x -c \leq \hat{\zeta}_i^x \, w_i^x. \label{eqn2}
\end{equation}
The left-hand side is the expected payoff of worker $i$ when he exerts effort. By exerting effort, worker $i$ ensures that all his successors will also exert effort, resulting in a probability of success of $p_{n-1}$, as all workers other than $i-1$ have exerted effort. The right-hand side denotes the expected payoff for worker $i$ when he shirks, where $\hat{\zeta}_i^x$ is the probability of success when worker $i$ shirks, given that he has observed a signal indicating that $e_{i-1} = 0$ and believes that all workers preceding $i-1$ have exerted effort, and all his successors will adhere to trigger strategies. That is:
\begin{align*}
\hat{\zeta}_i^x &= p_{i-2}  \sum\limits_{k=1}^{i-2} \frac{x_k}{1 - x_i -x_{i-1}} \, + \, \sum\limits_{k=i+1}^n \frac{x_k}{1 - x_i-x_{i-1}} p_{n-1-k+i} \, + \, p_{i-2} \, \frac{1 - \bar{x}}{1-x_i-x_{i-1}}  \\
&= p_{i-2} \frac{1- \sum\limits_{k=i-1}^n x_k}{1-x_i -x_{i-1}} + \sum\limits_{k=i+1}^n \frac{x_k}{1 - x_i-x_{i-1}} p_{n-1-k+i} 
\end{align*}
More precisely, after observing $e_{i-1} = 0$, worker $i$ knows that neither he nor worker $i-1$ has been replaced. Therefore, the probability of worker $k$ being replaced (for $k \neq i, i-1$) is given by $\frac{x_k}{1 - x_i - x_{i-1}}$, while the probability of no replacement from worker $i$'s perspective is $\frac{1 - \bar{x}}{1 - x_i - x_{i-1}}$.
When the replaced worker is a predecessor of worker $i$ (and $i-1$) (i.e., when $k < i-1$), or when there is no AI replacement, all the successors of worker $i$ will shirk upon observing that worker $i$ shirks, resulting in a success probability of $p_{i-2}$. When the replaced worker is a successor of worker $i$ (i.e., when $k > i$), all the workers between $i$ and $k$ will shirk, and all the successors of $k$ will exert effort, resulting in a success probability of $p_{n-1-k+i}$.

We already know that $ w_i^x ( p_n- \zeta_i^x) = c$. Therefore showing that $ p_n- \zeta_i^x \geq p_{n-1} - \hat{\zeta}_i^x$ would suffice to show that condition \eqref{eqn2} holds, and hence each worker finds it optimal to shirk upon observing that his immediate predecessor shirks. 
Therefore, we would like to show that 
$$ p_n -p_{n-1} \geq \zeta_i^x- \hat{\zeta}_i^x$$
Note that 
\begin{align*}
 \zeta_i^x- \hat{\zeta}_i^x &= \underbrace{\left(p_{i-1} \frac{1- \sum\limits_{k=i}^n x_k}{1-x_i} + \sum\limits_{k=i+1}^n \frac{x_k}{1 - x_i}p_{n-k+i}\right)}_{\zeta_i^x}\\ & - \underbrace{\left(p_{i-2} \frac{1- \sum\limits_{k=i-1}^n x_k}{1-x_i -x_{i-1}} + \sum\limits_{k=i+1}^n \frac{x_k}{1 - x_i-x_{i-1}} p_{n-1-k+i}\right)}_{\hat{\zeta}_i^x}
\end{align*}
After some algebra, we get: 
\begin{align*}
\zeta_i^x- \hat{\zeta}_i^x &= \left(p_{i-1}-p_{i-2}\right) \frac{1- \sum\limits_{k=i-1}^n x_k}{1-x_i-x_{i-1}} + \sum\limits_{k=i+1}^n \frac{x_k}{1 - x_i}\left(p_{n-k+i}-p_{n-1-k+i} \right)\\
&+p_{i-1}\frac{x_{i-1} \sum\limits_{k=i+1}^{n}x_k}{(1-x_i)(1-x_i-x_{i-1})}- \sum\limits_{k=i+1}^n \frac{x_{i-1}x_k}{(1-x_i)(1-x_i-x_{i-1})}p_{n-1-k+i}.
\end{align*}
Simplifying further, we get: 
\begin{align*}
\zeta_i^x- \hat{\zeta}_i &= \left(p_{i-1}-p_{i-2}\right) \frac{1- \sum\limits_{k=i-1}^n x_k}{1-x_i-x_{i-1}} + \sum\limits_{k=i+1}^n \frac{x_k}{1 - x_i}\left(p_{n-k+i}-p_{n-1-k+i} \right)\\
& +\sum\limits_{k=i+1}^n \frac{x_{i-1}x_k}{(1-x_i)(1-x_i-x_{i-1})}\left( p_{i-1}-p_{n-1-k+i}\right).
\end{align*}
However, the term on the second line of this equality is negative as $p_{i-1} \leq p_{n-1-k+i}$ for each $k \geq i+1$. Then we can write 
\begin{align*}
\zeta_i^x- \hat{\zeta}_i^x & \leq \left(p_{i-1}-p_{i-2}\right) \frac{1- \sum\limits_{k=i-1}^n x_k}{1-x_i-x_{i-1}} + \sum\limits_{k=i+1}^n \frac{x_k}{1 - x_i}\left(p_{n-k+i}-p_{n-1-k+i} \right).
\end{align*}
The right hand side of this inequality is a weighted average of incremental increase in the probabilities, and the sum of the weights is less than 1. Then by using the fact that $ p_n-p_{n-1}$ is the largest incremental increase in the probability, we get:
\begin{align*}
\zeta_i^x- \hat{\zeta}_i^x & \leq p_n- p_{n-1}.
\end{align*}
Therefore, condition~\eqref{eqn2} is satisfied, and worker $i$ finds it optimal to shirk upon observing $e_{i-1}=0$. Therefore, the trigger strategy profile comprises an equilibrium under compensation scheme $w^x$. 

\medskip

\noindent \underline{Step 2.} 
Moreover, we know that $w^x$ is the least costly compensation scheme to support the trigger strategy profile as an equilibrium. This is because $w^x$ satisfies condition~\eqref{eqn1}, a necessary condition to support the trigger strategy as an equilibrium, with equality. Any compensation scheme with a lower expected cost to the principal would violate this condition.

\medskip
\noindent \underline{Step 3.} 
Finally, in any alternative equilibrium strategy profile with joint effort as the equilibrium outcome, the probability of success following worker $i$'s deviation (from his perspective) is (weakly) higher than that under the trigger strategy profile, $\zeta_i^x$. This imposes a (weakly) stronger version of the above constraint and thus requires a (weakly) higher payment to each worker. Consequently, $w^x$ must also be the optimal compensation scheme conditional on replacement strategy $x$. 
\end{proof}

\bigskip
\begin{proof}[\textbf{Proof of Proposition~\ref{prop: randomization}}] \hfill \\
Consider a replacement strategy $x$ of the form: 
$$x = (\rho,0,\ldots,0,1-\rho), \ \text{ for $\rho \in [0,1]$}.$$
The two extreme points within this class, which are obtained by setting $\rho=0$ and $\rho=1$, correspond to the optimal pure replacement strategies. We will show that choosing any interior value $\rho \in (0,1)$ outperforms these pure strategies. Therefore, the optimal replacement strategy necessarily involves randomization.

With a slight abuse of notation, for $x = (\rho, 0, \ldots, 0, 1-\rho)$, we denote $\zeta_i^x$ as ${\zeta}_i^\rho$, $w_i^x$ as ${w}_i^\rho$, and $W_x$ as ${W}_\rho$.
It is clear that:
\begin{align*}
{\zeta}_1^\rho  &= p_1,\\
{\zeta}_n^\rho &=p_{n-1},\\
{\zeta}_i^\rho &=  \left(1-\rho\right) p_i+ \rho \, p_{i-1}, \  \forall i \in \{2, \ldots, n-1\}.
\end{align*} 
Moreover,  
\[
\frac{{W}_\rho}{c}=\sum_{i=2}^{n-1} \frac{ p_n}{ p_n-\left(1-\rho\right) p_i-\rho p_{i-1}}+\frac{ p_n}{ p_n-p_1}\left(1-\rho\right)+\rho \frac{ p_n}{ p_n-p_{n-1}}+1.
\] Note this holds even when $\rho=0$ or $\rho=1$. The second order derivative of the $W_\rho$ with respect to $\rho$  satisfies:

\[
\frac{\partial^2  {W}_\rho}{\partial \rho^2} \, \frac{1}{c} =2 p_n\left(\sum_{i=2}^{n-1}\frac{\left(p_i-p_{i-1}\right)^2}{\left( p_n-\left(1-\rho\right) p_i-\rho p_{i-1}\right)^3} \right).
\] 
With $n \geq 3$, the summation contains at least one strictly positive term, as $\left(p_i - p_{i-1}\right)^2 > 0$ and $\left(p_n - \left(1-\rho\right) p_i - \rho p_{i-1}\right)^3 > 0$. Therefore:
$$ \frac{{W}_\rho}{\partial \rho^2} > 0, $$
implying that  $W_\rho$ is strictly convex in $\rho$. But we know that $\rho=0$ and $\rho=1$ corresponds to the optimal pure replacement strategies. That is, ${W}_\rho \big|_{\rho=0}= {W}_\rho \big|_{\rho=1}$. Then, from the convexity of $W_\rho$,  there must exist 
a $\rho \in (0,1)$ minimizing ${W}_\rho$. This implies that, a replacement strategy with some interior $\rho$ is better than the optimal pure replacement strategies. Therefore, the optimal scheme must involve randomization.  
\end{proof}

\bigskip 
\begin{proof}[\textbf{Proof of Proposition~\ref{prop: x2=0}}] \hfill \\
\noindent \underline{\underline{Step 1.}}
First we prove a stronger statement that implies $x_2^*=0$ when $n=3$.

\begin{claim*}
 In the optimal replacement strategy, $x_{n-1}^*=0$.   
\end{claim*}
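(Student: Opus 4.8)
The plan is to argue by an exchange (reallocation) argument: take any replacement strategy $x$ with $x_{n-1}>0$ and show that shifting all of this mass onto the end-most worker strictly lowers the principal's expected compensation cost, so $x$ cannot be optimal. Fix such an $x$, set $\delta = x_{n-1}>0$ and $a=x_n$, and define $x'$ by $x'_{n-1}=0$, $x'_n = a+\delta$, and $x'_i=x_i$ otherwise. Since $\bar{x}$ is unchanged, the AI-cost component $\sum_i x_i c$ is identical under $x$ and $x'$, so it suffices to sign $W_{x'}-W_x = c\sum_{i}[(1-x'_i)\frac{p_n}{p_n-\zeta_i^{x'}}-(1-x_i)\frac{p_n}{p_n-\zeta_i^{x}}]$.

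I would then split this sum into three groups by position, using the closed form $\zeta_i^x = p_{i-1}+\sum_{k=i+1}^n \frac{x_k}{1-x_i}(p_{n-k+i}-p_{i-1})$ from \eqref{eq: zeta}. For the end-most worker $\zeta_n=p_{n-1}$ is fixed, so its term contributes exactly $-\delta\,\frac{p_n}{p_n-p_{n-1}}$, the direct saving from replacing the highest-paid worker more often. For each $i<n-1$ only the successor weights on positions $n-1$ and $n$ move, and a one-line computation gives $\zeta_i^{x'}-\zeta_i^{x}=\frac{\delta}{1-x_i}(p_i-p_{i+1})<0$; since $\frac{p_n}{p_n-\zeta_i}$ is increasing in $\zeta_i$, every such term is strictly negative. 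The only term that can raise the cost is $i=n-1$, whose $\zeta$ depends solely on $x_{n-1}$ and $x_n$.

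The crux is to show the $i=n-1$ increase is dominated by the $i=n$ saving. Writing $q=p_{n-1}-p_{n-2}>0$, $Q=p_n-p_{n-1}>0$, and $s=1-x_{n-1}-x_n\ge 0$, I would record $p_n-\zeta_{n-1}^{x'}=Q+qs$ and $p_n-\zeta_{n-1}^{x}=Q+q\frac{s}{1-\delta}$, so the $i=n-1$ contribution equals $p_n[\frac{1}{Q+qs}-\frac{(1-\delta)^2}{Q(1-\delta)+qs}]$. The goal reduces to the scalar inequality $\frac{1}{Q+qs}-\frac{(1-\delta)^2}{Q(1-\delta)+qs}\le\frac{\delta}{Q}$. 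I expect this estimate to be the main obstacle; the clean route is to clear denominators and substitute $u=Q(1-\delta)$, $v=qs$, after which the cross-multiplied inequality reduces to $\delta v^2\ge 0$ (the right-hand side minus the left-hand side), with equality iff $s=0$.

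Finally I would assemble the pieces. The $i=n-1$ term never exceeds the $i=n$ saving, while the $i<n-1$ terms are strictly negative—worker $1$ always qualifies, and $x_1<1$ necessarily, since $x_{n-1}>0$ together with $\bar{x}\le 1$ rules out $x_1=1$. Hence $W_{x'}<W_x$ whenever $x_{n-1}>0$, which proves $x_{n-1}^*=0$; the statement $x_2^*=0$ for $n=3$ is the special case $n-1=2$.
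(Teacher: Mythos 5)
Your proof is correct and follows essentially the same route as the paper's: the same reallocation $(x_1,\ldots,x_{n-2},0,x_{n-1}+x_n)$, the same three-way decomposition (direct saving on worker $n$, strictly negative effect on every $i<n-1$ via $\zeta_i^{x'}-\zeta_i^{x}=\tfrac{\delta}{1-x_i}(p_i-p_{i+1})$, and the crux comparison of workers $n-1$ and $n$), and the same cleared-denominator certificate---your $\delta v^2=\delta (qs)^2\ge 0$ is, up to positive factors, the paper's completed square $\left(p_n-p_{n-1}-R_{n-1}'\right)^2=(qs)^2\ge 0$. The one detail to add is the degenerate case $x_{n-1}=1$, where $Q(1-\delta)+qs=0$ and your rational expression for worker $n-1$'s wage term (hence the cross-multiplication step) is undefined; the paper dispatches this by first invoking Proposition~\ref{prop: randomization} to conclude $x_{n-1}^*<1$, and your argument needs the same (or a separate treatment of that corner, e.g., via Lemma~\ref{topbottom}) to cover all $x$ with $x_{n-1}>0$.
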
 

\noindent \textit{Proof of the Claim.} Let $x=(x_1,\cdots,  x_n)$ be an optimal replacement strategy, and suppose that $x_{n-1}>0$ to get a contradiction. Then 
we must also have $x_{n-1}<1$ as the optimal replacement strategy involves randomization (Proposition~\ref{prop: randomization}).

Recall that the expected compensation cost for the principal satisfies:
\[
\frac{W_{x}}{c}=\sum_{i=1}^n x_i+\sum_{i=1}^n \frac{\left(1-x_i\right)^2 p_n}{R_i},
\] where
\[
R_i=\left(1-x_i\right)\left( p_n-p_{i-1}\right)-\sum_{k=i+1}^n x_k\left(p_{n+i-k}-p_{i-1}\right).
\]
Now, given this optimal replacement strategy with $x_{n-1} > 0$, consider a deviation in which the probability of replacement for the $(n-1)$th worker is transferred to the $n$th worker. That is, the new replacement strategy $x'$ obtained by this deviation from $x$ is:
\[
x' = (x_1, \cdots, x_{n-2}, 0, x_{n-1} + x_n).
\]

\noindent \underline{Case 1:} First consider the case where $x_{n-1}+x_n<1$. The new expected cost is:
\[
\frac{W_{x'}}{c}=\sum_{i=1}^n x_i+\sum_{i=1}^{n-2} \frac{\left(1-x_i\right)^2 p_n}{R_i'}+\frac{p_n}{R_{n-1}'}+\frac{\left(1-x_{n-1}-x_n\right)^2 p_n}{R_{n}'},
\] where
\begin{align*}
R_i'&=R_i+x_{n-1}\left(p_{i+1}-p_i\right), \text{ if } i<n-1,\\
R_{n-1}'&=R_{n-1}+x_{n-1}\left(p_n-p_{n-1}\right),\\
R_n&=\left(1-x_n\right)\left(p_n-p_{n-1}\right),\\
R_n'& =\left(1-x_{n-1}-x_n\right)\left(p_n-p_{n-1}\right).
\end{align*}
We would like to show that $W_{x'}< W_x$ holds. Given $R_i < R_i'$ if $i<n-1$, it suffices to show that:
\begin{equation}
\frac{\left(1-x_n\right)^2}{R_n}-\frac{\left(1-x_{n-1}-x_n\right)^2}{R_n'}+\frac{\left(1-x_{n-1}\right)^2}{R_{n-1}}-\frac{1}{R_{n-1}'}\geq 0.
\label{forward}
\end{equation} 
The first two terms of \eqref{forward} can be rewritten as:
\begin{align}
\label{firsttwo}
\begin{split}
\frac{\left(1-x_n\right)^2}{R_n}-\frac{\left(1-x_{n-1}-x_n\right)^2}{R_n'}
=&\frac{\left(1-x_n\right)^2}{\left(1-x_n\right)\left(p_n-p_{n-1}\right)}-\frac{\left(1-x_{n-1}-x_n\right)^2}{\left(1-x_{n-1}-x_n\right)\left(p_n-p_{n-1}\right)}\\
=&\frac{1-x_n}{p_n-p_{n-1}}-\frac{1-x_{n-1}-x_n}{p_n-p_{n-1}}\\
=&\frac{x_{n-1}}{p_n-p_{n-1}}
\end{split}
\end{align}
Moreover, by using the equality $R_{n-1}=R_{n-1}'-x_{n-1}\left(p_n-p_{n-1}\right)$, we can rewrite the third and fourth terms of \eqref{forward} as:
\begin{align*}
\frac{\left(1-x_{n-1}\right)^2}{R_{n-1}}-\frac{1}{R_{n-1}'}&=\frac{R_{n-1}'\left(1-x_{n-1}\right)^2-R_{n-1}'+x_{n-1}\left(p_n-p_{n-1}\right)}{R_{n-1} R_{n-1}'}\\
&=\frac{x_{n-1}\left[\left(p_n-p_{n-1}\right)-R_{n-1}'\left(2-x_{n-1}\right)\right]}{R_{n-1} R_{n-1}'}.
\end{align*}
Therefore, inequality~\eqref{forward} holds if and only if
\[
\frac{1}{p_n - p_{n-1}} + \frac{(p_n - p_{n-1}) - R_{n-1}'(2 - x_{n-1})}{R_{n-1} R_{n-1}'} \geq 0.
\]
This inequality holds if and only if
\[
(p_n - p_{n-1})^2 - R_{n-1}'(2 - x_{n-1})(p_n - p_{n-1}) + R_{n-1} R_{n-1}' \geq 0,
\]
which further simplifies to
\[
(p_n - p_{n-1})^2 - R_{n-1}'(2 - x_{n-1})(p_n - p_{n-1}) + \left[R_{n-1}' - x_{n-1}(p_n - p_{n-1})\right] R_{n-1}' \geq 0.
\]
This can be rewritten as
\[
(p_n - p_{n-1})^2 - 2(p_n - p_{n-1}) R_{n-1}' + (R_{n-1}')^2 \geq 0,
\]
which is equivalent to
\[
\left(p_n - p_{n-1} - R_{n-1}'\right)^2 \geq 0.
\]
Since the square of a real number is always non-negative, the inequality holds. 

This implies that the deviation is profitable for the principal, which contradicts the optimality of the allocation \( x = (x_1, \dots, x_{n-1}, x_n) \). Therefore, the optimal replacement strategy $x^*$ has to satisfy \( x_{n-1}^* = 0 \).

\noindent \underline{Case 2:} Next consider the case where $x_{n-1}+x_n=1$. The first two terms in \eqref{forward}
becomes $\frac{\left(1-x_n\right)^2}{R_n}$ in this case. Moreover, note that:
\[
\frac{\left(1-x_n\right)^2}{R_n}=\frac{\left(1-x_n\right)^2}{\left(1-x_n\right)\left(p_n-p_{n-1}\right)}=\frac{1-x_n}{p_n-p_{n-1}}=\frac{x_{n-1}}{p_n-p_{n-1}} 
\] which coincides with \eqref{firsttwo}. Therefore, the optimal replacement strategy $x^*$ has to satisfy \( x_{n-1}^* = 0 \). This completes the proof of the claim. 

\medskip 

\noindent \underline{\underline{Step 2.}}
Next, we prove that the optimal replacement strategy $x^*$ satisfies \( x_3^* > 0 \). Suppose, for contradiction, that a replacement strategy $x= (x_1,x_2,x_3)$ with \( x_3 = 0 \) is optimal.
Note that, the partial derivative of the principal's expected cost with respect to \( x_1 \) (as long as $x_1<1$) is:
\begin{align}
\frac{1}{c} \frac{\partial W_{x}}{\partial x_1} 
&= 1 - p_3 \frac{w_1}{c} + p_3 (1 - x_1) \frac{1}{c} \frac{\partial w_1}{\partial x_1} \nonumber \\
&= 1 - \frac{p_3}{p_3 - \zeta_1} + p_3 (1 - x_1) \frac{1}{(p_3 - \zeta_1)^2} \frac{\partial \zeta_1}{\partial x_1} \nonumber \\
&= -\frac{\zeta_1}{p_3 - \zeta_1} + p_3 \frac{1}{(p_3 - \zeta_1)^2} \frac{\sum_{k=2}^3 x_k (p_{4-k} - p_0)}{1 - x_1} \nonumber \\
&= -\frac{\zeta_1}{p_3 - \zeta_1} + \frac{p_3 (\zeta_1 - p_0)}{(p_3 - \zeta_1)^2} \nonumber \\
&= \frac{\zeta_1^2 - p_3 p_0}{(p_3 - \zeta_1)^2}.
\label{eq: derivative1}
\end{align}
From the previous step, we know that \( x_2 = 0 \). Then \( x_2 = x_3 = 0 \) implies that \( \zeta_1 = p_0 \). By substituting this into \eqref{eq: derivative1}, we get  $\frac{\partial W_{x}}{\partial x_1} \leq 0.$  
This, in turn, implies that \( (x_1,x_2,x_3) = (1,0,0) \) is an optimal replacement strategy, which contradicts Proposition~\ref{prop: randomization}. Therefore, the optimal replacement strategy $x^*$ satisfies $x_3^*>0$.

\medskip 

\noindent \underline{\underline{Step 3.}}
Now we prove that the optimal replacement strategy $x^* $ satisfies \( x_3^* \geq x_1^* \). Suppose not for contradiction. Then, a replacement strategy $x= (x_1,0,x_3)$ with \( x_1 > x_3 \) is optimal.  This in turn requires $x_1 \in (0,1)$. 
 
The expected compensation cost of the principal under the scheme $\left(x_1, 0, x_3\right)$ is:
\[
c\left(x_1+x_3+p_3\left[\frac{(1-x_1)^2}{(1-x_1)(p_3-p_0)- x_3 (p_1-p_0)} 
+\frac{1}{(p_3-p_1)- x_3 (p_2-p_1)}
+\frac{1-x_3}{p_3-p_2}\right]\right).
\] 
The expected compensation cost of the principal under the scheme 
$\left(x_3, 0, x_1\right)$ is: 
\[
c\left(x_1+x_3+p_3\left[\frac{(1-x_3)^2}{(1-x_3)(p_3-p_0)- x_1 (p_1-p_0)} 
+\frac{1}{(p_3-p_1)- x_1 (p_2-p_1)}
+\frac{1-x_1}{p_3-p_2}\right]\right).
\] 
In the subsequent analysis, we will show that the precedent is strictly larger than the latter, i.e., we will show that:
\begin{align}
&\left[\frac{(1-x_1)^2}{(1-x_1)(p_3-p_0)- x_3 (p_1-p_0)} +\frac{1}{(p_3-p_1)- x_3 (p_2-p_1)}+\frac{1-x_3}{p_3-p_2}\right] \nonumber 
\\&-\left[\frac{(1-x_3)^2}{(1-x_3)(p_3-p_0)- x_1 (p_1-p_0)} +\frac{1}{(p_3-p_1)- x_1 (p_2-p_1)}+\frac{1-x_1}{p_3-p_2}\right]>0. \label{eqn6}
\end{align}
This, in turn, implies that the principal is better off under the allocation \((x_3, 0, x_1)\), which contradicts the optimality of \((x_1, 0, x_3)\).

To this end, note first that:
\[
\frac{1-x_3}{p_3-p_2}-\frac{1-x_1}{p_3-p_2}=\frac{x_1-x_3}{p_3-p_2}.
\]
Moreover, we have:
\begin{align*}
\frac{1}{(p_3-p_1)- x_3 (p_2-p_1)}-\frac{1}{(p_3-p_1)- x_1 (p_2-p_1)}
&=\frac{-\left(p_2-p_1\right)\left(x_1-x_3\right)}{\left[(p_3-p_1)- x_3 (p_2-p_1)\right]\left[(p_3-p_1)- x_1 (p_2-p_1)\right]}.
\end{align*} 
\begin{claim*}
The following inequality holds:
\begin{align*}
\frac{-\left(p_2-p_1\right)\left(x_1-x_3\right)}{\left[(p_3-p_1)- x_3 (p_2-p_1)\right]\left[(p_3-p_1)- x_1 (p_2-p_1)\right]}>\frac{-\left(p_2-p_1\right)\left(x_1-x_3\right)}{(p_3-p_1)(p_3-p_2)}. 
\end{align*}
\end{claim*}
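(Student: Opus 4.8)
The plan is to reduce the claim to a single polynomial inequality in the denominators and then close it using the AI capacity constraint. First I would record the signs of the quantities involved. The two fractions share the common numerator $N = -(p_2-p_1)(x_1-x_3)$; since $p_k$ strictly increases in $k$ we have $p_2-p_1>0$, and the standing hypothesis of this step is $x_1>x_3$, so $N<0$. Writing $D=p_3-p_1$ and $d=p_2-p_1$, each factor of the left-hand denominator satisfies $D-x_jd \ge D-d = p_3-p_2>0$ for $x_j\le 1$, with strict inequality since $x_1,x_3<1$; the right-hand denominator $(p_3-p_1)(p_3-p_2)=D(D-d)$ is positive as well. Because both sides are a fixed \emph{negative} number divided by a positive quantity, dividing through by $N$ reverses the inequality, so the claim is equivalent to
\[
\bigl[D - x_3 d\bigr]\bigl[D - x_1 d\bigr] \;>\; D\,(D-d).
\]

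Next I would verify this reduced inequality by direct expansion. Setting $A=D-x_3 d$ and $B=D-x_1 d$ and $C:=D(D-d)$, a routine computation gives $AB = D^2 - Dd(x_1+x_3) + x_1 x_3 d^2$ and $C = D^2 - Dd$, so that the difference factors cleanly as
\[
AB - C \;=\; d\Bigl\{\, D\bigl[\,1-(x_1+x_3)\,\bigr] + x_1 x_3\, d \,\Bigr\}.
\]
Since $d>0$, it then suffices to show that the bracketed term is strictly positive.

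Finally I would invoke the feasibility constraint $x_1+x_3\le 1$, which holds because $x_2=0$ and $\bar{x}\le 1$. If $x_1+x_3<1$, the first summand $D[1-(x_1+x_3)]$ is strictly positive and $x_1 x_3 d\ge 0$, so the bracket is positive. The delicate case — and what I expect to be the main obstacle — is the boundary $x_1+x_3=1$, where the capacity term vanishes and positivity must come entirely from the cross term; here I would use that $x_1<1$ forces $x_3=1-x_1>0$, whence $x_1 x_3 d>0$ and the bracket is again strictly positive. In either case $AB>C$, which by the sign reversal from dividing by $N<0$ is exactly the stated strict inequality. The only remaining care is bookkeeping: one must track that the negative common numerator flips the direction, so that $AB>C$ corresponds to ``$>$'' in the claim rather than its reverse.
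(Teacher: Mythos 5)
Your proof is correct and takes essentially the same route as the paper's: both reduce the claim, via the sign of the common negative numerator, to the product inequality $\left[(p_3-p_1)-x_3(p_2-p_1)\right]\left[(p_3-p_1)-x_1(p_2-p_1)\right] > (p_3-p_1)(p_3-p_2)$, expand it to $(p_3-p_1)(1-x_1-x_3) + x_1 x_3 (p_2-p_1) > 0$, and conclude from the capacity constraint $x_1+x_3 \le 1$ together with $x_1, x_3 > 0$. Your explicit split into the cases $x_1+x_3<1$ and $x_1+x_3=1$ is just a more careful write-up of the paper's one-line final step.
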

\begin{proof}[Proof of the Claim]
The inequality holds if and only if:
\[
\left[(p_3-p_1)- x_3 (p_2-p_1)\right]\left[(p_3-p_1)- x_1 (p_2-p_1)\right]>(p_3-p_1)(p_3-p_2).
\] This is equivalent to:
\[
\left(p_3-p_1\right)^2-x_1\left(p_2-p_1\right)\left(p_3-p_1\right)-x_3\left(p_2-p_1\right)\left(p_3-p_1\right)+x_1 x_3\left(p_2-p_1\right)^2-(p_3-p_1)(p_3-p_2)>0,
\] 
which can be rewritten as:
\[
\left(p_3-p_1\right)\left(p_2-p_1\right)-x_1\left(p_2-p_1\right)\left(p_3-p_1\right)-x_3\left(p_2-p_1\right)\left(p_3-p_1\right)+x_1 x_3\left(p_2-p_1\right)^2>0.
\] 
Rearranging further, we obtain:
\[
\left(p_3-p_1\right)\left(1-x_1-x_3\right)+x_1 x_3\left(p_2-p_1\right)>0,
\] 
which is correct because $1-x_1-x_3 \geq 0$, $x_1>0$, and $x_3>0$.     
\end{proof}

Getting back to the inequality~\eqref{eqn6}, note that:
\[
\frac{x_1-x_3}{p_3-p_2}-\frac{\left(p_2-p_1\right)\left(x_1-x_3\right)}{(p_3-p_1)(p_3-p_2)}=\frac{x_1-x_3}{p_3-p_1}.
\] 
Define:
\begin{gather*}
\frac{(1-x_1)^2}{(1-x_1)(p_3-p_0)- x_3 (p_1-p_0)}-\frac{(1-x_3)^2}{(1-x_3)(p_3-p_0)- x_1 (p_1-p_0)}
=\frac{\Omega}{\Theta},
\end{gather*} where
\begin{align*}
\Theta &=\left[(1-x_1)(p_3-p_0)- x_3 (p_1-p_0)\right]\left[(1-x_3)(p_3-p_0)- x_1 (p_1-p_0)\right]   \\[0.75em] 
\Omega &=\left(p_3-p_0\right)\left[\left(1-x_1\right)^2\left(1-x_3\right)-\left(1-x_3\right)^2\left(1-x_1\right)\right]+\left(p_1-p_0\right)\left[\left(1-x_3\right)^2 x_3-\left(1-x_1\right)^2 x_1\right]\\
&=-\left(x_1-x_3\right)\left[\left(p_3-p_0\right)\left(1-x_1\right)\left(1-x_3\right)+\left(p_1-p_0\right)\left(x_1^2+x_1 x_3 +x_3^2-2x_1-2x_3+1\right)\right].
\end{align*}
Also denote
\[
\Lambda \equiv -\frac{\Omega}{x_1-x_3}.
\] 
To show that inequality~\eqref{eqn6} holds, it suffices to show:
\[
\frac{x_1-x_3}{p_3-p_1}-\frac{\left(x_1-x_3\right)\Lambda}{\Theta} \geq 0.
\] 
This is equivalent to:
\[
\Theta-\left(p_3-p_1\right)\Lambda \geq 0.
\]
By rearranging it, we obtain:
\begin{align*}
&\underbrace{\left(1-x_1\right)\left(1-x_3\right)\left(p_3-p_0\right)^2}_{S_1}-\underbrace{\left(p_3-p_1\right)\left(p_3-p_0\right)\left(1-x_1\right)\left(1-x_3\right)}_{S_2}\\
&-\underbrace{\left(1-x_1\right)x_1\left(p_3-p_0\right)\left(p_1-p_0\right)}_{S_3}-\underbrace{x_3\left(1-x_3\right)\left(p_1-p_0\right)\left(p_3-p_0\right)}_{S_4}\\
&-\underbrace{\left(p_3-p_1\right)\left(p_1-p_0\right)\left(x_1^2+x_1 x_3 +x_3^2-2x_1-2x_3+1\right)}_{S_5}+\underbrace{x_1 x_3\left(p_1-p_0\right)^2}_{S_6 }\geq 0.
\end{align*} 
However,  note that:
\begin{align*}
S_1-S_2=\left(p_1-p_0\right)\left(p_3-p_0\right)\left(1-x_1\right)\left(1-x_3\right).
\end{align*}
Thus, we obtain:
\begin{align*}
\underbrace{\left(p_1-p_0\right)\left(p_3-p_0\right)\left(1-x_1\right)\left(1-x_3\right)}_{S_1-S_2}-S_3 -S_4 
=\left(p_3-p_0\right)\left(p_1-p_0\right)\left(x_1^2+x_1 x_3+x_3^2-2x_1-2x_3+1\right).
\end{align*}
Therefore, we further derive:
\begin{align*}
\underbrace{\left(p_3-p_0\right)\left(p_1-p_0\right)\left(x_1^2+x_1 x_3+x_3^2-2x_1-2x_3+1\right)}_{S_1-S_2-S_3-S_4} -S_5 +S_6
=\left(1-x_1-x_3\right)^2 \left(p_1-p_0\right)^2 \geq 0.
\end{align*}
Thus, inequality~\eqref{eqn6} holds, which contradicts with the optimality \((x_3, 0, x_1)\). Therefore, the optimal replacement strategy $x^*$ satisfies  $x_3^*\geq x_1^*$.\end{proof}

\bigskip
\begin{proof}[\textbf{Proof of Proposition~\ref{prop: useup}}]\hfill

\noindent We prove Proposition~\ref{prop: useup} in two steps: 
\begin{enumerate}
    \item If $p_1^2 - p_3 p_0 \leq 0$, the principal fully utilizes the AI capacity, i.e., $\bar{x}^* = 1$.
    \item If $p_1^2 - p_3 p_0 > 0$, the principal does not fully utilize the AI capacity, i.e.,  $\bar{x}^* < 1$.
\end{enumerate}

\noindent \underline{Step 1.} Assume that $p_1^2 - p_3 p_0 \leq 0$. Suppose, for contradiction, that the principal does not fully utilize the AI capacity under the optimal replacement strategy, i.e.,  a strategy $x$ with $\bar{x} < 1$ is optimal. 
Note that
\[
\zeta_1 = p_0 + \sum_{k=2}^3 \frac{x_k}{1-x_1} \left(p_{4-k} - p_0\right).
\]
As we know from Proposition~\ref{prop: x2=0} that $x_2 = 0$,  we obtain: 
\[
\zeta_1 = p_0 + \frac{x_3}{1-x_1} \left(p_1 - p_0\right).
\]
Since $\bar{x} < 1$, we must have $x_3 < 1-x_1$, so $\zeta_1 < p_1$. Therefore, we must have: 
\[
\zeta_1^2 - p_3 p_0 < p_1^2 - p_3 p_0 \leq 0.
\]
Then, from equation~\eqref{eq: derivative1}, we know that:  
\[
\frac{1}{c} \frac{\partial W_x}{\partial x_1} = \frac{\zeta_1^2 - p_3 p_0}{(p_3 - \zeta_1)^2} \quad \implies \quad \frac{\partial W_x}{\partial x_1} < 0.
\]
Therefore, increasing $x_1$ slightly, which is feasible since $x_1 + x_3 < 1$, leads to an improvement in terms of the expected cost of compensation. This contradicts the optimality of $x$. Hence, under the optimal replacement strategy $x^*$, the principal fully utilizes the AI capacity, i.e., $\bar{x}^*=1$.

\medskip 

\noindent \underline{Step 2.} Assume $p_1^2 - p_3 p_0 > 0$. Suppose, for contradiction, that the principal fully utilizes the AI capacity under the optimal replacement strategy, i.e., a replacement strategy $x$ with $\bar{x}=1$ is optimal. Then we must have $x_1 + x_3 = 1$, since Proposition~\ref{prop: x2=0} shows that $x_2 = 0$.
Note that, 
$x_3 =1-x_1$ implies:
\[
\zeta_1 = p_0 + \frac{x_3}{1-x_1} \left(p_1 - p_0\right) = p_1.
\]
By substituting $\zeta_1 = p_1$, we obtain:
\[
\zeta_1^2 - p_3 p_0 = p_1^2 - p_3 p_0 > 0.
\] 
Then, from equation~\eqref{eq: derivative1}, we know that: 
\[
\frac{1}{c} \frac{\partial W_x}{\partial x_1} = \frac{\zeta_1^2 - p_3 p_0}{(p_3 - \zeta_1)^2} \quad \implies \quad \frac{\partial W_x}{\partial x_1} > 0.
\]
This implies that decreasing $x_1$ slightly, which is feasible as $x_1 > 0$, leads to an improvement in terms of the expected compensation cost. This contradicts the optimality of $x$. Hence,  under the optimal replacement strategy $x^*$, the principal under-utilizes the AI capacity, i.e., $\bar{x}^*<1$, in this case.
\end{proof}

\bigskip
\begin{proof}[\textbf{Proof of Proposition~\ref{prop: monotonic_wage}}]\hfill

\noindent Recall that:
\begin{align*}
\zeta_1^x &=\, p_2 \frac{x_2}{1 - x_1} \, +\, p_1 \frac{x_3}{1 - x_1} \, + \, p_0\frac{1-x_1-x_2-x_3}{1-x_1},\\
\zeta_2^x&=  p_2 \frac{x_3}{1 - x_2}  \, + \, p_1 \frac{1-x_2-x_3}{1-x_2},\\
\zeta_3^x&=  p_2.
\end{align*}
\begin{enumerate}
\item 
From earlier results, we know that $x_3^* \in \left(0,1\right)$, and $x_2^*=0$. First, $x_2^*=0$ implies $\zeta^{x^*}_1\leq p_1$. Next, $x_3^* \in \left(0,1\right)$ implies $\zeta^{x^*}_2 \in \left(p_1,p_2\right)$.
In consequence, we have $\zeta^{x^*}_3>\zeta^{x^*}_2>\zeta^{x^*}_1$, which implies that the wages are increasing in the position after the optimal automation: $w^{x^*}_3>w^{x^*}_2>w^{x^*}_1$.


\item As $x_3^*>0$, we have $\zeta^{x^*}_1>p_0=\zeta^0_1$. This implies that 
the wage of the front-most worker increases after the automation: $w^{x^*}_1>w^0_1$.
\item $x_3^*>0$ also implies that $\zeta^{x^*}_2>p_1=\zeta^0_2$. Therefore, 
the wage of the middle worker also increases after the automation: $w^{x^*}_2>w^0_2$.
\item Finally, note that, regardless of the replacement strategy, we have: $\zeta^{x^*}_3= p_2$. Therefore, the wage of the end-most worker remains the same after the automation: $w^{x^*}_3=w^0_3$.
\end{enumerate}
\end{proof}

\bigskip 
\begin{proof}[\textbf{Proof of Proposition~\ref{prop: monotonic_paoff}}] \hfill

\noindent From Proposition~\ref{prop: monotonic_wage}, we know that after AI replacement, the wages of the front-most and middle workers increase, while the wage of the end-most worker remains unchanged. Additionally, the end-most worker faces a risk of replacement, whereas the middle worker does not. Consequently, the middle worker's payoff increases, while the end-most worker's payoff decreases relative to their payoffs prior to AI replacement. The front-most worker's payoff depends on his replacement probability—while his wage increases, his payoff may rise or fall.      
\end{proof}

\subsection{Proofs of Section~\ref{sec: complementarity}}

\begin{proof}[\textbf{Proof of Proposition~\ref{closedform}}]\hfill 

\noindent \underline{Step 1.} First, we will derive the closed-form solution for the optimal replacement strategy. Note that this functional form satisfies the necessary and sufficient condition for full utilization given in Proposition~\ref{prop: useup}, ensuring $\bar{x}^* = 1$. Moreover, Proposition~\ref{prop: x2=0} establishes that $x_2^* = 0$. Therefore, we have $x_1^* + x_3^* = 1$.

As the optimal replacement strategy must involve randomization (Proposition~\ref{prop: randomization}), we must have $x_1^*, x_3^* \in (0,1)$. Consider an arbitrary such strategy given by $x = (x_1, x_2, x_3) = (\rho, 0, 1-\rho)$ for some $\rho \in (0,1)$. Denoting, with a slight abuse of notation, the principal's expected compensation cost under this strategy by $W_\rho$, we obtain:
\[
\frac{W_\rho}{c} = 1 + \frac{p_3}{p_3 - p_1} (1 - \rho) + \frac{p_3}{p_3 - (1 - \rho) p_2 - \rho p_1} + \rho \frac{p_3}{p_3 - p_2}.
\]
Under the optimal replacement strategy, $\rho$ must satisfy:
\[
\frac{1}{c} \frac{\partial W_\rho}{\partial \rho}
= -\frac{p_3 (p_2 - p_1)}{\left(p_3 - (1 - \rho) p_2 - \rho p_1\right)^2} - \frac{p_3}{p_3 - p_1} + \frac{p_3}{p_3 - p_2} = 0.
\]
Rearranging, this simplifies to:
\[
\frac{p_2 - p_1}{\left(p_3 - (1 - \rho) p_2 - \rho p_1\right)^2} = \frac{1}{p_3 - p_2} - \frac{1}{p_3 - p_1} = \frac{p_2 - p_1}{(p_3 - p_2)(p_3 - p_1)}.
\]
Solving for $\rho$ yields:
\[
p_3 - (1 - \rho) p_2 - \rho p_1 = \sqrt{(p_3 - p_2)(p_3 - p_1)}.
\]
From which we obtain:
\[
\rho = \frac{\sqrt{p_3 - p_2} \left(\sqrt{p_3 - p_1} - \sqrt{p_3 - p_2}\right)}{p_2 - p_1}
= \frac{\sqrt{1 - \alpha} \left(\sqrt{1 - \alpha^2} - \sqrt{1 - \alpha}\right)}{\alpha - \alpha^2}
= \frac{\sqrt{1 + \alpha} - 1}{\alpha}.
\]
This provides the closed-form solution for the optimal replacement strategy as stated.
 
\medskip 

\noindent \underline{Step 2.} Next, we will show the comparative statics results. From the previous step, for a given $\alpha$, the optimal replacement probability of the front-most worker satisfies:
\[
x_1^* =  \frac{\sqrt{1+\alpha}-1}{\alpha}.
\]
Taking the derivative of $x_1^*$ with respect to $\alpha$, we obtain:  
\[
\frac{\partial x_1^*}{\partial \alpha} = \frac{\frac{\alpha}{2\sqrt{1+\alpha}} - \sqrt{1+\alpha} + 1}{\alpha^2}.
\]
The numerator of this expression is a decreasing function of $\alpha$ as:
$$\frac{\partial \left(\frac{\alpha}{2\sqrt{1+\alpha}} - \sqrt{1+\alpha} + 1 \right)}{\partial \alpha} = \frac{\sqrt{1+\alpha} - \frac{\alpha}{2} \frac{1}{\sqrt{1+\alpha}}}{1+\alpha} - \frac{1}{\sqrt{1+\alpha}} = -\frac{\alpha}{2(1+\alpha)^{3/2}} <0.$$
Moreover, the numerator evaluates to zero at $\alpha = 0$, which implies that it remains strictly negative for all $\alpha \in (0,1)$.  
Thus, we obtain:  
$$\frac{\partial x_1^*}{\partial \alpha} < 0,$$
proving that as the degree of complementarity increases ($\alpha$ becomes smaller), the likelihood of the front-most
(end-most) worker being replaced increases (decreases).
\end{proof}

\subsection{Proofs of Section~\ref{sec:TaskBased}}

\begin{proof}[\textbf{Proof of Proposition~\ref{prop:TaskBasedCompensation}}] \hfill 

\noindent By applying analogous arguments from the proof of Proposition~\ref{prop: compensation}, we can establish that under a given replacement strategy $x$:  
(i) the optimal compensation scheme sustains effort by inducing a trigger strategy equilibrium, and  
(ii) workers remain indifferent between exerting effort and shirking along the equilibrium path.  
For conciseness, we do not replicate these arguments here. Denote by $\zeta_i^x$ the probability of project success when worker $i$ shirks, assuming all other workers follow the trigger strategy profile. Worker $i$'s incentive constraint then becomes:
$$
p_n (1 - x_i) w_i - (1 - x_i) c \geq \zeta_i^x w_i.
$$
Under the optimal compensation scheme, this constraint binds, which yields:
$$
w_i^x = \frac{c}{p_n - \zeta_i^x}.
$$
To understand the expression $\zeta_i^x$ in the proposition statement, consider what happens when worker $i$ chooses to shirk. Only a fraction $1 - x_i$ of his tasks are unperformed. The overall project success rate then depends on how this shirking decision affects downstream behavior through peer monitoring.  
\begin{itemize}
\item With probability $x_i$, the AI completes worker $i$'s tasks, so the worker-AI unit contributes to the project. In this case, the project succeeds with probability $p_n$.
  
\item With probability $(1 - x_i) x_{i+1}$, worker $i$'s shirking is noticed by worker $i+1$, who responds by shirking. However, the AI assigned to worker $i+1$ completes his share of tasks, so the unit at $i+1$ still contributes to the project. The project thus succeeds with probability $p_{n-1}$.
  
\item With probability $(1 - x_i)(1 - x_{i+1}) x_{i+2}$, shirking cascades further. Worker $i+1$ shirks,  and $1-x_{i+1}$ fraction of the tasks performed by AI doesn't fully compensate, and worker $i+2$ observes this and also shirks. Yet, the AI at $i+2$ performs $x_{i+2}$ of the tasks, allowing partial contribution. The project succeeds with probability $p_{n-2}$.
  
\item This process continues recursively, with each successive worker potentially shirking in response, depending on how much of the previous worker’s task is conducted by AI. Eventually, the entire downstream chain might shirk, in which case no further contribution comes from any of them.
  
\item In the worst case, where all successors also shirk and none of their AI compensates, the project succeeds only with probability $p_{i-1}$, which reflects the effort of workers before position $i$.
\end{itemize}
Summing over all these possible outcomes gives us:
$$
\zeta_i^x=x_i p_n+\sum_{k=1}^{n-i} \left[ p_{n-k} \, x_{i+k}\prod_{j=i}^{i+k-1}\left(1-x_j\right) \right]  + p_{i-1}\prod_{j=i}^n\left(1-x_j\right).
$$
This expression represents the expected project success probability when worker $i$ shirks, accounting for the compensating effects of AI task completion and the induced monitoring response throughout the network.
\end{proof}

\bigskip
\begin{proof}[\textbf{Proof of Proposition~\ref{prop:TaskBasedOptimal}}] \hfill 

\noindent For the case where $x_i < 1$ for each worker $i \in \{1, \dots, n\}$, we can write:
\[
p_n-\zeta_i^x=\left(1-x_i\right)\left(p_n-\Psi_i^x\right)
\] where 
$$ 
\Psi_i^x=x_{i+1}\,p_{n-1}+\sum_{k=1}^{n-i-1} \left[ x_{i+1+k} \, p_{n-1-k}\prod_{j=i+1}^{i+k}\left(1-x_j\right)  \right]+ p_{i-1}\prod_{j=i+1}^n\left(1-x_j\right).
$$
Moreover, we can rewrite the principal's expected cost as:
\begin{align*}
\sum\limits_{i=1}^n \left[ x_i c + p_n (1 - x_i) w_i \right]
=&\sum\limits_{i=1}^n \left[ x_i c + p_n (1 - x_i) \frac{c}{p_n-\zeta_i^x} \right]\\
=&\sum\limits_{i=1}^n \left[ x_i c + p_n \, \frac{c}{p_n-\Psi_i^x} \right].
\end{align*}
First, note that $\Psi_i^x$ depends on $x_\ell$ only if $\ell > i$. Moreover, the partial derivative $\frac{\partial \Psi_i^x}{\partial x_\ell}$ is proportional to the difference between $p_{n+i-\ell}$ and a weighted average of $p_{n+i-\ell-1}, \dots, p_{i-1}$, and is therefore positive.
Now consider increasing $x_i$. The principal's expected compensation cost  increases due to the following strictly increasing components:
\begin{itemize}
    \item the direct term $x_i c$, and
    \item the indirect terms of the form $p_n \frac{c}{p_n - \Psi_\ell^x}$ for $\ell < i$, each of which increases because $\Psi_\ell^x$ is strictly increasing in $x_i$.
\end{itemize}
Therefore, increasing $x_i$ strictly increases the expected cost. It follows that setting all $x_i = 0$ minimizes the expected cost over all strategies satisfying $x_i < 1$ for each worker $i \in \{1, \dots, n\}$. This proves that the strategy $(0, \dots, 0)$ is optimal among such replacement strategies. Lemma~\ref{topbottom} completes the proof.
\end{proof}

\subsection{Proofs of Section~\ref{sec: network}}

\begin{proof}[\textbf{Proof of Proposition~\ref{prop: star_optimum}}]\hfill

\noindent 
First, similar reasoning to that in Proposition~\ref{prop: compensation} establishes, for a given replacement strategy, the optimal compensation scheme, and that this scheme induces a trigger strategy as an equilibrium.  
Moreover, analogous arguments to those in Lemma~\ref{lem:existence} ensure the existence of an optimal replacement strategy in this star network structure.   
To prove the proposition, we first state and prove two auxiliary claims.

\begin{claim}\label{evenly}
Given a fixed $x_n$, there exists a constrained optimal replacement strategy in which $$x_1 = x_2=\cdots = x_{n-1}.$$    
\end{claim}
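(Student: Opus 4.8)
Fix the central worker's replacement probability $x_n$. The principal's expected compensation cost splits into the central worker's contribution, $x_n c + (1-x_n) p_n \frac{c}{p_n - p_{n-1}}$, which is a constant once $x_n$ is fixed, plus the sum of the $n-1$ peripheral workers' contributions. The plan is to write each peripheral contribution as a single-variable function $f(x_i)$, show that $f$ is convex, and then conclude by a symmetrization (Jensen) argument over the symmetric, convex feasible set.

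First I would derive the peripheral shirking probability. A peripheral worker $i$ who is not replaced and shirks is detected by the central worker, unless the central worker happens to be the one replaced by AI. Conditioning on $i$ not being replaced, the central worker is AI with probability $\frac{x_n}{1-x_i}$ (in which case the project succeeds with probability $p_{n-1}$), and otherwise shirks in response (success probability $p_{n-2}$). Since at most one worker is replaced, this yields
\[
\zeta_i = \frac{x_n}{1-x_i}\,p_{n-1} + \frac{1-x_n-x_i}{1-x_i}\,p_{n-2},
\]
and, by the same reasoning as Proposition~\ref{prop: compensation}, $w_i = c/(p_n - \zeta_i)$. Writing $f(x_i) := x_i c + (1-x_i) p_n w_i$ and simplifying gives
\[
f(x_i) = x_i c + \frac{(1-x_i)^2 p_n c}{D_i}, \qquad D_i := (1-x_i)(p_n - p_{n-2}) - x_n (p_{n-1}-p_{n-2}),
\]
where $D_i = (1-x_i)(p_n - \zeta_i) > 0$, since $\zeta_i$ is a convex combination of $p_{n-1}$ and $p_{n-2}$, both below $p_n$.

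The crux is to show $f'' \geq 0$. Substituting $u = 1-x_i$, $a = p_n - p_{n-2} > 0$, and $b = x_n(p_{n-1}-p_{n-2}) \geq 0$, so that $D_i = au - b$, a direct differentiation---in which the leading $u$-terms cancel---yields
\[
f''(x_i) = \frac{2\,b^2\, p_n\, c}{(au - b)^3}.
\]
Since $au - b = D_i > 0$, this is nonnegative, so $f$ is convex (and strictly convex whenever $x_n > 0$). I expect this second-derivative computation, and in particular spotting the cancellation that leaves only the $b^2$ term, to be the main obstacle; it is the only non-routine step.

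Finally I would symmetrize. With $x_n$ fixed, the feasible set $\{(x_1,\dots,x_{n-1}) : x_i \geq 0,\ \sum_{i=1}^{n-1} x_i \leq 1 - x_n\}$ is convex and invariant under permutations of the coordinates. Given any constrained-optimal $(x_1^\ast,\dots,x_{n-1}^\ast)$, let $\bar{x} = \frac{1}{n-1}\sum_{i=1}^{n-1} x_i^\ast$. The uniform vector $(\bar{x},\dots,\bar{x})$ is feasible (it preserves the coordinate sum and each entry lies in $[0,1)$), and convexity of $f$ gives $(n-1) f(\bar{x}) \leq \sum_{i=1}^{n-1} f(x_i^\ast)$ by Jensen's inequality. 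Hence the uniform strategy also attains the minimum, which establishes the existence of a constrained optimum with $x_1 = \cdots = x_{n-1}$.
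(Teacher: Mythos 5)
Your proof is correct and follows essentially the same route as the paper's: both hinge on showing that the per-peripheral-worker cost $x_i c + (1-x_i)\,p_n w_i^x$ is convex in $x_i$ (your change of variables $u = 1-x_i$, $a = p_n - p_{n-2}$, $b = x_n(p_{n-1}-p_{n-2})$ produces exactly the paper's second derivative, $2b^2 p_n c/(au-b)^3$), followed by an averaging step. The only cosmetic difference is that you symmetrize all $n-1$ coordinates at once via Jensen's inequality, which uniformly covers the cases $x_n = 0$ (where $f$ is affine and Jensen holds with equality) and $x_n \in (0,1)$, whereas the paper treats these cases separately and uses a pairwise-equalization argument for the interior case.
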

\begin{proof}[Proof of Claim~\ref{evenly}] \hfill

\noindent -- If $x_n=1$, the claim trivially holds as we must have $x_1=\cdots=x_{n-1}=0$ in this case.

\noindent -- If $x_n=0$, then $\zeta_i^x=p_{n-2}$, for each $i\in \{1, \cdots, n-1\}$.  The expected compensation cost of the principal under an arbitrary replacement strategy $(x_1,x_2, \cdots, x_{n-1},0)$ is: 
\[
\sum_{i=1}^{n-1} \left[x_i c+ \left(1-x_i\right)p_n \frac{c}{p_n-p_{n-2}}\right]+ p_n\frac{c}{p_n-p_{n-1}}.
\] 
Consider modifying this replacement strategy into $\left(\tilde{x}, \cdots, \tilde{x},0\right)$ with:
\[
\tilde{x}=\frac{x_1+\cdots+x_{n-1}}{n-1}.
\] 
The expected compensation cost of the principal under this replacement strategy is:
\[
\left(n-1\right)\left[\tilde{x} c+ \left(1-\tilde{x}\right)p_n \frac{c}{p_n-p_{n-2}}\right]+ p_n\frac{c}{p_n-p_{n-1}}.
\]
Note that, this is equal to the expected compensation cost prior to modification. This implies that there exists a constrained optimal replacement strategy satisfying $x_1=\cdots=x_{n-1}$.

\noindent -- If $x_n \in (0,1)$, we first show that $\left(1-x_i\right) p_n \frac{w_i^x}{c}$ is a strictly convex function of $x_i$. Note that:
\begin{align*}
\frac{\partial \left(  \left(1-x_i\right) p_n \frac{w_i^x}{c}\right)}{\partial x_i} &= p_n\left( -\frac{w_i^x}{c}+\frac{1}{c}\left(1-x_i\right)\frac{\partial w_i^x}{\partial x_i} \right) \\
&=p_n\left(-\frac{1}{p_n-\zeta_i^x}+\frac{1-x_i}{\left(p_n-\zeta_i^x\right)^2}\frac{\partial \zeta_i^x}{\partial x_i}\right)\\
&=p_n\left(-\frac{1}{p_n-\zeta_i^x}+\frac{1}{\left(p_n-\zeta_i^x\right)^2}\frac{x_n\left(p_{n-1}-p_{n-2}\right)}{1-x_i}\right)\\
&=p_n\left(-\frac{1}{p_n-\zeta_i^x}+\frac{\zeta_i^x-p_{n-2}}{\left(p_n-\zeta_i^x\right)^2}\right)\\
&=p_n\left(\frac{2\zeta_i^x-p_n-p_{n-2}}{\left(p_n-\zeta_i^x\right)^2}\right).
\end{align*}
Taking the second order derivative, we obtain:
\begin{align*}
\frac{\partial^2 \left(  \left(1-x_i\right) p_n \frac{w_i^x}{c}\right)}{\partial x_i^2} &= p_n \, \frac{\partial \zeta_i^x}{\partial x_i} \, \frac{2\left(p_n-\zeta_i^x\right)^2+2\left(p_n-\zeta_i^x\right)\left(2\zeta_i^x-p_n-p_{n-2}\right)}{\left(p_n-\zeta_i^x\right)^4}\\
&=p_n \, \frac{\partial \zeta_i^x}{\partial x_i} \,\frac{2\left(p_n-\zeta_i^x\right)\left(\zeta_i^x-p_{n-2}\right)}{\left(p_n-\zeta_i^x\right)^4}.
\end{align*}
But, $x_n>0$ implies $\zeta_i^x-p_{n-2}>0$ and $\frac{\partial \zeta_i^x}{\partial x_i}>0$. Also, note that $p_n-\zeta_i^x>0$. Therefore, the second order derivative of $\left(1-x_i\right) p_n \frac{w_i^x}{c}$ with respect to $x_i$ is positive, implying that  $\left(1-x_i\right) p_n \frac{w_i^x}{c}$ is a strictly convex function of $x_i$.

Now, denoting
$$\Gamma\left(x_i\right) \equiv \frac{\left(1-x_i\right) p_n c}{ p_n-p_{n-2}\left(1-\frac{x_n}{1-x_i}\right)-p_{n-1} \frac{x_n}{1-x_i}},$$
we can write the expected compensation cost of the principal under the replacement strategy $\left(x_1, \cdots, x_n\right)$ as:
\begin{align*}
\sum_{k\neq i, j} \left[x_k c+\left(1-x_k\right) p_n w_k^x\right]
+x_i c+x_j c+\Gamma\left(x_i\right)+\Gamma\left(x_j\right).
\end{align*} 
Consider the following optimization problem:
\begin{align*}
\min_{\substack{{x_i'\geq 0}\\ {x_j'\geq 0}}} \quad x_i' c+x_j' c+\Gamma\left(x_i'\right)+\Gamma\left(x_j'\right)\quad 
\text{s.t. } &x_i'+x_j'=x_i+x_j.
\end{align*}
We can equivalently write this optimization problem as:
\begin{align*}
\min_{x_i'} \quad \Gamma\left(x_i'\right)+\Gamma\left(x_i+x_j-x_i'\right) \quad  \text{s.t. } 0 \leq x_i' \leq x_i+x_j.
\end{align*} 
Note that:
\[
\frac{\partial^2 \left(\Gamma\left(x_i'\right)+\Gamma\left(x_i+x_j-x_i'\right)\right)}{\partial x_i'^2}=\Gamma''\left(x_i'\right)+\Gamma''\left(x_i+x_j-x_i'\right)>0.
\] 
Thus, the first-order condition  
\[
\frac{\partial \left(\Gamma\left(x_i'\right)+\Gamma\left(x_i+x_j-x_i'\right)\right)}{\partial x_i'} =\Gamma'\left(x_i'\right)-\Gamma'\left(x_i+x_j-x_i'\right)=0
\]  
is uniquely satisfied by \( x_i' = \frac{x_i + x_j}{2} \), ensuring that the minimum is uniquely achieved. This implies that for all \( i, j \in \{1, \dots, n-1\} \), we must have \( x_i = x_j \) in the optimal replacement strategy. Consequently, it follows that \( x_1 = \cdots = x_{n-1} \).  
\end{proof}

\begin{claim}
The principal chooses to fully utilize AI capacity, setting $\bar{x}=1$, if the condition $p_{n-1}^2-p_n p_{n-2}\leq 0$ is satisfied. 
\label{useup_star}
\end{claim}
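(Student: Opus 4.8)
The plan is to mirror the argument for Proposition~\ref{prop: useup}: I would compute the marginal effect on the principal's expected compensation cost of raising the replacement probability of a single \emph{peripheral} worker, and show that the hypothesis $p_{n-1}^2 - p_n p_{n-2} \le 0$ forces this marginal effect to be nonpositive throughout the feasible region, so that any unused AI budget can always be profitably absorbed. The first step is to pin down the off-path success rates induced by the trigger strategy in the star. When a peripheral worker $i \in \{1,\ldots,n-1\}$ shirks (conditional on not being replaced himself), his sole monitor is the central worker, who detects the shirk and shirks in turn unless the central worker is the replaced agent; conditioning on $i$ not being replaced, the central worker is an AI with probability $x_n/(1-x_i)$, in which case the effort count is $n-1$, and otherwise the count is $n-2$. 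This yields
\[
\zeta_i^x = p_{n-2} + \frac{x_n}{1-x_i}\bigl(p_{n-1} - p_{n-2}\bigr), \qquad \zeta_n^x = p_{n-1},
\]
the former being exactly the expression already used in the proof of Claim~\ref{evenly}. The crucial structural observation is that $\zeta_j^x$ depends only on $x_j$ and $x_n$, so peripheral workers impose no cross-externalities on one another: raising a single peripheral $x_i$ leaves every other worker's incentive constraint, and hence compensation, untouched.

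Next I would differentiate $W_x = c\sum_{j}\bigl[x_j + (1-x_j)\,p_n/(p_n - \zeta_j^x)\bigr]$ with respect to a peripheral $x_i$. Because of the no-externality observation, only the $i$-th summand moves, and using $(1-x_i)\,\partial_{x_i}\zeta_i^x = \zeta_i^x - p_{n-2}$ (which reuses the derivative already computed inside Claim~\ref{evenly}), the direct and incentive terms collapse to
\[
\frac{1}{c}\frac{\partial W_x}{\partial x_i} = \frac{(\zeta_i^x)^2 - p_n p_{n-2}}{(p_n - \zeta_i^x)^2},
\]
the exact analogue of equation~\eqref{eq: derivative1}. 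I would then bound $\zeta_i^x \le p_{n-1}$: since $x_i + x_n \le \bar{x} \le 1$ we have $x_n/(1-x_i) \le 1$, so $\zeta_i^x$ never exceeds $p_{n-1}$. Under the hypothesis $p_{n-1}^2 \le p_n p_{n-2}$ this gives $(\zeta_i^x)^2 \le p_{n-1}^2 \le p_n p_{n-2}$, hence $\partial W_x/\partial x_i \le 0$.

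To close, suppose toward a contradiction that an optimal strategy underutilizes, $\bar{x} < 1$. Then $x_i + x_n < 1$ for every peripheral $i$, so $x_n/(1-x_i) < 1$ strictly, giving $\zeta_i^x < p_{n-1}$ and therefore $\partial W_x/\partial x_i < 0$. Since $n \ge 3$ guarantees at least one peripheral worker whose probability can be raised while respecting the budget, increasing some $x_i$ strictly lowers the cost, contradicting optimality; hence the optimum satisfies $\bar{x}=1$. The main obstacle is the first step—correctly bookkeeping $\zeta_i^x$ for the star information structure, i.e., getting the conditional replacement beliefs and the associated effort counts right—since once that expression is fixed, the derivative simplification to $((\zeta_i^x)^2 - p_n p_{n-2})/(p_n-\zeta_i^x)^2$ and the bound $\zeta_i^x \le p_{n-1}$ are routine and parallel the chain-network case. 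As an alternative organization, one may first invoke Claim~\ref{evenly} to restrict to symmetric peripheral strategies $x_1 = \cdots = x_{n-1}$ and vary the common value, which leads to the same conclusion.
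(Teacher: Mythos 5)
Your proposal is correct, and its core computation is exactly the one the paper uses: the derivative of the principal's cost with respect to a peripheral replacement probability collapses to $c\,\frac{(\zeta_i^x)^2 - p_n p_{n-2}}{(p_n - \zeta_i^x)^2}$, which is strictly negative under the hypothesis once $\zeta_i^x < p_{n-1}$, yielding the contradiction with underutilization. Where you differ is in the scaffolding: the paper first invokes Claim~\ref{evenly} to restrict attention to symmetric peripheral strategies $(x, \ldots, x, x_n)$ and then perturbs the common value $x$, whereas you perturb a single coordinate $x_i$ of an arbitrary (possibly asymmetric) strategy, justifying this by the observation that $\zeta_j^x$ depends only on $x_j$ and $x_n$, so peripheral probabilities impose no cross-externalities on one another. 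Your route is slightly more self-contained---the claim no longer depends on Claim~\ref{evenly}---and it establishes directly that \emph{every} underutilizing strategy, symmetric or not, admits a strict local improvement; the paper's symmetrization costs nothing in context because Claim~\ref{evenly} is needed anyway for the remainder of the proof of Proposition~\ref{prop: star_optimum}, which pins down the optimum once full utilization is established. Both arguments rest on the same two feasibility facts that you state explicitly: $x_i + x_n \le \bar{x} < 1$ gives the strict bound $\zeta_i^x < p_{n-1}$, and the budget slack makes the upward perturbation of $x_i$ feasible.
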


\begin{proof}[Proof of Claim~\ref{useup_star}] \hfill

\noindent Suppose, for contradiction, that the condition $p_{n-1}^2-p_n p_{n-2}\leq 0$ is satisfied, yet the principal does not fully utilize the AI capacity.  
From the previous claim, we know that for a given replacement probability $x_n$ of the central worker, the rest of the optimal replacement strategy can, without loss of generality, be restricted to strategies where the replacement probability is identical for all the peripheral workers, i.e., $x_i = x$ for each worker $i \leq  n-1$, whereby $x < \frac{1 - x_n}{n-1}$.  

Take such a replacement strategy $(x,x,\cdots, x, x_n)$ for some $x < \frac{1-x_n}{n-1}$. 
Ignoring the dependence on the specific values of \( x \) and \( x_n \), let \( w \) denote the corresponding wage of a peripheral worker and let \( \zeta \) represent a peripheral worker's belief about the project's success rate when deviating under the trigger strategy profile. These are given by:  
\[
\zeta = p_{n-2} + \frac{x_n}{1-x} (p_{n-1} - p_{n-2}), \quad  w =  \frac{c}{p_n - \zeta}.
\]
The principal's expected cost for a peripheral node is:  
\[
x c + (1 - x) p_n w.
\]
Substituting the expression for \( w \) and differentiating with respect to \( x \), we obtain:  
\begin{align*}
\frac{\partial \left( x c + (1 - x) p_n w \right)}{\partial x} 
&= c\left(  1 - \frac{p_n}{p_n - \zeta} + \frac{(1 - x) p_n}{(p_n - \zeta)^2} \frac{\partial \zeta}{\partial x} \right)\\
&= c\left( -\frac{\zeta}{p_n - \zeta} + \frac{p_n}{(p_n - \zeta)^2} \frac{x_n (p_{n-1} - p_{n-2})}{1 - x} \right)\\
&= c\left( -\frac{\zeta}{p_n - \zeta} + \frac{p_n (\zeta - p_{n-2})}{(p_n - \zeta)^2} \right)\\
&= c\left( \frac{\zeta^2 - p_n p_{n-2}}{(p_n - \zeta)^2} \right).
\end{align*}
Since \( \zeta < p_{n-1} \), the condition \( p_{n-1}^2 - p_n p_{n-2} \leq 0 \) ensures that the derivative above is negative.  
Thus, increasing \( x \) slightly---which is feasible since the capacity is not fully utilized---leads to a reduction in expected compensation cost.  
This contradicts optimality, implying that the principal must fully utilize the AI capacity. \end{proof}

Claims~\ref{evenly} and \ref{useup_star} imply that given \( x_n \), the optimal replacement strategy satisfies:  
\[
x_i = x = \frac{1 - x_n}{n-1}, \quad \forall i \leq n-1.
\]  
The principal's expected compensation cost for a given \( x_n \), denoted by \( W_{x_n} \) (with a slight abuse of notation), satisfies:  
\[
\frac{W_{x_n}}{c} =
1 + (n-1)(1-x) p_n \frac{1}{p_n - \zeta} + (1 - x_n) p_n \frac{1}{p_n - p_{n-1}}.
\]
Differentiating with respect to \( x_n \), we obtain:  
{\small \begin{align*}
\frac{\partial \frac{W_{x_n}}{c \, p_n}}{\partial x_n} &=   
-\left(n-1\right) \frac{1}{p_n - \zeta} \frac{\partial x}{\partial x_n}  
+ \left(n-1\right) (1 - x) \frac{1}{(p_n - \zeta)^2} \frac{\partial \zeta}{\partial x_n}  
- \frac{1}{p_n - p_{n-1}} \\
&\quad - (n-1) \frac{1}{p_n - \zeta} \frac{\partial x}{\partial x_n}  
+ (n-1) (1 - x) \frac{1}{(p_n - \zeta)^2}  
\left[ \frac{p_{n-1} - p_{n-2}}{1 - x}  
+ \frac{(p_{n-1} - p_{n-2}) x_n}{(1 - x)^2} \frac{\partial x}{\partial x_n} \right]  
- \frac{1}{p_n - p_{n-1}}.
\end{align*}}
After simplifications, and defining \( R \equiv   p_n - \zeta \), this reduces to:  
\[
\frac{\partial \frac{W_{x_n}}{c \, p_n}}{\partial x_n} = \frac{1}{R} + \frac{(n-1) (p_{n-1} - p_{n-2})}{R^2} - \frac{(p_{n-1} - p_{n-2}) x_n}{(1 - x) R^2} - \frac{1}{p_n - p_{n-1}}.
\]
Therefore, the sign of \( \frac{\partial \frac{W_{x_n}}{c \, p_n}}{\partial x_n} \) matches the sign of:  
\[
\left(p_n - p_{n-1}\right) (1 - x) R  
+ (n-1) (p_n - p_{n-1}) (p_{n-1} - p_{n-2}) (1 - x)  
- (p_{n-1} - p_{n-2}) (p_n - p_{n-1}) x_n  
- (1 - x) R^2,
\]
which in turn has the same sign as:  
\[
-\left(1 - x\right) R^2  
+ \left(p_n - p_{n-1}\right) (1 - x) R  
+ (n-2) (p_{n-1} - p_{n-2}) (p_n - p_{n-1}).
\]
Moreover, note that  
\[
1 - x = \frac{(n-2)(p_{n-1} - p_{n-2})}{R + (n-1)(p_{n-1} - p_{n-2}) - (p_n - p_{n-2})}.
\]  
Thus, the sign of the above expression matches the sign of:  
\begin{align*}
-(n-2)(p_{n-1} - p_{n-2}) R^2  
+ 2(n-2)(p_{n-1} - p_{n-2})(p_n - p_{n-1}) R \\
+ (n-2)(p_{n-1} - p_{n-2})(p_n - p_{n-1})  
\left[(n-1)(p_{n-1} - p_{n-2}) - (p_n - p_{n-2})\right].
\end{align*}  
We know that \( R \in [p_n - p_{n-1}, p_n - p_{n-2}] \) and  that the expression above attains its minimum at \( R = p_n - p_{n-2} \). Evaluating at \( R = p_n - p_{n-2} \), we get:  
\begin{align*}
-(n-2)(p_{n-1} - p_{n-2})(p_n - p_{n-2})^2  
+ 2(n-2)(p_{n-1} - p_{n-2})(p_n - p_{n-1})(p_n - p_{n-2}) \\
+ (n-2)(p_{n-1} - p_{n-2})(p_n - p_{n-1})  
\left[(n-1)(p_{n-1} - p_{n-2}) - (p_n - p_{n-2})\right].
\end{align*}  
Dividing this expression by \( (n-2)(p_{n-1} - p_{n-2}) \), we obtain:  
\[
(p_{n-1} - p_{n-2})\left[(n-2)(p_n - p_{n-1}) - (p_{n-1} - p_{n-2})\right],
\]  
which is positive.  

Therefore, the derivative of the expected cost with respect to \( x_n \) is positive.  
As a result, the optimal strategy requires \( x_n = 0 \). Furthermore, as shown in the proof of Claim~\ref{evenly}, all replacement strategies satisfying \( \sum_{i=1}^{n-1} x_i = 1 \) are optimal.  
\end{proof}

\section{Complementary Analysis} 
\label{sec: A_complementary}
This appendix section provides a complementary analysis to the main findings in the document. In what follows, we use \emph{O-ring} production function and, for clarity and conciseness, continue with the setting of three workers.

\paragraph{AI's impact on intra-team hierarchy of payoffs.} Proposition~\ref{prop: order_paoff} presents additional results on how the optimal AI replacement affects the payoff hierarchy, showing that the middle worker receives the highest payoffs considering both wages and the frequency and the use of AI.

\bigskip 
\begin{proposition}
\label{prop: order_paoff}
The optimal AI replacement reshapes the hierarchy of payoffs within the team: the middle worker receives the highest payoff, followed by the end-most worker, and then the front-most worker. Formally:  
$$\left(1-x_2^*\right)\left(p_3w^{x^*}_2-c\right)>\left(1-x_3^*\right)\left(p_3w^{x^*}_3-c\right)>\left(1-x_1^*\right)\left(p_3w^{x^*}_1-c\right).$$
\end{proposition}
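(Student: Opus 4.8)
The plan is to substitute the closed forms from Proposition~\ref{closedform} into the three payoff expressions and reduce the three-way comparison to elementary polynomial inequalities. Under the O-ring function with $n=3$ we have $p_3=\alpha^0=1$, so worker $i$'s payoff $(1-x_i^*)\bigl(p_3 w_i^{x^*}-c\bigr)$ becomes simply $(1-x_i^*)(w_i^{x^*}-c)$. Inserting $x_1^*=\frac{\sqrt{1+\alpha}-1}{\alpha}$, $x_2^*=0$, $x_3^*=1-x_1^*$ together with the wages $w_1^{x^*}=\frac{c}{1-\alpha^2}$, $w_2^{x^*}=\frac{c}{(1-\alpha)\sqrt{1+\alpha}}$, $w_3^{x^*}=\frac{c}{1-\alpha}$ makes each payoff an explicit (if awkward) function of $\alpha$.

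The crucial device is the substitution $s=\sqrt{1+\alpha}$, which rationalizes the square roots. As $\alpha$ runs over $(0,1)$, $s$ runs over $(1,\sqrt2)$, and $\alpha=s^2-1$, $1+\alpha=s^2$, $1-\alpha=2-s^2>0$. A short computation then shows that all three payoffs share a common strictly positive prefactor $K:=\dfrac{c\,(s-1)}{(2-s^2)\,s}$, namely
\begin{align*}
(1-x_1^*)(w_1^{x^*}-c)&=K\,(s^2-1),\\
(1-x_2^*)(w_2^{x^*}-c)&=K\,(s^2+s-1),\\
(1-x_3^*)(w_3^{x^*}-c)&=K\,s.
\end{align*}

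Because $K>0$ on $(1,\sqrt2)$, the entire proposition collapses to the pair of scalar inequalities $s^2+s-1>s>s^2-1$. The left inequality is equivalent to $s^2-1>0$, which is immediate since $s>1$; the right inequality is equivalent to $s^2-s-1<0$, which holds because the positive root of $s^2-s-1$ is the golden ratio $\tfrac{1+\sqrt5}{2}\approx1.618$, strictly above the upper endpoint $\sqrt2$. This delivers the claimed ordering $(1-x_2^*)(w_2^{x^*}-c)>(1-x_3^*)(w_3^{x^*}-c)>(1-x_1^*)(w_1^{x^*}-c)$.

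The main obstacle is entirely in the first two steps: getting the three payoffs into the common-prefactor form. Without the substitution $s=\sqrt{1+\alpha}$ the expressions mix $\sqrt{1+\alpha}$ additively and do not obviously share structure; the work is to verify the simplifications $1-x_1^*=\frac{s}{s+1}$, $1-x_3^*=x_1^*=\frac{1}{s+1}$ and the numerator factorization $1-(2-s^2)s=(s-1)(s^2+s-1)$, after which the comparison is trivial. As a byproduct, the same form shows each payoff is itself positive on $(1,\sqrt2)$—since $s^2+s-1$, $s$, and $s^2-1$ are all positive there—so I would also note in passing that participation is individually rational for every worker.
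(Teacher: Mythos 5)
Your proof is correct and follows essentially the same route as the paper's: both substitute the closed forms from Proposition~\ref{closedform}, change variables to $\sqrt{1+\alpha}$ (your $s$, the paper's $\beta$), and reduce to elementary polynomial inequalities on $(1,\sqrt{2})$, with the golden-ratio bound settling the second comparison. Your extraction of the common positive prefactor $K$ is a slightly cleaner packaging than the paper's pairwise-difference factorizations, but it is the same argument in substance.
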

\bigskip

\begin{proof}[\textbf{Proof of Proposition~\ref{prop: order_paoff}}]\hfill

\noindent Denoting $\beta\equiv \sqrt{1+\alpha}$, we get:
\begin{align*}
\frac{\left(1-x_1^*\right)\left(p_3w^{x^*}_1-c\right)}{c} &=   \frac{\left(\beta^2-1\right)^2}{\beta\left(\beta+1\right)\left(2-\beta^2\right)} \\
\frac{\left(1-x_2^*\right)\left(p_3w^{x^*}_2-c\right)}{c} &=\frac{1}{\beta\left(2-\beta^2\right)}-1 \\
\frac{\left(1-x_3^*\right)\left(p_3 w^{x^*}_3-c\right)}{c} &=\frac{\beta-1}{2-\beta^2}.
\end{align*} 
Therefore:
\[
\frac{\left(1-x_2^*\right)\left(p_3w^{x^*}_2-c\right)}{c}-\frac{\left(1-x_3^*\right)\left(p_3 w^{x^*}_3-c\right)}{c}=\frac{1}{\beta\left(2-\beta^2\right)}-1-\frac{\beta-1}{2-\beta^2}=\frac{\left(\beta+1\right)\left(\beta-1\right)^2}{\beta\left(2-\beta^2\right)}>0,
\] 
and
\[
\frac{\left(1-x_3^*\right)\left(p_3w^{x^*}_3-c\right)}{c}-\frac{\left(1-x_1^*\right)\left(p_3 w^{x^*}_1-c\right)}{c}=\frac{\beta-1}{2-\beta^2}-\frac{\left(\beta^2-1\right)^2}{\beta\left(\beta+1\right)\left(2-\beta^2\right)}=\frac{\left(\beta^2-1\right)\left(-\beta^2+\beta+1\right)}{\beta\left(\beta+1\right)\left(2-\beta^2\right)},
\] which is positive as $\beta\in\left(1, \sqrt{2}\right)$. This concludes the proof.
\end{proof}

\bigskip
\paragraph{AI's impact on payoffs.} Proposition \ref{prop: worker1_paoff} demonstrates that the front-most worker may experience a loss or gain in payoff, and provides the conditions for each outcome under the O-ring production function. Recall that, in the main text, we have already demonstrated that the optimal AI adoption always resulted in higher (lower) payoffs for the middle (end-most) worker. This proposition therefore completes the earlier payoff analysis. 

\bigskip
\begin{proposition}
\label{prop: worker1_paoff}
The front-most worker’s payoff declines following AI replacement,
\[
(1 - x_1^*) \left(p_3 w_1^{x^*} - c \right) < p_3 w_1^0 - c,
\]
if and only if effort complementarity is weak enough---that is, \( \alpha \geq \bar{\alpha} \) for some threshold \( \bar{\alpha} \in (0,1) \).
\end{proposition}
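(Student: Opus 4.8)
The plan is to evaluate both sides of the claimed inequality explicitly under the O-ring specification $p_k=\alpha^{3-k}$ (so $p_0=\alpha^3$, $p_1=\alpha^2$, $p_2=\alpha$, $p_3=1$) and reduce the comparison to a single-variable cubic condition. First I would record the two payoffs. Without AI the strategy is $x=(0,0,0)$, so by equation~\eqref{eq: zeta} we have $\zeta_1^0=p_0=\alpha^3$ and hence $w_1^0=c/(1-\alpha^3)$ by Proposition~\ref{prop: compensation}; the front-most worker's no-AI payoff is therefore $p_3w_1^0-c=c\,\alpha^3/(1-\alpha^3)$. With optimal AI, Proposition~\ref{closedform} gives $x_1^*=(\sqrt{1+\alpha}-1)/\alpha$ and $w_1^{x^*}=c/(1-\alpha^2)$, so the post-adoption payoff is $(1-x_1^*)\,(p_3 w_1^{x^*}-c)=(1-x_1^*)\,c\,\alpha^2/(1-\alpha^2)$. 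The desired statement is that the latter is strictly smaller than the former.

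The key simplifying move is the substitution $\beta\equiv\sqrt{1+\alpha}\in(1,\sqrt2)$, under which $1-x_1^*$ becomes rational: since $\alpha=\beta^2-1$, one computes $1-x_1^*=\beta/(\beta+1)$. Dividing the target inequality through by the common positive factor $c\,\alpha^2/(1-\alpha)$ and substituting, I would reduce
\[
(1-x_1^*)\frac{\alpha^2}{1-\alpha^2}<\frac{\alpha^3}{1-\alpha^3}
\]
to the form $\dfrac{1}{\beta(\beta+1)}<\dfrac{(\beta-1)(\beta+1)}{\beta^4-\beta^2+1}$, using $1-\alpha^3=(1-\alpha)(1+\alpha+\alpha^2)$ and the identity $1+\alpha+\alpha^2=\beta^4-\beta^2+1$. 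Clearing the two manifestly positive denominators and cancelling the common $\beta^4-\beta^2$ terms, the whole comparison collapses to the clean cubic inequality $\beta^3-\beta-1>0$.

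Finally I would analyze $g(\beta)\equiv\beta^3-\beta-1$ on $\beta\in(1,\sqrt2)$. Since $g'(\beta)=3\beta^2-1>0$ there, $g$ is strictly increasing; moreover $g(1)=-1<0$ while $g(\sqrt2)=\sqrt2-1>0$, so $g$ has a unique root $\beta^\ast\in(1,\sqrt2)$ (the plastic number, $\beta^\ast\approx1.3247$). Because $\alpha\mapsto\beta=\sqrt{1+\alpha}$ is a strictly increasing bijection of $(0,1)$ onto $(1,\sqrt2)$, the inequality $g(\beta)>0$ holds exactly when $\beta>\beta^\ast$, i.e.\ when $\alpha>\bar\alpha$ for the threshold $\bar\alpha\equiv(\beta^\ast)^2-1\in(0,1)$ (numerically $\bar\alpha\approx0.75$), with the two payoffs coinciding precisely at $\alpha=\bar\alpha$. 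This is exactly the ``weak complementarity'' regime, completing the equivalence.

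I expect the algebraic reduction in the second step to be the main obstacle: the factor $1-x_1^*$ carries a square root, and it is the $\beta$-substitution that both rationalizes it and makes the denominators on the two sides factor compatibly, so that everything cancels down to $\beta^3-\beta-1$. Getting the bookkeeping right---in particular verifying $1+\alpha+\alpha^2=\beta^4-\beta^2+1$ and confirming that the leading $\beta^4-\beta^2$ terms indeed cancel after cross-multiplication---is where the real work lies; once the cubic is isolated, the monotonicity argument is routine.
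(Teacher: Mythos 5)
Your proof is correct and takes essentially the same route as the paper's: the same closed-form payoffs, the same substitution $\beta=\sqrt{1+\alpha}$ with $1-x_1^*=\beta/(\beta+1)$, and the same reduction of the comparison to the cubic inequality $\beta^3-\beta-1>0$ on $(1,\sqrt{2})$. The only difference is cosmetic: you locate the threshold root by monotonicity and the intermediate value theorem (correctly, at the plastic number $\approx 1.3247$, i.e.\ $\bar\alpha\approx 0.755$), whereas the paper invokes Cardano's formula (and in fact reports the slightly inaccurate numerical value $1.318$).
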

\bigskip

\begin{proof}[\textbf{Proof of Proposition~\ref{prop: worker1_paoff}}]\hfill

\noindent Note that:
{\small
{\begin{align*}
\frac{p_3w^0_1-c}{c}- &\frac{\left(1-x_1^*\right)\left(p_3w^{x^*}_1-c\right)}{c}
=\frac{\left(\beta^2-1\right)^3}{\left(2-\beta^2\right)\left[\left(\beta^2-1\right)^2+\beta^2\right]}-\frac{\left(\beta^2-1\right)^2}{\beta\left(\beta+1\right)\left(2-\beta^2\right)}\\
&=\frac{\left(\beta^2-1\right)^2}{\beta\left(\beta+1\right)\left(2-\beta^2\right)\left[\left(\beta^2-1\right)^2+\beta^2\right]}\left(\beta\left(\beta+1\right)\left(\beta^2-1\right)-\left[\left(\beta^2-1\right)^2+\beta^2\right]\right).
\end{align*}}}
The second line is a positive value multiplied with $\beta^3-\beta-1$,
which is positive if and only if $\beta$ is larger than $\sqrt[3]{\frac{1}{2}+\sqrt{\frac{23}{108}}}+\sqrt[3]{\frac{1}{2}-\sqrt{\frac{23}{108}}}\approx 1.3247$ (by Cardano's formula).
\end{proof}

As is indicated in Proposition~\ref{prop: monotonic_paoff}, the optimal AI deployment on the one hand increases the wage of the front-most worker. On the other hand, he faces replacement. Proposition~\ref{prop: worker1_paoff} implies that the latter effect dominates when $\alpha$ is large (tasks are highly independent of each other).

\bigskip

\paragraph{AI's impact on intra-team inequality of payoffs.} 
Proposition~\ref{prop: payoff_inequality} explores how AI alters intra-team payoff inequality and shows that, similarly to wage inequality, the inequality of payoffs also declines with optimal AI adoption.

\bigskip 
\begin{proposition} \label{prop: payoff_inequality}
The intra-team payoff inequality decreases with optimal AI adoption, relative to the intra-team inequality without AI adoption. That is: 
\[
\left[\left(1-x_2^*\right)\left(p_3w^{x^*}_2-c\right)-\left(1-x_1^*\right)\left(p_3 w^{x^*}_1-c\right)\right]
<\left[\left(p_3w^0_3-c\right)-\left(p_3 w^0_1-c\right)\right].
\]
\end{proposition}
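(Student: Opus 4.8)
The plan is to reduce the inequality to a single-variable polynomial inequality, exploiting the closed forms already assembled. First I would observe that the two bracketed quantities are exactly the intra-team payoff gaps (largest minus smallest payoff) in the two regimes. After optimal AI adoption, Proposition~\ref{prop: order_paoff} identifies the middle worker as the highest earner and the front-most worker as the lowest, so the after-AI gap is $(1-x_2^*)(p_3 w_2^{x^*}-c)-(1-x_1^*)(p_3 w_1^{x^*}-c)$. Without AI every $x_i=0$ and worker $i$'s payoff is $p_3 w_i^0-c=c\,\alpha^{4-i}/(1-\alpha^{4-i})$, which is strictly increasing in the position $i$ because $t\mapsto t/(1-t)$ is increasing on $(0,1)$ and $\alpha^3<\alpha^2<\alpha$; hence the before-AI gap is the end-most minus the front-most payoff, $(p_3 w_3^0-c)-(p_3 w_1^0-c)$. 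This confirms that the displayed inequality is precisely the assertion that the payoff gap shrinks.

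Next I would substitute the closed forms. Writing $\beta\equiv\sqrt{1+\alpha}\in(1,\sqrt2)$, so that $\beta^2-1=\alpha$ and $2-\beta^2=1-\alpha$, the four per-$c$ payoffs are already recorded in the proofs of Propositions~\ref{prop: order_paoff} and \ref{prop: worker1_paoff}:
\begin{align*}
\frac{(1-x_2^*)(p_3 w_2^{x^*}-c)}{c}&=\frac{1}{\beta(2-\beta^2)}-1, &
\frac{(1-x_1^*)(p_3 w_1^{x^*}-c)}{c}&=\frac{(\beta^2-1)^2}{\beta(\beta+1)(2-\beta^2)},\\
\frac{p_3 w_3^0-c}{c}&=\frac{\beta^2-1}{2-\beta^2}, &
\frac{p_3 w_1^0-c}{c}&=\frac{(\beta^2-1)^3}{(2-\beta^2)\bigl[(\beta^2-1)^2+\beta^2\bigr]}.
\end{align*}
I would then form the difference (right-hand gap minus left-hand gap) and place it over the common denominator $\beta(\beta+1)(2-\beta^2)\bigl[(\beta^2-1)^2+\beta^2\bigr]$, every factor of which is strictly positive on $\beta\in(1,\sqrt2)$.

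The sign of the difference is therefore governed by its numerator, a polynomial in $\beta$, and the final step is to show this numerator is strictly positive on $(1,\sqrt2)$. I would pull out the common factor $(\beta^2-1)=\alpha$ (which vanishes only at $\beta=1$, i.e.\ $\alpha=0$, where both gaps are zero) and establish positivity of the remaining cofactor on the interval. The limiting behavior makes the direction transparent and guides the factoring: as $\alpha\to1$ both gaps blow up like $1/(1-\alpha)$ but the after-AI coefficient is strictly smaller, while as $\alpha\to0$ the after-AI gap is $\approx\alpha/2$ against a before-AI gap $\approx\alpha$.

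I expect the main obstacle to be precisely this last algebraic step: the term $(\beta^2-1)^2+\beta^2$ in the denominator of the before-AI front-most payoff raises the degree of the numerator, so clearing denominators yields a somewhat unwieldy polynomial whose strict positivity must be certified. The likely resolution is to factor out $(\beta^2-1)$ and reorganize the cofactor as a sum of manifestly positive terms on $(1,\sqrt2)$---mirroring how the proof of Proposition~\ref{prop: x2=0} repeatedly rewrote such expressions as squares and products of positive differences---rather than relying on mere numerical endpoint checks.
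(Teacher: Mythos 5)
Your proposal is correct and follows essentially the same route as the paper's proof: both substitute the closed-form payoffs in $\beta=\sqrt{1+\alpha}$, reduce the difference of the two gaps to a single rational function of $\beta$, and determine its sign on $(1,\sqrt{2})$ by factoring. The algebraic step you flagged as the main obstacle closes exactly as you predicted: over your common denominator the numerator factors as $\beta(\beta^2-1)\left[\beta^3-(\beta^2-1)^2\right]$, which is manifestly positive on $(1,\sqrt{2})$ since $\beta^3>1>(\beta^2-1)^2$, and this is equivalent (up to sign) to the paper's factorization $(\beta-1)\left\{(\beta-1)\left[(\beta-1)(\beta^2+\beta-1)-3\right]-1\right\}$ of the same quantity.
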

\bigskip
\begin{proof}[\textbf{Proof of Proposition~\ref{prop: payoff_inequality}}]\hfill

\noindent Note that:
\begin{align*}
\frac{1}{c} &\left[\left(1-x_2\right)\left(p_3w^x_2-c\right)-\left(1-x_1\right)\left(p_3 w^x_1-c\right)\right]
-\frac{1}{c}\left[\left(p_3w^0_3-c\right)-\left(p_3 w^0_1-c\right)\right]\\
&=\left[\frac{1}{\beta\left(2-\beta^2\right)}-1-\frac{\left(\beta^2-1\right)^2}{\beta\left(\beta+1\right)\left(2-\beta^2\right)}\right]-\left[\frac{1}{2-\beta^2}-\frac{1}{\left(2-\beta^2\right)\left[\left(\beta^2-1\right)^2+\beta^2\right]}\right]\\
&=\frac{\beta-1}{2-\beta^2}-\frac{\beta^2\left(\beta^2-1\right)}{\left(2-\beta^2\right)\left[\left(\beta^2-1\right)^2+\beta^2\right]}\\
&=\frac{\beta-1}{\left(2-\beta^2\right)\left[\left(\beta^2-1\right)^2+\beta^2\right]}\left\{\left(\beta-1\right)\left[\left(\beta-1\right)\left(\beta^2+\beta-1\right)-3\right]-1\right\}
\end{align*} which is negative as $\beta\in\left(1, \sqrt{2}\right)$.
This completes the proof.
\end{proof}


\end{document}